\newif\iffullversion
\newif\ifdraft 
\newcommand{\old}[1]{{\color{RawSienna}\sout{#1}}}
\newcommand{\todo}[1]{{\color{CornflowerBlue}{\textsf{\textbf{TODO:} #1}}}}
\newcommand{\task}[2]{{\color{CornflowerBlue}{\textsf{\textbf{TODO #1:} #2}}}}
\newcommand{\jk}[1]{{\color{RoyalPurple}{\textsf{\textbf{jk:} #1}}}}
\newcommand{\ps}[1]{{\color{BurntOrange}{\textsf{\textbf{ps:} #1}}}}
\newcommand{\tg}[1]{{\color{red}{\textsf{\textbf{tg:} #1}}}}
\newcommand{\old}[1]{}
\newcommand{\todo}[1]{}
\newcommand{\task}[2]{}
\newcommand{\jk}[1]{}
\newcommand{\ps}[1]{}
\newcommand{\tg}[1]{}
\newcommand{\mathcmd}[1]{\ensuremath{#1}\xspace} 
\newcommand{\sdash}{\mbox{-}}
\newtheorem{exper}[theorem]{Experiment}
\newcommand{\algfont}[1]{\mathtt{#1}}  
\newcommand{\varfont}[1]{\mathit{#1}}    
\newcommand{\notionfont}[1]{\mathrm{#1}} 
\newcommand{\setfont}[1]{{\mathcal{#1}}} 
\newcommand{\orafont}[1]{{\mathsf{#1}}} 
\newcommand{\gamefont}[1]{{\mathsf{#1}}} 
\newcommand{\distributionfont}[1]{\mathcal{#1}} 
\newcommand{\gmfont}[1]{{\mathsf{#1}}}
\newcommand{\mssg}[1][]{\mathcmd{\if!#1! \varfont{m}\else \varfont{m}_{#1}\fi}} 
\newcommand{\altmssg}[1][]{\mathcmd{\if!#1! \varfont{m'}\else \varfont{m'}_{#1}\fi}} 
\newcommand{\ptxt}[1][]{\mathcmd{\if!#1! \varfont{m}\else \varfont{m}_{#1}\fi}} 
\newcommand{\ctxt}[1][]{\mathcmd{\if!#1! \varfont{c}\else \varfont{c}_{#1}\fi}} 
\newcommand{\altctxt}[1][]{\mathcmd{\if!#1! \varfont{c'}\else \varfont{c'}_{#1}\fi}} 
\newcommand{\rnd}[1][]{\mathcmd{\if!#1! \varfont{r}\else \varfont{r}_{#1}\fi}} 
\newcommand{\pk}{\ensuremath{\mathsf{pk}}\xspace} 
\newcommand{\sk}{\ensuremath{\mathsf{sk}}\xspace} 
\newcommand{\key}[1][]{\mathcmd{\if!#1! \mathsf{k}\else \mathsf{k}_{#1}\fi}} 
\renewcommand{\vec}[2][]{\mathcmd{\if!#1! \mathbf{#2}\else \mathbf{#2}_{#1}\fi}} 
\newcommand{\mat}[2][]{\mathcmd{\if!#1! \mathbf{#2}\else \mathbf{#2}_{#1}\fi}} 
\newcommand{\poly}[2][]{\mathcmd{\if!#1! \mathsf{#2}\else \mathsf{#2}_{#1}\fi}} 
\newcommand{\tTab}[1][]{\mathcmd{\if!#1! \varfont{t} \else \varfont{t}_{#1}\fi}} 
\newcommand{\query}[1][]{\mathcmd{\if!#1! \varfont{q} \else \varfont{q}_{#1}\fi}}
\let\oldket\ket
\let\oldbra\bra
\renewcommand{\ket}[1]{\mathcmd{\oldket{#1}}}
\renewcommand{\bra}[1]{\mathcmd{\oldbra{#1}}}
\newcommand{\func}{\mathcmd{\mathcal{F}}}
\newcommand{\roInp}[1][]{\mathcmd{\if!#1! \varfont{x}\else \varfont{x}_{#1}\fi}} 
\newcommand{\roOut}[1][]{\mathcmd{\if!#1! \varfont{y}\else \varfont{y}_{#1}\fi}} 
\newcommand{\AltroOut}[1][]{\mathcmd{\if!#1! \varfont{\bar{y}}\else \varfont{\bar{y}}_{#1}\fi}} 
\newcommand{\decisionalProblemChallenge}[1][]{\mathcmd{\if !#1!\varfont{c} \else \varfont{c}_{#1}\fi}} 
\renewcommand{\phi}{\mathcmd{\varphi}}
\renewcommand{\epsilon}{\mathcmd{\varepsilon}}
\renewcommand{\secpar}{\mathcmd{\lambda}}
\renewcommand{\secparam}{\mathcmd{1^\lambda}}
\newcommand{\win}{\mathcmd{\varfont{win}}}
\newcommand{\rej}{\mathcmd{\varfont{rej}}}
\newcommand{\norm}[2][]{\mathcmd{\if!#1! \left\lVert#2\right\rVert \else \left\lVert#2\right\rVert_{#1}\fi}} 
\newcommand{\tensor}{\otimes} 
\newcommand{\xor}{\oplus} 
\renewcommand{\gets}{\mathcmd{\coloneqq}}
\newcommand{\getsr}{\mathcmd{\leftarrow}} 
\newcommand{\from}{\mathcmd{\leftarrow}}
\newcommand{\fromunif}{\mathcmd{\raisebox{-1pt}{\ensuremath{\,\xleftarrow{\raisebox{-1pt}{$\scriptscriptstyle\$$}}\,}}}} 
\newcommand{\ceil}[1]{\mathcmd{\left\lceil #1 \right\rceil}}
\newcommand{\ketbra}[1]{\mathcmd{\ket{#1}\!\!\bra{#1}}}
\newcommand\NN{\mathbb{N}} 
\newcommand{\mssgsp}{\mathcmd{\setfont{M}}} 
\newcommand{\ctxtsp}{\mathcmd{\setfont{C}}} 
\newcommand{\bit}[1][]{\mathcmd{\if!#1!\{0,1\}\else\{0,1\}^{#1}\fi}} 
\newcommand{\distribution}[2][]{\mathcmd{\if!#1! \distributionfont{D}_{#2}\else \distributionfont{D}^{#1}_{#2} \fi}}
\newcommand{\keysp}{\mathcmd{\setfont{K}}}
\newcommand{\msgsp}{\mathcmd{\setfont{M}}}
\newcommand{\ctxsp}{\mathcmd{\setfont{C}}}
\newcommand{\rndsp}{\mathcmd{\setfont{R}}}
\newcommand{\sksp}{\mathcmd{\setfont{S}}}
\newcommand{\pksp}{\mathcmd{\setfont{P}}}
\newcommand{\states}{\mathcmd{\mathfrak D}}
\newcommand{\hilbert}{\mathcmd{\mathfrak H}}
\newcommand{\Eve}[1][]{\mathcmd{\if!#1!\mathcal{A}\else\mathcal{A}_{#1}\fi}}
\newcommand{\AltEve}[1][]{\mathcmd{\if!#1!\overline{\mathcal{A}}\else\overline{\mathcal{A}}_{#1}\fi}}
\newcommand{\EveB}[1][]{\mathcmd{\if!#1!\mathcal{B}\else\mathcal{B}_{#1}\fi}} 
\newcommand{\EveC}[1][]{\mathcmd{\if!#1!\mathcal{C}\else\mathcal{C}_{#1}\fi}} 
\newcommand{\EveR}[1][]{\mathcmd{\if!#1!\mathcal{R}\else\mathcal{R}_{#1}\fi}} 
\newcommand{\kgen}[1][]{\mathcmd{\if!#1! \algfont{KGen}\else \algfont{KGen}^{#1}\fi}} 
\newcommand{\enc}[1][]{\mathcmd{\if!#1! \algfont{Enc}\else \algfont{Enc}^{#1}\fi}} 
\newcommand{\dec}[1][]{\mathcmd{\if!#1! \algfont{Dec}\else \algfont{Dec}^{#1}\fi}} 
\newcommand{\rec}[1][]{\mathcmd{\if!#1! \algfont{Rec}\else \algfont{Rec}^{#1}\fi}} 
\newcommand{\Enc}{\mathcmd{\algfont{Enc}}} 
\newcommand{\Dec}{\mathcmd{\algfont{Dec}}} 
\newcommand{\KGen}{\mathcmd{\algfont{KGen}}} 
\newcommand{\Kgen}{\mathcmd{\algfont{KGen}}} 
\newcommand{\Rec}{\mathcmd{\algfont{Rec}}} 
\newcommand{\PRF}{\mathcmd{\algfont{F}}}
\newcommand{\TDF}{\mathcmd{\algfont{F}}}
\newcommand{\TDFinv}{\mathcmd{\algfont{F}^{-1}}}
\newcommand{\EveM}[1][]{\mathcmd{\if!#1!\mathcal{M}\else\mathcal{M}_{#1}\fi}}
\newcommand{\AltEveM}[1][]{\mathcmd{\if!#1!\overline{\mathcal{M}}\else\overline{\mathcal{M}}_{#1}\fi}}
\newcommand{\EveD}[1][]{\mathcmd{\if!#1!\mathcal{D}\else\mathcal{D}_{#1}\fi}}
\newcommand{\AltEveD}[1][]{\mathcmd{\if!#1!\overline{\mathcal{D}}\else\overline{\mathcal{D}}_{#1}\fi}}
\newcommand{\Altkgen}[1][]{\mathcmd{\if!#1! \overline{\algfont{KGen}}\else \overline{\algfont{KGen}}^{#1}\fi}} 
\newcommand{\Altenc}[1][]{\mathcmd{\if!#1! \overline{\algfont{Enc}}\else \overline{\algfont{Enc}}^{#1}\fi}} 
\newcommand{\adver}{\mathcmd{\mathcal{A}}} 
\newcommand{\chal}{\mathcmd{\mathcal{C}}} 
\newcommand{\game}[1][]{\mathcmd{\if!#1! \gmfont{G}\else \gmfont{G}_{#1}\fi}} 
\newcommand{\gamebox}[1][]{\setlength{\fboxsep}{0.5mm}\fbox{\mathcmd{\if!#1! \gmfont{G}\else \gmfont{G}_{#1}\fi}}} 
\newcommand{\hyb}[1][]{\mathcmd{\if!#1! \gmfont{H}\else \gmfont{H}_{#1}\fi}} 
\newcommand{\hybbox}[1][]{\mathcmd{\if!#1! \setlength{\fboxsep}{0.1mm}\fbox{\mathcmd{\gmfont{H}}}\else \mathcmd{\setlength{\fboxsep}{0.1mm}\fbox{\mathcmd{\gmfont{H}}}_{#1}}\fi}} 
\newcommand{\qINDqCPA}{\mathcmd{\notionfont{qIND{\sdash}qCPA}}} 
\newcommand{\piqINDqCPA}{\mathcmd{\notionfont{\pi{\sdash}qIND{\sdash}qCPA}}} 
\newcommand{\QIND}{\mathcmd{\notionfont{QIND}}} 
\newcommand{\INDqCPA}{\mathcmd{\notionfont{IND{\sdash}qCPA}}} 
\newcommand{\fqINDCPA}{\mathcmd{\notionfont{fqIND{\sdash}CPA}}} 
\newcommand{\qIND}{\mathcmd{\notionfont{qIND}}} 
\newcommand{\qCPA}{\mathcmd{\notionfont{qCPA}}} 
\newcommand{\CPA}{\mathcmd{\notionfont{CPA}}}
\newcommand{\CCA}{\mathcmd{\notionfont{CCA}}}
\newcommand{\CCAtwo}{\mathcmd{\notionfont{CCA2}}}
\newcommand{\qINDqCCAone}{\mathcmd{\notionfont{qIND{\sdash}qCCA1}}} 
\newcommand{\qINDqCCAtwo}{\mathcmd{\notionfont{qIND{\sdash}qCCA2}}} 
\newcommand{\Adv}[3][]{\mathcmd{\if!#1! \mathbf{Adv}^{#2}(#3) \else \mathbf{Adv}_{#1}^{#2}(#3) \fi}} 
\newcommand{\gameAdv}[2]{\mathcmd{\mathbf{Adv}\left(#1,#2\right)}} 
\newcommand{\INDCPA}{\mathcmd{\notionfont{IND{\sdash}CPA}}} 
\newcommand{\QPRP}{\mathcmd{\notionfont{QPRP}}} 
\newcommand{\succProb}[1][]{\mathcmd{\if!#1! \notionfont{\epsilon} \else \notionfont{\epsilon}_{#1}\fi}}
\newcommand{\negl}{\mathcmd{\notionfont{negl}}}
\newcommand{\liftableGame}[1][]{\mathcmd{\if!#1! \gmfont{\widetilde{G}}\else \gmfont{\widetilde{G}}_{#1}\fi}} 
\newcommand{\cGame}[1][]{\mathcmd{\if!#1! \gmfont{\bar{\game}}\else \gmfont{\bar{\game}}_{#1}\fi}} 
\newcommand{\pqGame}[1][]{\mathcmd{\if!#1! \gmfont{\widetilde{\game}}\else \gmfont{\widetilde{\game}}_{#1}\fi}} 
\newcommand{\cGamebox}[1][]{\mathcmd{\if!#1! \gmfont{\setlength{\fboxsep}{0.5mm}\fbox{\cGame}}\else \gmfont{\setlength{\fboxsep}{0.5mm}\fbox{\cGame[#1]}}\fi}} 
\newcommand{\pqGamebox}[1][]{\mathcmd{\if!#1! \gmfont{\setlength{\fboxsep}{0.5mm}\fbox{\pqGame}}\else \gmfont{\setlength{\fboxsep}{0.5mm}\fbox{\pqGame[#1]}}\fi}} 
\newcommand{\hashOracle}[1][]{\mathcmd{\if!#1! \orafont{H}\else \orafont{H}_{#1}\fi}} 
\newcommand{\randomOracle}{\mathcmd{\orafont{O}}} 
\newcommand{\x}{\enspace}
\newcommand{\pcoracle}{\highlightkeyword{Oracle}}
\newcommand{\pcas}{\highlightkeyword{as}}
\newcommand{\scaleFactorGraphics}{1} 
\title{Quantum Indistinguishability\\for Public Key Encryption}
\author{Tommaso Gagliardoni\inst{1} \and Juliane Krämer\inst{2} \and Patrick Struck\inst{2}}
\institute{
	Kudelski Security, Switzerland\\
	\email{firstname.lastname@kudelskisecurity.com}
	\and Technische Universität Darmstadt, Germany\\
	\email{firstname@qpc.tu-darmstadt.de}
}
\begin{document}
	
\maketitle
\noindent

\begin{center}
\today
\end{center}

\begin{abstract}
	In this work we study the quantum security of public key encryption schemes (PKE).
	Boneh and Zhandry (CRYPTO'13) initiated this research area for PKE and symmetric key encryption (SKE), albeit restricted to a classical indistinguishability phase.
	Gagliardoni et al. (CRYPTO'16) advanced the study of quantum security by giving, for SKE, the first definition with a quantum indistinguishability phase.
	For PKE, on the other hand, no notion of quantum security with a quantum indistinguishability phase exists.
	
	Our main result is a novel quantum security notion (\qINDqCPA) for PKE with a quantum indistinguishability phase, which closes the aforementioned gap.
	We show a distinguishing attack against code-based schemes and against LWE-based schemes with certain parameters.
	We also show that the canonical hybrid PKE-SKE encryption construction is \qINDqCPA-secure, even if the underlying PKE scheme by itself is not.
	Finally, we classify quantum-resistant PKE schemes based on the applicability of our security notion.
	
	Our core idea follows the approach of Gagliardoni et al. by using so-called \mbox{type-2} operators for encrypting the challenge message.
	At first glance, \mbox{type-2} operators appear unnatural for PKE, as the canonical way of building them requires both the secret and the public key.
	However, we identify a class of PKE schemes - which we call \emph{recoverable} - and show that for this class \mbox{type-2} operators require merely the public key.
	Moreover, recoverable schemes allow to realise \mbox{type-2} operators even if they suffer from decryption failures, which in general thwarts the reversibility mandated by \mbox{type-2} operators.
	Our work reveals that many real-world quantum-resistant PKE schemes, including most NIST PQC candidates and the canonical hybrid construction, are indeed recoverable.
\end{abstract}

\keywords{quantum security, post-quantum cryptography, quantum indistinguishability, superposition attacks, Q2, QS2, NIST, quantum-resistant, qIND-qCPA, type-2 operators}

\newpage

\section{Introduction}

The discovery of Shor's~\cite{FOCS:Shor94} and Grover's~\cite{STOC:Grover96} quantum algorithms had a significant 
impact on cryptographic research.
Shor's algorithm in particular has the potential to completely break most of the public key cryptosystems used nowadays.
This led to the development of quantum-resistant cryptography,\footnote{This type of cryptography is often called ``post-quantum cryptography''~\cite{bernsteinpost}.}
that is, cryptography that can run on non-quantum computers but should withstand attackers equipped with quantum computing power. 
In recent years the research efforts on quantum-resistant cryptography accelerated significantly due to the standardisation process initiated by the NIST~\cite{NISTstandardization}.

Modern cryptography is based on the paradigm of {\em provable security}, which is itself given in terms of a security notion, an adversarial model, and a security proof. A widely used framework for defining security notions is the so-called {\em game-based security}, which is presented as a game between two or more parties.\footnote{Other frameworks exist, such as simulation-based, but as a first approximation game-based security notions are very convenient for their intuitivity and simplicity.}
In the case of encryption schemes these parties are: 
a challenger, representing the user of the scheme, and an adversary, representing an attacker against the scheme.
Any meaningful model for quantum-resistant schemes should entail that the adversary has quantum computing power.
Based on this, we can differentiate between different models depending on the computing power of the challenger. 
In the literature there are mainly two of these models that are taken into account. 
In the first, the challenger remains fully classical, implying that any communication between adversary and challenger is also classical (including oracles provided by the challenger to the adversary), while the adversary retains local quantum computing power. 
This is the model most often considered in quantum-resistant cryptography, and it is also called QS1~\cite{DBLP:phd/dnb/Gagliardoni17} or Q1~\cite{ToSC:KLLN16}. 
In the second case, the challenger also has quantum computing power, which enables quantum communication between challenger and adversary. 
This stronger model is sometimes called ``superposition-attack security''~\cite{ICITS:DFNS13}, QS2~\cite{DBLP:phd/dnb/Gagliardoni17}, or Q2~\cite{ToSC:KLLN16}.

Boneh and Zhandry~\cite{C:BonZha13} initiated the study of QS2 security for cryptographic primitives.
For signature schemes, they give a security definition that allows the adversary to query the signing oracle on a superposition of messages.
For public and symmetric key encryption (PKE and SKE) schemes, on the other hand, they prove that simply allowing the adversary to query a superposition of messages as challenge in a ``natural'' way gives an unachievable security notion (\fqINDCPA).
This is due to entanglement between the plaintext register and the ciphertext register.
They show how to exploit this entanglement to break this security notion irrespectively of the used encryption scheme.
To resolve this, they propose another security notion (\INDqCPA) which allows the adversary superposition queries in the CPA phase while the challenge messages in the IND phase are restricted to be classical.  This notion coincides with the traditional QS1 security notion for PKE (as the adversary can simulate the encryption in superposition using his local computing power and the public key), while for SKE, this yields a notion of QS2 security -  although the restriction to a classical challenge in this case is clearly a shortcoming.

Gagliardoni et al.~\cite{C:GagHulSch16} overcame this shortcoming in the symmetric key case by showing how to model a quantum challenge query, while keeping the resulting security notion (\qINDqCPA) still achievable, yet stronger than \INDqCPA.
At the heart of their idea lies the use of so-called {\em type-2 operators\footnote{Also called \emph{minimal oracles} in~\cite{kashefi2002comparison}.}} rather than so-called type-1 operators when encrypting the challenge messages of the adversary.
Type-1 operators are the ``canonical'' way of implementing a classical function \func on a quantum superposition of input,  by mapping  the state $\ket{x,y}$ to $\ket{x,y\xor\func(x)}$, thereby ensuring reversibility for any function \func (reversibility being necessary when defining non-measurement quantum operations).
An important property of type-1 operators is that they create entanglement between the input and output registers.
This is exactly the entanglement which Boneh and Zhandry exploit to show that \fqINDCPA is unachievable.
In contrast to these, type-2 operators work directly on the input register, i.e., they map the state $\ket{x}$ to $\ket{\func(x)}$.
Only reversible functions, for instance permutations, can be implemented as type-2 operators, while it is impossible to compute, say, an arbitrary one-way function through a type-2 operator.
Gagliardoni et al. observe that SKE schemes act as permutations between the plaintext space and the ciphertext space, which allows to implement the encryption algorithm as a type-2 operator. 
This, in turn, allows to build a solid framework for QS2 security in the case of SKE.

In~\cite{C:GagHulSch16} the authors speculate that their techniques could be extended to the public key case (PKE) as well. 
However, defining type-2 operators for PKE schemes is much more involved than for SKE schemes.
First, to achieve \INDCPA security, PKE schemes are inherently randomised and the randomness is usually erased in the 
process of decryption.
Second, many constructions for quantum-resistant PKE schemes, in particular lattice-based and code-based schemes, suffer from a small probability of decryption failures, i.e., ciphertexts which do not decrypt correctly.
Given the above, at first glance it is unclear whether type-2 operators for PKE schemes are possible at all, as these 
two properties seem to thwart the mandatory reversibility. 
Hence, QS2 security for the public key case remained an open problem so far.

\subsection{Our Contribution}

We present a novel QS2 security notion\footnote{See Appendix~\ref{app:concurrentWork} for independent and concurrent work.} for PKE, provide both achievability results and separation to the QS1 security notion
for many real-world schemes, 
and give a general classification of PKE with respect to our security notion.

Our core focus is to extend the results from~\cite{C:GagHulSch16} to the public key scenario. 
We first formalise the theory of type-2 encryption operators for PKE. 
For perfectly correct schemes (i.e., schemes which do not suffer from the possibility of decryption failures) we define 
the type-2 operator to preserve a randomness register in input and output. 
Even if such approach might look strange at first glance, we show that this is the most natural way of defining type-2 
operators for PKE schemes. 
As a next step, we identify a class of PKE schemes (which we call {\em recoverable}) where decryption failures can 
always be avoided given knowledge of the randomness used during encryption, regardless of the actual failure probability 
of the decryption algorithm. 
We observe that most real-world partially correct PKE schemes (including many quantum-resistant NIST candidates) are 
actually of this type.
Then, for schemes that are perfectly correct or recoverable, we show how to efficiently construct the type-2 encryption 
operator.
Moreover, we show that for recoverable schemes, this can be done by knowledge of the public key only!
This implies, perhaps surprisingly, that the adversary can implement efficiently this type-2 operator already in the QS1 
model.
Such observation marks a substantial difference from the symmetric key case, where the need for type-2 operators is 
dictated by necessity in order to cover exotic attack models.

Using the theory of type-2 operators developed so far, we give a novel QS2 security notion for PKE, that we call \emph{quantum ciphertext indistinguishability under quantum chosen plaintext attack} (\qINDqCPA). 
For a new security notion to be meaningful, two properties are required.
First, it has to be achievable, and, second, it has to differ from existing security notions.

We analyse several real-world PKE schemes in respect to our new \qINDqCPA security notion.
We show that the canonical LWE-based PKE scheme~\cite{STOC:Regev05} can be attacked, at least for certain parameters.
The attack is similar to the ``Hadamard distinguisher'' given in~\cite{C:GagHulSch16}.
Moving on to code-based schemes, we observe that some constructions encrypt the message using a one-time pad operation, which again allows to exploit the distinguishing attack.
As an example we show that the code-based scheme ROLLO-II~\cite{NISTPQC-R2:ROLLO19} is not \qINDqCPA secure.

However, in practice most real-world PKE schemes (including the NIST submissions) are used as Key Encapsulation Mechanisms (KEM) in combination with an SKE scheme, yielding a hybrid PKE-SKE construction. 
Looking at such canonical hybrid construction then, we show that its \qINDqCPA security 
mostly depends on the underlying SKE scheme, while the PKE scheme only needs to be secure in the QS1 sense.
Hence, even the code-based PKE scheme ROLLO-II, which as a stand-alone PKE scheme is not \qINDqCPA-secure, can be used to achieve \qINDqCPA security if combined with a \qINDqCPA-secure SKE scheme via the hybrid construction, which is the default way of using it in practice.

We additionally discuss the difficulty of defining type-2 operators (and the related QS2 security notion) for arbitrary schemes that are neither perfectly correct nor recoverable.
For this, we study the problem of their general classification and we identify a class of schemes, that we call {\em isometric}, that allow to overcome such difficulty.
Furthermore, we provide constructions and separation results.

\subsection{The Motivation for QS2 Security} \label{sec:MotivationQS2}

Defining security against quantum adversaries with superposition access to certain oracles requires some motivation. 
Sometimes, the resulting security notion is already implicitly captured by the corresponding QS1 scenario (for example in the case of {\em quantum random oracles}~\cite{AC:BDFLSZ11}).
In other cases, for instance those considered in~\cite{DBLP:conf/isita/KuwakadoM12,EC:AlaRus17,C:KLLN16}, it might look like an artificial extension of the theory.

However, QS2 security extends quantum properties to types of attack scenarios not covered in QS1, and at the same time ``bridges'' certain security notions from the classical realm to schemes which are meant to run natively on a quantum computer.
Some of the reasons why QS2 notions are important to consider are explained in detail in~\cite{DBLP:phd/dnb/Gagliardoni17}.
They basically boil down to five points.
\begin{enumerate}
	\item To ensure that quantum-resistant classical schemes retain their security even if executed on a quantum computer, possibly in complex environments or protocols where composition should be taken into account.
	\item To fix security proofs, where the sole QS1 security of certain underlying building blocks is not enough to ensure that the whole proof goes through.
	An example is the need of QS2-secure pseudorandom functions (QPRF) in order to simulate a quantum random oracle~\cite{FOCS:Zhandry12}, which is a QS1 concept.
	\item To ensure the security of quantum protocols (i.e., meant to run natively on a quantum computer and protect quantum data) when using classical algorithms as building blocks.
	For example,~\cite{DBLP:phd/dnb/Gagliardoni17} shows how it is possible to build a secure symmetric quantum encryption scheme (falling into the so-called QS3 domain) by using a \qINDqCPA symmetric 
	classical encryption scheme (QS2), but not necessarily a simple quantum-resistant (QS1) one.
	\item To consider cases of {\em code obfuscation}; for example creating a quantum-resistant PKE scheme by hardcoding a  
	symmetric key into an obfuscated encryption program (a technique known as {\em whiteboxing}~\cite{SAC:CEJV02}), which 
	is then distributed as a public key.
	\item To cover cases of {\em exotic quantum attacks}.
	These include, for instance, {\em quantum fault injection attacks}, where a classical device is subject to 
	controlled and artificial physical conditions that induce full or partial quantum behaviour of its hardware 
	(``tricking'' a classical device into being quantum, like in the ``frozen smart-card attack'' presented 
	in~\cite{C:GagHulSch16}); 
	or cases where a quantum computer is used to run a classical algorithm, but an adversary manages to intercept the intermediate result of the computation {\em before} the final measurement meant to produce a classical outcome.
\end{enumerate}
In our specific case, our results follow from the core use of type-2 operators.
This kind of quantum operations is poorly studied in the quantum computing realm, and might therefore look artificial 
for cryptographic use. 
In the present work we make an effort to expand in a detailed way the formalisation of such operators which, we stress, are only given for functions that are inherently invertible. 
It is a well-known fact (see for example~\cite{C:GagHulSch16}) that implementing these operators for encryption schemes usually requires knowledge of the secret key. 
We do not consider this to be a limitation because in the quantum setting, an honest challenger equipped with the secret key could be allowed to generate particular ciphertext-encoding states which would be hard to compute for an external party: it is therefore necessary to cover this distinction in the preparation of ciphertext states, and type-2 operators do just that. 
Moreover, as we show in the present work, for many natural PKE schemes, type-2 encryption operators can actually be efficiently implemented by knowledge of the public key only.

\subsection{Related Work}

The study of quantum security under adversarial queries in superposition can be traced back to works such as~\cite{ICITS:DFNS13, DBLP:journals/siamcomp/Watrous09, AC:BDFLSZ11}, which explore different settings where this additional adversarial power has an impact on security. 
However, for the case of signatures and encryption schemes, the first framework going beyond the traditional QS1 paradigm was given in~\cite{C:BonZha13}.
This paradigm was further extended in~\cite{C:GagHulSch16} for symmetric key encryption schemes, and in~\cite{EC:AMRS20} for MACs/signatures.

Regarding examples of exotic quantum attacks previously mentioned: it is currently not known whether any of these are feasible at all, but as noted in~\cite{DBLP:phd/dnb/Gagliardoni17}: (1) if they are feasible, in some cases they do not even require a fully fledged quantum computer (for example, in the attack from~\cite{C:GagHulSch16} it would be only necessary to produce and detect a Hadamard superposition of messages); and (2) it is already known in the literature that these attacks can be devastating. 
For example, related-key attacks~\cite{DBLP:journals/ipl/RottelerS15}, and superposition attacks against Even-Mansour~\cite{DBLP:conf/isita/KuwakadoM12}, Feistel networks~\cite{RSA:IHMSI19,DBLP:conf/isit/KuwakadoM10}, block ciphers~\cite{PQCRYPTO:ATTU16,EC:AlaRus17}, and HMAC constructions~\cite{C:KLLN16}.

Qualitatively different, but technically very connected to the QS2 setting is the {\em fully quantum setting}, or QS3 in short. 
This security domain encompasses security notions and constructions for schemes which are natively run on quantum hardware. 
In the case of QS3 encryption, these are schemes which are meant to protect quantum, rather than classical data. 
It turns out that many of the challenges in this area are shared with the QS2 case. 
In the computational security setting, the first security notions have been provided in~\cite{C:BroJef15} for the CPA case, and in~\cite{ICITS:ABFGSJ16} for the CCA1 and semantic security case. 
These results have been further extended to the CCA2 setting in~\cite{EC:AlaGagMaj18} for the symmetric case, and in~\cite{DBLP:journals/iacr/AlagicGM18} for the public key case.

In concurrent and independent work, Chevalier et al.~\cite{eprint:ChevalierEV20} propose alternative QS2 security notions for encryption schemes. Their and our notion are incomparable, as also claimed in a recent work by Carstens et al.~\cite{eprint:CETU20}. We discuss the differences in more detail in Appendix~\ref{app:concurrentWork}.

\subsection{Organization of the Paper}

Section~\ref{sec:background} gives the required background for this work.
In Section~\ref{sec:QuantumSecurityForPKE} we study type-2 operators for PKE schemes, 
define {\em recoverable} schemes, 
and give our new quantum security notion for those.
Positive and negative results for real-world PKE schemes are presented in Section~\ref{sec:SecurityAnalysis}.
Finally, we refine the classification of PKE schemes in terms of QS2 security in Section~\ref{sec:classif} and conclude with open questions in Section~\ref{sec:conclusions}.

\section{Preliminaries} \label{sec:background}

In the following, we use ``classical'' as meaning ``non-quantum''. 
By {\em algorithm} or {\em procedure} we mean a uniform family of circuits (classical or quantum) of depth and width polynomial in the index of the family. 
We call such index a {\em security parameter}, and we denote it by \secpar (or \secparam if written in unary notation). 
We implicitly assume that all algorithms take \secparam as a first input, so we will often omit this. 
If a classical algorithm $A$ is deterministic, we denote its output $y$ on input $x$ as $y \gets A(x)$, while if it is randomised we use $y \getsr A(x)$; 
when derandomising an algorithm we look at the deterministic algorithm obtained when considering explicitly the internal randomness $r$ as an additional auxiliary input, and we write $y \gets A(x;r)$. 
We will also use $x \getsr \mathcal{D}$ to denote that an element $x$ is sampled from a distribution $\mathcal{D}$; or we will write $x \fromunif X$ if $x$ is sampled uniformly at random from a set $X$. 
We will call {\em negligible} a function that grows more slowly than any inverse polynomial, and {\em overwhelming} a function which is $1$ minus a negligible function.
Finally, $a \| b$ denotes concatenation of $a$ and $b$.

\subsection{Quantum Notation}

We assume familiarity with the topic of quantum computing, but recall here the basic required notation.
For an in-depth discussion we refer to~\cite{Nielsen:2011:QCQ:1972505}.

A quantum system, identified by a letter $A$, is represented by a complex Hilbert space, which we denote by $\hilbert_{A}$.
If $A$ is clear from the context, we write \hilbert rather than $\hilbert_{A}$. 
Pure states in a Hilbert space \hilbert are representatives of equivalence classes of elements of \hilbert of norm $1$. 
Mixed states, on the other hand, are a more general representation of quantum states that takes {\em entanglement} with external systems into account; 
they are elements of the density matrix operator space over \hilbert, that is, Hermitian positive semi-definite linear operators of trace $1$, denoted as $\states(\hilbert)$. 
We use the ket notation for pure states, e.g., \ket{\phi}, while mixed states will be denoted by lowercase Greek letter, e.g., $\rho$.
Operations on pure states from $A$ to $B$ are performed by applying a unitary operator $U \colon \hilbert_A \rightarrow \hilbert_B$ to the state, 
while the more general case of operations on mixed states is described by superoperators of the form $U \colon \states(\hilbert_A) \rightarrow \states(\hilbert_B)$

The canonical way to compute a classical function $\func \colon \setfont{X} \to \setfont{Y}$ on a superposition of possible inputs $\sum_{x \in \setfont{X}} \alpha_x \ket{x}$ is through the so-called {\em type-1 operator for \func} described by:
\begin{align*}
	U^{(1)}_{\func} \colon \sum_{x,y} \alpha_{x,y} \ket{x,y} \mapsto \sum_{x,y} \alpha_{x,y} \ket{x,y\xor\func(x)}\,.
\end{align*}
This can always be implemented efficiently whenever \func is efficient~\cite{Nielsen:2011:QCQ:1972505}. 
By linearity, it is sufficient to specify just the behaviour on the basis elements, i.e.:
\begin{align*}
	U^{(1)}_{\func} \colon \ket{x,y} \mapsto \ket{x,y\xor\func(x)}\,.
\end{align*}
If \func is invertible, then there is another non-equivalent possible way to compute \func in superposition. 
This is done through the so-called {\em type-2 operators}, which are defined as the unitary:
\begin{align*}
	U^{(2)}_{\func} \colon \ket{x} \mapsto \ket{\func(x)}\,.
\end{align*}
See Fig.~\ref{fig:type1type2operators} for an illustration of these different operators.
Kashefi et al.~\cite{kashefi2002comparison} first introduced type-2 operators using the term \emph{minimal oracles} instead.
They show that these operators are strictly stronger by giving a problem which can be solved exponentially faster with type-2 operators than with type-1 operators.
They also observe that the adjoint of the type-2 operator corresponds to the type-2 operator of the inverse function $\func^{-1}$, which is (usually) not the case for type-1 operators.
Besides that, type-2 operators have been used by Gagliardoni et al.~\cite{C:GagHulSch16} to define quantum security for secret key encryption schemes.

\begin{figure}[htbp]
	\centering
	\scalebox{\scaleFactorGraphics}{
		\begin{tikzpicture}
			
			\node (MIDDLE) [text width=2em,align=center] {};
			
			\node (Utype1)[text width=2em,align=center,fill=gray!25,minimum height=2.0cm,minimum width=1cm,draw] [left of=MIDDLE, node distance = 3.5cm] {\large $U_{\func}^{(1)}$};
			\node (Utype2)[text width=2em,align=center,fill=gray!25,minimum height=2.0cm,minimum width=1cm,draw] [right of=MIDDLE, node distance = 3.5cm] {\large $U_{\func}^{(2)}$};
			\node (IN1) [text width=1em,align=center] [left of=Utype1, yshift=0.5cm, node distance=1.5cm] {\large \ket{x}};
			\node (IN2) [text width=1em,align=center] [left of=Utype1, yshift=-0.5cm, node distance=1.5cm] {\large \ket{y}};
			\node (OUT1) [text width=1em,align=center] [right of=Utype1, yshift=0.5cm, node distance=1.5cm] {\large \ket{x}};
			\node (OUT2) [text width=1em,align=center] [right of=Utype1, yshift=-0.5cm, node distance=1.5cm] {\large \ket{y\xor\func(x)}};
			
			\draw[] ($(IN1.east) + (0.1,0.0)$) -- ($(Utype1.west) + (0.0,0.5)$);
			\draw[] ($(IN2.east) + (0.1,0.0)$) -- ($(Utype1.west) + (0.0,-0.5)$);
			
			\draw[] ($(Utype1.east) + (0.0,0.5)$) -- ($(OUT1.west) + (0.0,0.0)$);
			\draw[] ($(Utype1.east) + (0.0,-0.5)$) -- ($(OUT2.west) + (0.0,0.0)$);
			
			\node (IN3) [text width=1em,align=center] [left of=Utype2, yshift=0.0cm, node distance=1.5cm] {\large \ket{x}};
			\node (OUT3) [text width=1em,align=center] [right of=Utype2, yshift=0.0cm, node distance=1.5cm] {\large \ket{\func(x)}};
			
			\draw[] ($(IN3.east) + (0.1,0.0)$) -- ($(Utype2.west) + (0.0,0.0)$);
			
			\draw[] ($(Utype2.east) + (0.0,0.0)$) -- ($(OUT3.west) + (0.0,0.0)$);
			
		\end{tikzpicture}
	}
	\caption{Type-1 operator (left) and type-2 operator (right) for a function \func.}
	\label{fig:type1type2operators}
\end{figure}

\subsection{Public Key Encryption}

In this section we give the formal definition for public key encryption schemes and the correctness of such schemes.

\begin{definition}\label{def:PKEscheme}
	A {\em public key encryption (PKE) scheme}  is a tuple $(\kgen,\enc,\dec)$ of three efficient algorithms such that:
	\begin{itemize}
		\item $\kgen \colon \NN \times \rndsp \rightarrow \pksp \times \sksp$ is the key generation algorithm which takes a security parameter \secpar and a randomness \rnd as input,  and returns a keypair consisting of a public key \pk and a secret key \sk. 
		If clear from the context, we will denote it by $(\pk,\sk) \getsr \KGen$.
		
		\item $\enc \colon \pksp \times \mssgsp \times \rndsp \rightarrow \ctxtsp$ is the encryption algorithm which takes a public key \pk, a message \mssg, and a randomness \rnd as input, and returns a ciphertext \ctxt. It will be usually denoted by $\ctxt \from \Enc_\pk(\mssg)$ or $\ctxt \gets \Enc_\pk(\mssg;\rnd)$.
		
		\item $\dec \colon \sksp \times \ctxtsp \rightarrow \mssgsp$ is the (deterministic) decryption algorithm\footnote{For simplicity here we only consider decryption with {\em implicit rejection}, that is, such the output is a random value whenever the input is not a well-formed ciphertext for the particular \sk.
			The extension to {\em explicit rejection} decryption can be done for example by adding a {\em flag bit} that marks the output as $\bot$ whenever decryption fails.}
		which takes as input a secret key \sk and a ciphertext \ctxt, and returns a message \mssg. It will be usually denoted by $\mssg \gets \Dec_\sk(\ctxt)$.
	\end{itemize}
	By \pksp, \sksp, \mssgsp, \ctxtsp, and \rndsp, we denote the public key space, secret key space, message space, ciphertext space, and randomness space, respectively.
\end{definition}
We assume w.l.o.g. that the randomness space for key generation and encryption are identical. 
Below we define two notions of correctness for PKE schemes. 
\begin{definition}[Perfectly Correct PKE]\label{def:perfectlyCorrectPKE}
	A PKE scheme $\Sigma = (\kgen,\enc,\dec)$ is \emph{perfectly correct} if for any $(\pk,\sk) \getsr \KGen$, $\mssg \in \mssgsp$, and $\rnd \in \rndsp$, 
	it holds that
	\begin{align*}
		\dec_\sk(\enc_\pk(\mssg;\rnd)) = \mssg\,.
	\end{align*}
\end{definition}
\begin{definition}[$(1-\alpha)$-Correct PKE~\cite{EC:DwoNaoRei04}]\label{def:partiallyCorrectPKE}
	A {\em $(1-\alpha)$-correct PKE scheme}, or {\em PKE with decryption error $\alpha$}, is a PKE scheme $\Sigma=(\kgen,\enc,\dec)$ such that, for any $\mssg \in \mssgsp$:
	\begin{align*}
		\underset{{\underset{r \getsr \rndsp}{(\pk,\sk) \getsr \KGen}}}{\Pr} \left[ \Dec_\sk(\Enc_\pk(\mssg;\rnd)) \neq \mssg \right] \leq \alpha\,.
	\end{align*}
\end{definition}

\section{Quantum Indistinguishability for PKE Schemes} \label{sec:QuantumSecurityForPKE}

In this section we extend the QS2 security notion of \qINDqCPA introduced for SKE schemes in~\cite{C:GagHulSch16} to the public key case. 
This is much more complex than the symmetric case, for the following reasons:
\begin{enumerate}
	\item PKE schemes are randomised to achieve ciphertext indistinguishability and adversaries must not learn the used randomness, even in the one-time case;
	\item when derandomising the encryption procedure and considering the randomness as additional input, there might be collisions (different randomnesses leading to the same ciphertext), hence ensuring reversibility of type-2 operators is not straightforward;
	\item many existing schemes, such as lattice- or code-based NIST candidates, suffer from a small decryption failure probability.
\end{enumerate}
In particular, as we will see, there are two main consequences: (1) the inverse of type-2 encryption operators is not generally a type-2 decryption operator; and (2), most interestingly, many type-2 encryption operators can be built efficiently by using only knowledge of the public key. 
The last point is crucial: it shows that in the PKE case, type-2 encryption operators are much more natural than in the SKE case, and for certain schemes they are actually already covered in the usual notion of QS1 ``post-quantum'' security. 
We will also show that some of these schemes are very relevant, such as the LWE-based scheme used as a blueprint for many NIST submissions.
In this section we will do the following:
\begin{enumerate}
	\item First, we revisit and define formally type-1 operators for PKE, and we show the difference between type-1 encryption and decryption (cf. Section~\ref{sec:type1PKE}).
	
	\item We define type-2 encryption operators for perfectly correct PKE schemes, and we show that they can be efficiently implemented with knowledge of secret and public key (cf. Section~\ref{sec:type2PKE}).
	
	\item We define what we call {\em recoverable} PKE schemes, i.e., schemes that admit an efficient procedure to recover the message given randomness, ciphertext and public key, without the secret key. We show that for such schemes the `canonical' type-2 encryption operator can be built by only using the public key, {\em even} if the scheme is not perfectly correct (cf. Section~\ref{sec:recoverablePKE}).
	
	\item We define the \qINDqCPA security notion for any PKE scheme where one can efficiently build the type-2 encryption operator. This includes in particular perfectly correct and recoverable schemes (cf. Section~\ref{sec:qINDqCPA}).
	
	\item Finally, we discuss how to extend these results to the {\em chosen ciphertext attack} (\CCA) scenario (cf. Section~\ref{sec:qINDqCCA}).
\end{enumerate}
%

\subsection{Type-1 Operators for PKE}\label{sec:type1PKE}

Recall that, for an arbitrary function $f: \setfont{X} \to \setfont{Y}$, the corresponding \mbox{type-1} operator is the ``canonical'' way of computing $f$ on a superposition of input through the unitary operator $U_f: \hilbert_\setfont{X} \otimes \hilbert_\setfont{Y} \to \hilbert_\setfont{X} \otimes \hilbert_\setfont{Y}$ defined by: $U_f : \ket{x,y} \mapsto \ket{x, y \xor f(x)}$. 
Realising $U_f$ is always efficient if $f$ is efficiently computable.

Traditionally, when looking at (deterministic) encryption schemes, the type-1 operator for encryption has been defined as:
\begin{equation*}
	U_{\enc} : \ket{\mssg,y} \mapsto \ket{\mssg, y \xor \enc(\mssg)}\,.
\end{equation*}
This is the approach used in, e.g.,~\cite{C:BonZha13} and~\cite{C:GagHulSch16}.
However, in our case of PKE schemes (which are generally randomised), we have to consider that encryption can be  
performed locally by the quantum adversary, who therefore has full control not only on the randomness used for 
encryption (i.e., it is necessary to explicitly derandomise the encryption procedure\footnote{This is implicitly 
	considered in \cite{C:BonZha13} and \cite{C:GagHulSch16}, but not explicitly formalised.}), but also on the public key 
used (i.e., it is theoretically possible to compute encryption for a superposition of different public keys).
Therefore, the most general definition of a type-1 encryption operator would look like:
\begin{equation*}
	U_{\enc} : \ket{\pk,\rnd,\mssg,y} \mapsto \ket{\pk,\rnd,\mssg,y\xor\enc_\pk(\mssg;\rnd)}\,.
\end{equation*}
We argue that this is indeed the most general and correct way to model the local 
computational power of a quantum adversary, even in the QS1 case.
However, for ease of exposition (and also because it would go beyond the traditional meaning of ciphertext indistinguishability), in the present work we do not consider superpositions of public keys, as we assume that the (classical) public key to be attacked is given to the adversary at the beginning of the security game.
Hence, we drop the register containing the public key and consider it a parameter of the unitary.
This leads us to the following definition.
\begin{definition}[Type-1 Encryption for PKE]\label{def:type1enc}
	Let $\Sigma=(\Kgen,\Enc,\Dec)$ be a PKE scheme and let $(\pk,\sk) \getsr \KGen$. The {\em type-1 encryption operator} for $\pk$ is the unitary defined by:
	\begin{equation*}
		U^{(1)}_{\Enc_\pk} : \ket{\rnd,\mssg,y} \mapsto \ket{\rnd,\mssg,y\xor\enc_\pk(\mssg;\rnd)}\,.
	\end{equation*}
\end{definition}
Usually the public key is clear from the context, so we will omit that dependency and just write $U^{(1)}_{\Enc}$.
As usual, when there is no ambiguity, we identify the corresponding superoperator acting on mixed states rather than pure states with the same symbol $U^{(1)}_\Enc : \states(\hilbert_\rndsp \otimes \hilbert_\msgsp \otimes \hilbert_\ctxsp) \to \states(\hilbert_\rndsp \otimes \hilbert_\msgsp \otimes \hilbert_\ctxsp)$.
By letting the randomness be an input, Definition~\ref{def:type1enc} allows to encrypt using a superposition of randomnesses, 
which is fine in the case of the adversary generating ciphertexts himself. 
Note also that the case of different randomnesses for each message in superposition can be realised by using a single classical randomness and a QS2-secure pseudorandom function~\cite{FOCS:Zhandry12}, as shown by Boneh and Zhandry~\cite{C:BonZha13}. 

The type-1 decryption operator is defined analogously to~Definition~\ref{def:type1enc}, but with an important difference:
the decryption algorithm does not take the randomness used for encryption as input.
\begin{definition}[Type-1 Decryption for PKE]\label{def:type1dec}
	Let $\Sigma=(\Kgen,\Enc,\Dec)$ be a PKE scheme and let $(\pk,\sk) \getsr \KGen$. The {\em type-1 decryption operator} for $\sk$ is the unitary defined by:
	\begin{equation*}
		U^{(1)}_{\Dec_\sk} : \ket{\ctxt,z} \mapsto \ket{\ctxt,z\xor\Dec_\sk(\ctxt)}\,.
	\end{equation*}
\end{definition}
As usual we denote it by $U^{(1)}_{\Dec}$, leaving the secret key understood, and when there is no ambiguity with the same symbol we denote the superoperator acting on mixed states also by $U^{(1)}_\Dec : \states(\hilbert_\ctxsp \otimes \hilbert_\msgsp) \to \states(\hilbert_\ctxsp \otimes \hilbert_\msgsp)$.

Notice the difference in type-1 encryption and decryption acting on different spaces: this is not surprising, as it is already known that the adjoint of a type-1 encryption operator is not, generally, a type-1 decryption operator.
Notice also how both operators are efficiently computable, because \Enc and \Dec are efficient algorithms.
The difference is that realising $U^{(1)}_{\Dec}$ requires knowledge of the secret key \sk, while for realising $U^{(1)}_\Enc$ it is sufficient to know the public key \pk.

\subsection{Type-2 Encryption for PKE}\label{sec:type2PKE}

When defining type-2 encryption for PKE schemes, we have to remember that defining these operators only makes sense for functions which are reversible.
If a PKE scheme is perfectly correct, then encryption is always reversible if seen as a function of the {\em plaintext},  
but not necessarily as a function of the {\em randomness}.
That is because it might be the case that for a given message different 
randomnesses lead to the same ciphertext.
In the context of security games, message and randomness have very different roles anyway, as one is generally chosen by the adversary, while the other is generally chosen by the challenger.

Ultimately, what we want is to define a type of unitary which generalises the case of arbitrary permutations from 
plaintext to ciphertext spaces (the same approach as considered in~\cite{C:GagHulSch16}).
In order to avoid the issue raised by randomness collisions, we will keep the auxiliary randomness register both in input and output of the circuit.
This ensures reversibility of the operator, because given a certain ciphertext and a certain randomness, there is only one possible plaintext which was mapped to that ciphertext (otherwise we would have a decryption failure, and for now we are only considering perfectly correct schemes).
So, if the sizes of the plaintext space and the ciphertext space coincide, i.e., there is no ciphertext expansion and 
thus $\dim(\hilbert_\msgsp) = \dim(\hilbert_\ctxsp)$, then we can define the corresponding type-2 encryption operator 
as:
\begin{equation*}
	U_\Enc^{(2)}: \ket{\rnd,\mssg} \mapsto \ket{\rnd,\Enc_\pk(\mssg;\rnd)}\,, 
\end{equation*}
where, as usual, the public key \pk is implicit in the definition of $U_\Enc^{(2)}$, i.e., it is a parameter of the unitary operator in question.

In the more general case of message expansion, i.e., $\dim(\hilbert_\msgsp) < \dim(\hilbert_\ctxsp)$, we use the same approach as in~\cite{C:GagHulSch16}: we introduce an auxiliary register in a complementary space $\hilbert_{\ctxsp - \msgsp}$\footnote{We denote by $\hilbert_{\ctxsp - \msgsp}$ a Hilbert space such that $\hilbert_\msgsp \otimes \hilbert_{\ctxsp - \msgsp}$ is isomorphic to $\hilbert_\ctxsp$. Notice that the opposite case, i.e., $\dim(\hilbert_\msgsp) > \dim(\hilbert_\ctxsp)$, cannot happen because it would lead to collisions on the ciphertexts and thus introduce decryption failures. Also notice that, as in~\cite{C:GagHulSch16}, the case of adversarially-controlled ancilla qubits is left as an open problem.} that ensures reversibility of the operation, and which is initialised to $\ket{0\ldots 0}$ during an honest execution to yield a correct encryption. So we consider a family of unitary superoperators of the form:
\begin{align*}
	U: \states(\hilbert_\rndsp \otimes \hilbert_\msgsp \otimes \hilbert_{\ctxsp - \msgsp}) &\to \states(\hilbert_\rndsp \otimes \hilbert_\ctxsp) , \text{ such that } \\
	U: \ketbra{\rnd, \mssg, y} &\mapsto \psi\,, 
\end{align*}
and we define a type-2 encryption operator to be any arbitrary, efficiently computable (purified) representative of the above family such that:
\begin{equation}\label{eqn:type2canon}
	U_\Enc^{(2)}: \ket{\rnd, \mssg, 0\ldots 0} \mapsto \ket{\rnd, \Enc_\pk(\mssg; \rnd)}.
\end{equation}
The choice of the particular representative is irrelevant in our exposition as long as it respects~\eqref{eqn:type2canon} above and it is efficiently computable.
However, as already discussed in~\cite{C:GagHulSch16}, it might be the case that realising this operator requires knowledge of the secret key, not only of the public key.
This finally leads to the following.
\begin{definition}[Type-2 Encryption for PKE]\label{def:type2enc}
	Let $\Sigma =(\kgen,\enc,\dec)$ be a perfectly correct PKE scheme, and let $(\pk,\sk) \getsr \KGen$. 
	A 
	{\em type-2 encryption operator} for $\Sigma$ is an efficiently computable unitary in the family defined by:
	\begin{equation*}
		U_{(\Enc,\pk,\sk)}^{(2)}: \ket{\rnd, \mssg, 0\ldots 0} \mapsto \ket{\rnd, \Enc_\pk(\mssg; \rnd)}\,.
	\end{equation*}
	It will be usually denoted by just $U_\Enc^{(2)}$ when there is no ambiguity.
\end{definition}
It is always possible to find and efficiently sample and implement 
at least one 
valid representative for $U_\Enc^{(2)}$ given the secret and public keys, by using a conversion circuit of type-1 encryption and decryption operators in a similar way as presented in~\cite{C:GagHulSch16}. 
We call this the {\em canonical} type-2 operator.
\begin{theorem}[Efficient Realisation of Type-2 Encryption]\label{thm:type2enceff}
	Let $\Sigma$ be a perfectly correct PKE scheme with $\Sigma = (\kgen,\enc,\dec)$, and let $(\pk,\sk) \getsr \KGen$. Then there exists an efficient procedure which takes \pk and \sk as input, and outputs a polynomial-size quantum circuit realising $U_\Enc^{(2)}$.
\end{theorem}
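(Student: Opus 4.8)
The plan is to \emph{compile} the desired operator out of the two type-1 operators from Definitions~\ref{def:type1enc} and~\ref{def:type1dec}, exploiting that $U^{(1)}_\Enc$ can be realised from \pk alone while $U^{(1)}_\Dec$ can be realised from \sk, and that both are efficient whenever \Enc and \Dec are efficient (the standard fact that an efficiently computable classical function admits a polynomial-size type-1 operator~\cite{Nielsen:2011:QCQ:1972505}). The procedure, on input $(\pk,\sk)$, then simply outputs the circuit described below; its size is the sum of the sizes of the two type-1 circuits plus a swap, hence polynomial, which is what the theorem asks for.

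Concretely, I would work with the input registers $\hilbert_\rndsp \otimes \hilbert_\msgsp \otimes \hilbert_{\ctxsp-\msgsp}$, viewing the last two together as the ciphertext register $C$ (recall $\hilbert_\msgsp \otimes \hilbert_{\ctxsp-\msgsp} \cong \hilbert_\ctxsp$), and adjoin a fresh ciphertext-sized scratch register $D$ initialised to $\ket{0\ldots0}$. The circuit has three steps. First, apply $U^{(1)}_\Enc$ reading the randomness from $\hilbert_\rndsp$ and the message from $\hilbert_\msgsp$ and writing the ciphertext into $D$, giving $\ket{\rnd,\mssg,0\ldots0}\ket{\Enc_\pk(\mssg;\rnd)}_D$. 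Second, apply $U^{(1)}_\Dec$ reading the ciphertext from $D$ and XORing the decryption into the message register; by perfect correctness $\Dec_\sk(\Enc_\pk(\mssg;\rnd))=\mssg$, so $\hilbert_\msgsp$ is cleared to $\ket{0}$ and the state becomes $\ket{\rnd,0,0\ldots0}\ket{\Enc_\pk(\mssg;\rnd)}_D$. Third, swap $C$ and $D$, which moves the ciphertext into $C$ and returns $D$ to $\ket{0\ldots0}$, yielding $\ket{\rnd,\Enc_\pk(\mssg;\rnd)}\ket{0\ldots0}_D$. Discarding the now-clean scratch register gives exactly the map of Definition~\ref{def:type2enc}. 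Since the construction is a composition of unitaries it is automatically unitary, and checking it on basis states as above suffices by linearity.

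The step I expect to be the crux is the uncomputation of the message register in the second stage: it zeroes $\hilbert_\msgsp$ \emph{cleanly}, with no residual entanglement, precisely because perfect correctness guarantees that decrypting the freshly computed ciphertext returns the original message for \emph{every} \mssg and \rnd. This is exactly the property that breaks down for schemes with decryption errors, and it is what will force the separate treatment of \emph{recoverable} schemes in Section~\ref{sec:recoverablePKE}. A secondary point to handle carefully is that on \emph{non-canonical} inputs, where the complementary register $\hilbert_{\ctxsp-\msgsp}$ is not $\ket{0\ldots0}$, the scratch register $D$ need not return to $\ket{0\ldots0}$ after the swap; this is harmless, since Definition~\ref{def:type2enc} constrains only the behaviour on the canonical inputs $\ket{\rnd,\mssg,0\ldots0}$ and explicitly permits any efficiently computable purified representative of the family, of which our circuit (keeping $D$ as the purifying ancilla) is one.
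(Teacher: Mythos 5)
Your proof is correct and takes essentially the same approach as the paper: the paper's proof consists of exactly this circuit (Fig.~\ref{fig:type2enccanonical}) --- compute the ciphertext into an ancilla via $U^{(1)}_\Enc$, uncompute the message register via $U^{(1)}_\Dec$ using perfect correctness, and reroute the wires --- differing from yours only in trivial register bookkeeping (the paper leaves the ciphertext in the ancilla and the zeroed message register as the residual, rather than swapping at the end and discarding a clean scratch register). Your explicit state-tracing, the identification of perfect correctness as what makes the uncomputation clean, and the remark about non-canonical inputs being covered by the ``any purified representative'' clause of Definition~\ref{def:type2enc} are all details the paper leaves implicit in its figure-only proof.
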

\begin{proof}
	The explicit circuit of the procedure is shown in Fig.~\ref{fig:type2enccanonical}.
	It uses type-1 encryption and decryption operators as underlying components, which are both efficient with knowledge of the respective keys.
	\begin{figure}[htbp]
		\centering
		\scalebox{\scaleFactorGraphics}{
			\begin{tikzpicture}
				
				\node (MIDDLE) [text width=1em,align=center] {};
				
				\node (Uenc)[text width=2em,align=center,fill=gray!25,minimum height=3.0cm,minimum width=2cm,draw] [left of=MIDDLE, node distance = 2.0cm] {\large $U_{\Enc}^{(1)}$};
				\node (Urec)[text width=2em,align=center,fill=gray!25,minimum height=2.0cm,minimum width=2cm,yshift =-0.5cm,draw] [right of=MIDDLE, node distance = 2.0cm] {\large $U_{\Dec}^{(1)}$};

				\node (IN1) [text width=1em,align=center] [left of=Uenc, yshift=1.0cm, node distance=3.5cm] {\large \ket{\rnd}};
				\node (IN2) [text width=1em,align=center] [left of=Uenc, yshift=0.0cm, node distance=3.5cm] {\large \ket{\mssg}};
				\node (IN3) [text width=1em,align=center] [left of=Uenc, yshift=-1.0cm, node distance=2.0cm] {\large \ket{0}};
				
				\draw[] ($(IN1.east) + (0.1,0.0)$) -- ($(Uenc.west) + (0.0,1.0)$);
				\draw[] ($(IN2.east) + (0.2,0.0)$) -- ($(Uenc.west) + (0.0,0.0)$);
				\draw[] ($(IN3.east) + (0.1,0.0)$) -- ($(Uenc.west) + (0.0,-1.0)$);
				
				\node (OUT1) [text width=1em,align=center] [right of=Urec, yshift=1.5cm, node distance=3.5cm] {\large \ket{\rnd}};
				\node (OUT2) [text width=1em,align=center] [right of=Urec, yshift=0.5cm, node distance=3.5cm] {\large \ket{\ctxt}};
				\node (OUT3) [text width=1em,align=center] [right of=Urec, yshift=-0.5cm, node distance=2.0cm] {\large \ket{0}};
				
				\draw[] ($(Urec.east) + (0.0,0.5)$) -- ($(OUT2.west) + (0.0,0.0)$);
				\draw[] ($(Urec.east) + (0.0,-0.5)$) -- ($(OUT3.west) + (0.0,0.0)$);
				
				\node (midOUT1) [text width=1em,align=center] [right of=Uenc, yshift=1.0cm, node distance=1.0cm] {};
				\node (midOUT2) [text width=1em,align=center] [right of=Uenc, yshift=0.0cm, node distance=1.0cm] {};
				\node (midOUT3) [text width=1em,align=center] [right of=Uenc, yshift=-1.0cm, node distance=1.0cm] {};
				\node (midIN1) [text width=1em,align=center] [left of=Urec, yshift=0.5cm, node distance=1.0cm] {};
				\node (midIN2) [text width=1em,align=center] [left of=Urec, yshift=-0.5cm, node distance=1.0cm] {};
				
				\draw[] ($(Uenc.east) + (0.0,1.0)$) -- ($(OUT1.west) + (0.0,0.0)$);
				
				\draw ($(midOUT2)$) to[out=0,in=-180] ($(midIN2)$);
				\draw ($(midOUT3)$) to[out=0,in=-180] ($(midIN1)$);
				
				\begin{scope}[on background layer]        
					\node (type2Box) [text width=3em,rounded corners=1ex,minimum height=3.5cm,minimum width=9.0cm,draw, densely dashed,line width = 0.3mm] at ($(MIDDLE)$) {};
				\end{scope}
				
			\end{tikzpicture}
		}
		\caption{Canonical type-2 encryption operator for perfectly correct PKE schemes.}
		\label{fig:type2enccanonical}
	\end{figure}
	\qed
\end{proof}
Notice that realising this canonical type-2 operator requires knowledge of the secret key, even if it is just an encryption operator, but that is fine because as previously mentioned type-2 operators usually require this additional knowledge. 
We have to make a distinction between the encryption {\em unitary} as defined above (a quantum gate modelling local computation of encryption by a party with knowledge of the relevant keys) and the encryption {\em oracle} (modelling the interaction of the adversary with such party, usually the challenger). 
By letting the randomness be an input, Definition~\ref{def:type2enc} allows to encrypt using a superposition of randomnesses, 
which is fine in the case of a party generating ciphertexts himself. 
In our security notion, however, the (honest) challenger will always produce ciphertexts using a (secret) classical randomness not controlled by the adversary \Eve.
In the security game, the challenger cannot send the randomness register back to \Eve, because knowledge of the randomness used would trivially break security, even in a classical scenario. 
But at the same time if the challenger withholds the randomness register, from \Eve's perspective this would be equivalent to tracing it out, and if the type-2 encryption operator introduces entanglement between ciphertext and randomness output registers, then tracing out the randomness would disturb the ciphertext state.

Luckily, a simple observation solves this dilemma:
as we have already discussed, in our oracle case the randomness is chosen by the (honest) challenger during the challenge query, so we can safely model it as classical.\footnote{Even if considering challengers that use superpositions of randomnesses, we show in Appendix~\ref{app:randsuperposition} that the difference is irrelevant, and that we can always restrict ourselves to the case of a classical randomness register.}
Looking at Definition~\ref{def:type2enc}, this means that the output state is always separable as $\ketbra{\rnd} \otimes \psi$.
Therefore, in our oracle definition the randomness register can be discarded after applying the type-2 encryption without disturbing the ciphertext state. 
This leads to the following.
\begin{definition}[Type-2 Encryption Oracle]\label{def:type2encOracle}
	Let $\Sigma=(\Kgen,\Enc,\Dec)$ be a PKE scheme and let $(\pk,\sk) \getsr \KGen$. The {\em type-2 encryption oracle} $O^{(2)}_{\Enc}$ for $\pk$ is defined by the following procedure:
	
	\begin{pchstack}
		\begin{pcvstack}\pcvspace
			\procedure[linenumbering]{\pcoracle $O^{(2)}_{\Enc}(\phi)$ on input $\phi \in \states(\hilbert_\msgsp)$}{%
				\x \rnd \fromunif \rndsp \\
				\x \ketbra{\rnd} \tensor \psi \gets U^{(2)}_\Enc \left( \ketbra{\rnd} \tensor \phi \tensor \ketbra{0 \ldots 0} \right)  \\
				\x \text{trace out } \ketbra{\rnd} \\
				\x \pcreturn \psi
			}
		\end{pcvstack}
	\end{pchstack}
\end{definition}
%

\subsection{Recoverable PKE Schemes}\label{sec:recoverablePKE}

Now we introduce a special case of PKE schemes where it is possible to decrypt a ciphertext {\em without} knowledge of the secret key, but having access to the randomness used for the encryption instead.
These schemes might not be perfectly correct, so the decryption procedure might fail on some ciphertext, yet still the recovery procedure will `decrypt' correctly if the right randomness is provided. 
We will see in Section~\ref{sec:SecurityAnalysis} that many PKE schemes are actually of this type.
\begin{definition}[Recoverable PKE Scheme]\label{def:recoverablePKEscheme}
	Let $\Sigma = (\kgen,\enc,\dec)$ be a (not necessarily perfectly correct) PKE scheme.
	We call $\Sigma$ a \emph{recoverable PKE scheme} if there exists an efficient algorithm $\Rec:\pksp \times \rndsp \times \ctxsp \to \msgsp$ such that, for any $\pk \in \pksp, \rnd \in \rndsp, \mssg \in \mssgsp$, it holds that
	\begin{equation*}
		\Rec(\pk,\rnd,\enc_\pk(\mssg;\rnd)) = \mssg\,.
	\end{equation*}
\end{definition}
Notice how the recovery procedure will always allow to avoid decryption failures even for schemes which are not perfectly correct.
We will sometimes write a recoverable scheme $\Sigma = (\KGen,\Enc,\Dec)$ with recovery algorithm \Rec directly as $\Sigma = (\KGen,\Enc,\Dec,\Rec)$.
Given \pk, it is of course possible to define a type-1 operator for \Rec in the canonical way.
\begin{definition}[Type-1 Recovery for PKE]\label{def:type1rec}
	Let $\Sigma=(\Kgen,\Enc,\Dec,\Rec)$ be a recoverable PKE scheme, and let $(\pk,\sk) \getsr \KGen$.
	The \emph{type-1 recovery operator} for $\pk$ is the unitary defined by:
	\begin{equation*}
		U^{(1)}_{\Rec_\pk} : \ket{\rnd,\ctxt,z} \mapsto \ket{\rnd,\ctxt, z \xor \Rec_\pk(\rnd,\ctxt)}\,.
	\end{equation*}
\end{definition}
As usual we will denote this operator by $U^{(1)}_{\Rec}$ when there is no ambiguity in the choice of \pk, and  with the same symbol we denote the superoperator acting on mixed states, i.e., $U^{(1)}_\Rec : \states(\hilbert_\rndsp \otimes \hilbert_\ctxsp \otimes \hilbert_\msgsp) \to (\hilbert_\rndsp \otimes \hilbert_\ctxsp \otimes \hilbert_\msgsp)$.

Now, the crucial observation is the following: for recoverable PKE schemes, the canonical type-2 encryption operator can be efficiently implemented using only the public key.
\begin{theorem}[Type-2 Encryption Operator for Recoverable Schemes]\label{thm:recoverabletype2}
	Let \mbox{$\Sigma = (\Kgen, \Enc, \Dec, \Rec)$} be a recoverable PKE, and let $(\pk,\sk) \getsr \KGen$.
	Then there exists an efficient procedure which only takes \pk as input, and outputs a polynomial-size quantum circuit realising the canonical operator $U_\Enc^{(2)}$.
\end{theorem}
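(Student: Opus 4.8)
The plan is to reuse the canonical construction of Theorem~\ref{thm:type2enceff} almost verbatim, replacing its single secret-key-dependent component by a public-key-only operation. Recall that the circuit of Fig.~\ref{fig:type2enccanonical} first applies $U^{(1)}_{\Enc}$ (which needs only \pk) to write the ciphertext into a fresh ancilla, and then applies $U^{(1)}_{\Dec}$ to uncompute the plaintext register back to $\ket{0}$; the decryption operator is the only step that consumes \sk. My key observation is that for a recoverable scheme the plaintext register can instead be uncomputed with the type-1 recovery operator $U^{(1)}_{\Rec}$ of Definition~\ref{def:type1rec}, which by construction needs only \pk, since \Rec is an efficient public-key algorithm.

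Concretely, I would build the circuit as follows. Starting from a basis input $\ket{\rnd, \mssg, 0 \ldots 0}$, with the ancilla register living in the ciphertext space, first apply $U^{(1)}_{\Enc}$ to obtain $\ket{\rnd, \mssg, \Enc_\pk(\mssg;\rnd)} = \ket{\rnd, \mssg, \ctxt}$. Then route the randomness register and the ciphertext register as the two read inputs of $U^{(1)}_{\Rec}$ and use the plaintext register as its target, yielding $\ket{\rnd, \ctxt, \mssg \xor \Rec_\pk(\rnd, \ctxt)}$ (up to the same wire crossing already present in Fig.~\ref{fig:type2enccanonical}). By Definition~\ref{def:recoverablePKEscheme} we have $\Rec_\pk(\rnd, \Enc_\pk(\mssg;\rnd)) = \mssg$, so the target collapses to $\mssg \xor \mssg = 0$, and after reordering the registers the state is $\ket{\rnd, \ctxt} = \ket{\rnd, \Enc_\pk(\mssg;\rnd)}$, which is exactly the action demanded of $U^{(2)}_{\Enc}$ in Definition~\ref{def:type2enc}. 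Efficiency and polynomial size are immediate, because both $U^{(1)}_{\Enc}$ and $U^{(1)}_{\Rec}$ are efficient whenever \Enc and \Rec are, and the whole composition invokes \pk only, never \sk.

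The point that deserves the most care, and where the recoverability hypothesis does the real work, is that the scheme need \emph{not} be perfectly correct. In the secret-key construction the uncomputation relied on the identity $\Dec_\sk(\ctxt) = \mssg$, which can fail for schemes with decryption error; here I rely instead on the recovery identity, which holds for \emph{every} \rnd and \mssg regardless of the decryption failure probability. This is precisely why the plaintext register is driven to $\ket{0}$ on all honest basis inputs even for imperfectly correct schemes. I would also note that recoverability simultaneously guarantees that, for each fixed randomness, the map $\mssg \mapsto \ctxt = \Enc_\pk(\mssg;\rnd)$ is injective (otherwise \Rec could not be single-valued), so the output genuinely lies in the canonical type-2 family of Definition~\ref{def:type2enc}; unitarity itself is automatic since the circuit is a composition of the two type-1 unitaries. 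The main obstacle is therefore not a hard technical argument but the register bookkeeping in the ciphertext-expansion case, which I would handle exactly as in Theorem~\ref{thm:type2enceff} by carrying the complementary ancilla space $\hilbert_{\ctxsp - \msgsp}$ through the computation.
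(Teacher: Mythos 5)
Your proposal is correct and is essentially the paper's own proof: the paper's Fig.~\ref{fig:type2encrecovery} realises $U^{(2)}_{\Enc}$ exactly as you describe, by applying $U^{(1)}_{\Enc}$ and then uncomputing the plaintext register with the public-key-only operator $U^{(1)}_{\Rec}$ in place of $U^{(1)}_{\Dec}$, with the same wire routing and the same observation that recoverability (not perfect correctness) is what drives the ancilla back to $\ket{0}$. Your explicit state-tracing and the remark that recoverability forces injectivity of $\mssg \mapsto \Enc_\pk(\mssg;\rnd)$ are just more detailed renderings of what the paper leaves implicit.
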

\begin{proof}
	The explicit circuit of the procedure is shown in Fig.~\ref{fig:type2encrecovery}.
	It uses type-1 encryption and recovery operators as underlying components, which are both efficient with knowledge of the public key only. Realisation of both these components is independent of the fact whether the scheme has full correctness or not, as the decryption algorithm itself is never used.
	\begin{figure}[htbp]
		\centering
		\scalebox{\scaleFactorGraphics}{
			\begin{tikzpicture}
				
				\node (MIDDLE) [text width=1em,align=center] {};
				
				\node (Uenc)[text width=2em,align=center,fill=gray!25,minimum height=3.0cm,minimum width=2cm,draw] [left of=MIDDLE, node distance = 2.0cm] {\large $U_{\Enc}^{(1)}$};
				\node (Urec)[text width=2em,align=center,fill=gray!25,minimum height=3.0cm,minimum width=2cm,draw] [right of=MIDDLE, node distance = 2.0cm] {\large $U_{\Rec}^{(1)}$};

				\node (IN1) [text width=1em,align=center] [left of=Uenc, yshift=1.0cm, node distance=3.5cm] {\large \ket{\rnd}};
				\node (IN2) [text width=1em,align=center] [left of=Uenc, yshift=0.0cm, node distance=3.5cm] {\large \ket{\mssg}};
				\node (IN3) [text width=1em,align=center] [left of=Uenc, yshift=-1.0cm, node distance=2.0cm] {\large \ket{0}};
				
				\draw[] ($(IN1.east) + (0.1,0.0)$) -- ($(Uenc.west) + (0.0,1.0)$);
				\draw[] ($(IN2.east) + (0.2,0.0)$) -- ($(Uenc.west) + (0.0,0.0)$);
				\draw[] ($(IN3.east) + (0.1,0.0)$) -- ($(Uenc.west) + (0.0,-1.0)$);
				
				\node (OUT1) [text width=1em,align=center] [right of=Urec, yshift=1.0cm, node distance=3.5cm] {\large \ket{\rnd}};
				\node (OUT2) [text width=1em,align=center] [right of=Urec, yshift=0.0cm, node distance=3.5cm] {\large \ket{\ctxt}};
				\node (OUT3) [text width=1em,align=center] [right of=Urec, yshift=-1.0cm, node distance=2.0cm] {\large \ket{0}};
				
				\draw[] ($(Urec.east) + (0.0,1.0)$) -- ($(OUT1.west) + (0.0,0.0)$);
				\draw[] ($(Urec.east) + (0.0,0.0)$) -- ($(OUT2.west) + (0.0,0.0)$);
				\draw[] ($(Urec.east) + (0.0,-1.0)$) -- ($(OUT3.west) + (0.0,0.0)$);
				
				\node (midOUT1) [text width=1em,align=center] [right of=Uenc, yshift=1.0cm, node distance=1.0cm] {};
				\node (midOUT2) [text width=1em,align=center] [right of=Uenc, yshift=0.0cm, node distance=1.0cm] {};
				\node (midOUT3) [text width=1em,align=center] [right of=Uenc, yshift=-1.0cm, node distance=1.0cm] {};
				\node (midIN1) [text width=1em,align=center] [left of=Urec, yshift=1.0cm, node distance=1.0cm] {};
				\node (midIN2) [text width=1em,align=center] [left of=Urec, yshift=0.0cm, node distance=1.0cm] {};
				\node (midIN3) [text width=1em,align=center] [left of=Urec, yshift=-1.0cm, node distance=1.0cm] {};
				
				\draw[] ($(Uenc.east) + (0.0,1.0)$) -- ($(Urec.west) + (0.0,1.0)$);
				
				\draw ($(midOUT2)$) to[out=0,in=-180] ($(midIN3)$);
				\draw ($(midOUT3)$) to[out=0,in=-180] ($(midIN2)$);
				
				\begin{scope}[on background layer]        
					\node (type2Box) [text width=3em,rounded corners=1ex,minimum height=3.5cm,minimum width=9.0cm,draw, densely dashed,line width = 0.3mm] at ($(MIDDLE)$) {};
				\end{scope}
				
			\end{tikzpicture}
		}
		\caption{Canonical type-2 encryption operator for recoverable PKE schemes.}
		\label{fig:type2encrecovery}
	\end{figure}
	\qed
\end{proof}
In particular, for recoverable PKE schemes the type-2 encryption operator can be realised locally by a quantum adversary (or a reduction), without need of additional oracle access.
This, together with the fact that most real-world PKE schemes are recoverable (as we will see in Section~\ref{sec:SecurityAnalysis}) shows that type-2 encryption operators are very natural, and unlike in the symmetric key case considered in~\cite{C:GagHulSch16} they also appear implicitly in QS1 security notions for such schemes.

\subsection{The \qINDqCPA Security Notion}\label{sec:qINDqCPA}

We are now ready to define the notion of {\em quantum ciphertext indistinguishability under quantum chosen plaintext 
	attack} (\qINDqCPA) for PKE schemes which admit an efficient construction of the canonical type-2 encryption operator $U_\Enc^{(2)}$. This includes in particular perfectly correct schemes and recoverable schemes.\footnote{As we will see, these cover all the interesting cases in practice, although there might be other classes of schemes which allow an efficient construction of $U_\Enc^{(2)}$; 
	we address the general case in Section~\ref{sec:classif}.}
We follow the approach in~\cite{C:GagHulSch16} and we define a game where a polynomially bounded quantum adversary plays against an external challenger.
We have to define the challenge phase and the learning (quantum \CPA) phases (pre- and post-challenge), using the 
theory of type-2 operators we have devised so far.

For the challenge query it is pretty straightforward: as in the original \qIND security definition for symmetric key encryption, we assume that the challenger \chal generates a keypair and sends the public key \pk to the adversary \Eve.
Then \Eve sends two plaintext quantum states (possibly mixed) $\phi_0,\phi_1$ to \chal, who will 
flip a random bit $b \getsr \bit$, discard (trace out) $\phi_{1-b}$, and encrypt the other message with the type-2 encryption oracle $\psi \from O^{(2)}_\Enc(\phi_b)$.
Finally, $\psi$ is sent back to \Eve, who will have to guess $b$ in order to win the game.

Justifying the use of a type-2 encryption during the challenge phase requires arguments different from the symmetric key case.
In the classical \INDCPA game for PKE, the challenger does not even need to know the secret key, as it is not needed for encryption, and we saw already that the secret key is sometimes necessary to implement the canonical type-2 encryption operator.
However, in the QS2 case the challenger can produce ciphertext-encoding quantum states with very different structure depending on whether he knows the secret key or not, thereby leading to different attack models. Type-2 encryption operators in particular are more general in this respect, and allow us to aim for a stronger security notion.
Moreover we also saw how certain schemes, like the recoverable ones, allow to build the type-2 operator using only the public key.
Thus it makes sense for a QS2 security notion to include the use of type-2 operators during the challenge phase.

The other question we have to address, which was left unspecified in~\cite{C:GagHulSch16}, is about the learning (\qCPA) phase.
Shall the adversary be able to perform only type-1 encryption operations, or type-2 as well?
In the QS1 case the answer is obvious: it depends on the scheme, e.g., for recoverable schemes both \mbox{type-1} and \mbox{type-2} operations are allowed, but in the general case only type-1 operations should.
Instead, in the QS2 case that we are considering, the answer is less straightforward.
For recoverable schemes again there is no difference, as the adversary can implement both types of operators locally.
But for general schemes there might be a difference, and there might exist non-recoverable PKE schemes which become insecure when giving oracle access to a type-2 encryption operator during the learning phase.\footnote{For example, one could combine a suitable separating SKE scheme with the canonical hybrid construction
	(cf. Section~\ref{sec:SecurityAnalysisHybrid}),
	so that the separation property is `inherited' by the resulting PKE scheme. We are not aware of an explicit example of such SKE scheme and we leave this as an open problem. We stress that such a counterexample is not found in~\cite{eprint:CETU20}, as the authors there ``excluded [...] notations that [...] combine quantum learning queries with quantum challenge queries of different query models.''}

In our definition of \qINDqCPA we opt for giving to the adversary as much power as possible, hence explicitly giving access to a type-2 encryption oracle when dealing with non-recoverable schemes, both in the learning and challenge phases.
The reason for this choice is twofold.
First, this allows us to aim for potentially stronger security notions.
Second, remember that, classically, \CPA attacks model not only the case where the adversary can compute ciphertexts himself (as in the case of PKE), but also scenarios where the adversary can ``trick'' an honest encryptor in providing certain ciphertexts (as in the case of \INDCPA security for symmetric key encryption).
In the quantum PKE setting, there is a difference whether these ciphertexts are computed locally by the adversary or obtained by the challenger through ``trickery'' (including scenarios already considered in~\cite{C:GagHulSch16}, such as quantum side-channel attacks, quantum obfuscation, etc.), because the challenger has knowledge of the secret key, and is therefore capable of generating type-2 ciphertexts even if the scheme is non-recoverable.
So, giving the adversary access to the type-2 encryption oracle seems to be the ``safe'' choice.

These considerations finally lead to the following.
\begin{exper}\label{exp:qINDqCPA}
	The {\em \qINDqCPA experiment} $\gamefont{qIND\mbox{-}qCPA}(\Sigma, \Eve, \secpar)$ for a PKE scheme $\Sigma=(\KGen,\Enc,\Dec)$ is defined as follows:
	\begin{algorithmic}[1]
		\State \chal runs $(\pk,\sk) \from \KGen$ and implements $O^{(2)}_\Enc$
		\State $\Eve^{O^{(2)}_\Enc}(\pk) \to (\phi_0, \phi_1, \sigma_{\varfont{state}})$
		\State \chal receives $\phi_0, \phi_1$ and does the following: \begin{itemize}
			\item[-] flips $b \fromunif \bit$
			\item[-] traces out $\phi_{1-b}$
			\item[-] calls $\psi \from O^{(2)}_\Enc(\phi_b)$
			\item[-] sends $\psi$ to \Eve
		\end{itemize}
		\State $\Eve^{O^{(2)}_\Enc}(\sigma_{\varfont{state}},\psi) \to b' \in \bit$
		\State \textbf{if} $b = b'$ \textbf{then return} \win; \textbf{else return} \rej.
	\end{algorithmic}
\end{exper}
Security is defined as negligible advantage over guessing.
\begin{definition}[\qINDqCPA Security]\label{def:qINDqCPA}
	A public key encryption scheme $\Sigma$ has {\em quantum ciphertext indistinguishability under quantum chosen plaintext attack}, or is \qINDqCPA-secure, iff for any QPT adversary \Eve it holds:
	\begin{equation*}
		\left| \Pr\left[ \gamefont{qIND\mbox{-}qCPA}(\Sigma, \Eve, \secpar) \to \win \right]- \frac{1}{2}\right| \leq \negl(\secpar)\,.
	\end{equation*}
\end{definition}
It is easy to show that the above notion is at least as strong as the QS1 notion of \INDqCPA for PKE introduced in~\cite{C:BonZha13}.
Before we show this, let us first recall some game-based notation~\cite{EC:BelRog06,eprint:Shoup04}.
Let \game be a game (or experiment) instantiated with a cryptographic scheme $\Sigma$ and \Eve be an adversary.
We write \Adv[\Sigma]{\game}{\Eve} to denote the advantage of \Eve in game \game instantiated with $\Sigma$, e.g., $\Adv[\Sigma]{\qINDqCPA}{\Eve}$ for the \qINDqCPA advantage against $\Sigma$.
If the scheme is clear from the context, we simply write $\Adv{\game}{\Eve}$.
For games \game[1] and \game[2], we write $\gameAdv{\game[1]^{\Eve}}{\game[2]^{\Eve}}$ for the advantage of adversary \Eve in distinguishing the games.
\begin{theorem}[$\qINDqCPA \Rightarrow \INDqCPA$]\label{thm:qINDqCPA_implies_INDqCPA}
	Let \mbox{$\Sigma = (\Kgen, \Enc, \Dec)$} be a PKE scheme.
	For any adversary \Eve, it holds that
	\begin{align*}
		\Adv{\INDqCPA}{\Eve} \le \Adv{\qINDqCPA}{\Eve}\,.
	\end{align*}
\end{theorem}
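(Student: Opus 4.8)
The plan is to exhibit a reduction showing that every $\INDqCPA$ adversary is essentially a $\qINDqCPA$ adversary that restricts itself to \emph{classical} challenge states. Concretely, from an $\INDqCPA$ adversary $\Eve$ I would build a $\qINDqCPA$ adversary $\redu$ that runs $\Eve$ internally and simulates its $\INDqCPA$ environment perfectly, so that $\redu$ wins exactly when $\Eve$ does. This yields a $\qINDqCPA$ adversary whose advantage equals $\Adv{\INDqCPA}{\Eve}$; since the quantum wrapper around $\Eve$ is efficient and trivial, identifying this composite with $\Eve$ gives $\Adv{\INDqCPA}{\Eve} \le \Adv{\qINDqCPA}{\Eve}$.

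The reduction $\redu$ forwards $\pk$ to $\Eve$. During the learning (\qCPA) phase, $\Eve$ expects a type-1 encryption oracle; since $U^{(1)}_\Enc$ is realisable from $\pk$ alone (cf.\ the remark following Definition~\ref{def:type1dec}), $\redu$ answers these queries locally without touching its own challenge oracle. When $\Eve$ outputs two classical challenge messages $m_0,m_1$, $\redu$ encodes them as the computational-basis states $\phi_0 = \ketbra{m_0}$ and $\phi_1 = \ketbra{m_1}$ and submits $(\phi_0,\phi_1)$ to its $\qINDqCPA$ challenger. Upon receiving the challenge state $\psi$, $\redu$ measures it in the computational basis to obtain a classical string $c$, which it hands to $\Eve$ as the $\INDqCPA$ challenge ciphertext; it then relays $\Eve$'s final guess $b'$.

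The key step is verifying that, on a classical input, the type-2 encryption oracle returns exactly a classical ciphertext with the correct distribution. By Definition~\ref{def:type2encOracle}, $O^{(2)}_\Enc(\ketbra{m_b})$ samples $\rnd \fromunif \rndsp$, applies $U^{(2)}_\Enc$ to $\ketbra{\rnd} \tensor \ketbra{m_b} \tensor \ketbra{0\ldots 0}$, and traces out the randomness register. By Equation~\eqref{eqn:type2canon} the state after the unitary is $\ketbra{\rnd} \tensor \ketbra{\Enc_\pk(m_b;\rnd)}$, so tracing out yields $\psi = \ketbra{\Enc_\pk(m_b;\rnd)}$, a computational-basis state. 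Measuring therefore returns $c = \Enc_\pk(m_b;\rnd)$ deterministically, and since $\rnd$ is uniform this $c$ is distributed exactly as $\Enc_\pk(m_b)$ in the $\INDqCPA$ game. Hence $\redu$ perfectly simulates $\Eve$'s view and outputs the correct bit iff $\Eve$ does.

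I expect the main obstacle to be confirming the classical-input behaviour of the type-2 oracle, specifically that no residual entanglement with the traced-out randomness (or with the auxiliary register $\hilbert_{\ctxsp - \msgsp}$ in the message-expansion case) survives to disturb the ciphertext state. This is precisely the separability observation used to justify Definition~\ref{def:type2encOracle}: because the message register holds a basis state and the ancilla is initialised to $\ket{0\ldots 0}$, the output is a product basis state and the computational-basis measurement is lossless. Everything else --- forwarding $\pk$, simulating type-1 learning queries locally, and relaying the guess --- is purely syntactic.
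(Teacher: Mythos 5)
Your proposal is correct and follows essentially the same route as the paper's proof: both view the $\INDqCPA$ adversary as a $\qINDqCPA$ adversary that submits its classical challenge messages as computational-basis states, and both rely on the key observation that, since the challenger's randomness is classical, the type-2 oracle outputs the basis state $\ketbra{\Enc_\pk(\mssg[b];\rnd)}$, so a computational-basis measurement recovers a classical ciphertext with exactly the $\INDqCPA$ distribution. Your version merely makes explicit, via the wrapper reduction $\redu$ and the separability argument for the ancilla register, what the paper states more informally (including the remark that the extra $O^{(2)}_\Enc$ access can only increase the adversary's power).
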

\begin{proof}
	We show that any adversary \adver wins the \qINDqCPA game with at least the same probability of winning the \INDqCPA game; the latter (Experiment~\ref{exp:PKEINDqCPA}) is described in Appendix~\ref{app:additionalBackground}. The differences with Experiment~\ref{exp:qINDqCPA} are:
	\begin{enumerate}
		\item In the \INDqCPA game \adver does not get oracle access to $O^{(2)}_\Enc$. Hence, when switching to \qINDqCPA, the winning probability cannot decrease, because the power of the adversary is augmented by the type-2 oracle.
		\item In the \INDqCPA game \adver is restricted to classical challenge messages $\mssg[0], \mssg[1]$. When switching to \qINDqCPA, the adversary will simply use quantum states $\ket{\mssg[0]}, \ket{\mssg[1]}$ as challenge plaintexts instead, and will measure the quantum ciphertext received by the challenger.
	\end{enumerate}
	Notice in fact that, since the randomness \rnd in the \qINDqCPA challenge query is classical, the type-2 operator $U^{(2)}_\Enc$ will produce a ciphertext state which is just a classical ciphertext encoded as a basis state $\ket{c = \Enc_\pk(\mssg;\rnd)}$.
	In other words, quantum plaintexts are {\em more generic} than classical plaintexts (or, to put it differently, classical plaintexts are a very special case of quantum plaintexts), and hence again the power of the adversary is not diminished when switching to the \qINDqCPA game.
	\qed
\end{proof}

\subsection{The \CCA Case}\label{sec:qINDqCCA}

We leave the case of extending our exposition to the quantum chosen ciphertext attack case (with the relevant notions of \qINDqCCAone and \qINDqCCAtwo) as future work, but we want anyway to sketch here the general strategy.

The first task is to formalise a \mbox{type-2} operator for decryption.
Unlike in the symmetric key setting considered in~\cite{C:GagHulSch16}, this is not necessarily going to be the adjoint of the \mbox{type-2} encryption operator, and in particular it might not require a randomness register as input; this has to be expected given that there is already an asymmetry in the definition of \mbox{type-1} encryption and decryption operators in the public key setting. 
Then, in the \qINDqCCAone case, we just extend the \qINDqCPA experiment by also providing the adversary with oracle access to the \mbox{type-1} and \mbox{type-2} decryption operators.

Extending the framework to the \qINDqCCAtwo case is not straightforward, mainly due to no-cloning and the destructive nature of quantum measurement. 
In fact, this case was left as an open problem already in~\cite{C:GagHulSch16} for the symmetric key setting. 
Fortunately, the technique presented in~\cite{EC:AlaGagMaj18} shows how to overcome this difficulty, by using a real-VS-ideal approach which makes it possible to differentiate the behaviour of the adversary when replaying the challenge ciphertext to the decryption oracle, hence effectively detecting a challenge replay attack. 
The approach in~\cite{EC:AlaGagMaj18} (and its extension to the public key case presented in~\cite{DBLP:journals/iacr/AlagicGM18}) is given in the context of {\em quantum encryption schemes} (a scenario which falls under the QS3 domain in~\cite{DBLP:phd/dnb/Gagliardoni17}), but it is easy to generalise to the QS2 notions we are considering here.

\section{Security Analysis for Real-World PKE Schemes} \label{sec:SecurityAnalysis}

We analyse the \qINDqCPA security of several real-world public key encryption schemes.
We start with the canonical LWE-based PKE scheme in Section~\ref{sec:SecurityAnalysisLatticeBased}, followed by the code-based PKE scheme ROLLO-II in Section~\ref{sec:SecurityAnalysisCodeBased}.
The hybrid encryption scheme is analysed in Section~\ref{sec:SecurityAnalysisHybrid} while Section~\ref{sec:SecurityAnalysisDiscussion} concludes with a discussion of these results.

\subsection{Results for LWE-Based PKE} \label{sec:SecurityAnalysisLatticeBased}

In this section we analyse the canonical LWE-based public key encryption scheme due to Regev~\cite{STOC:Regev05} with respect to our \qINDqCPA security notion.

\subsubsection{LWE-Based Public Key Encryption Schemes.} \label{sec:LWEscheme}

The canonical LWE-based encryption scheme has been proposed by Regev~\cite{STOC:Regev05}.
It underpins most lattice-based PKE schemes such as Kyber~\cite{DBLP:conf/eurosp/BosDKLLSSSS18}, LIMA~\cite{ESORICS:AOPPS17}, the LP scheme~\cite{RSA:LinPei11}, and the schemes underlying NewHope~\cite{NISTPQC-R2:NewHope19} and LAC~\cite{NISTPQC-R2:LAC19}.
The pseudocode (that we give for simplicity in a generic form, i.e., not specifying concrete domains and distributions for the parameters) is given in Fig.~\ref{fig:canonicalLWEbasedPKEscheme}.
Its security is based on the computational hardness of the \emph{Learning With Errors (LWE)} lattice problem.
The canonical LWE-based scheme works on $n$-dimensional vectors of elements of $\mathbb{Z}_q$ for $q \geq 2$. 
The functions $\mathsf{Encode}$ and $\mathsf{Decode}$ are used for encoding and decoding bit strings 
to and from elements of $(\mathbb{Z}_q)^n$. 
The $\mathsf{Decode}$ function has a certain error tolerance which, upon being exceeded, results in a decryption failure.

The $\mathsf{Encode}$ function maps the bits of the message into the high-order bit representation of group elements, which are then represented as a vector. 
For our purpose, it is not important here to have a precise definition of this encoding function, nor to have a detailed discussion on the the sampling distribution of the LWE vectors -- which is generally crucial for proving the security of the scheme. We leave these details to an appropriate reference, for example~\cite{NISTPQC-R2:FrodoKEM19}.
In this section, we will only consider the following, simplified characterisation of the encoding function.
\begin{lemma}[Canonical LWE-Based Message Encoding Representation]\label{lem:encodeRepr}
	Let $\mathsf{Encode}: \mssg \mapsto \poly{v}$ as from Fig.~\ref{fig:canonicalLWEbasedPKEscheme}, and let $\mathsf{Bit}(\poly{v})$ the canonical bit string representation in use for a vector element $\poly{v}$
	over a finite group $\mathbb{Z}_q$. 
	Then there exists a public efficient invertible permutation $\pi$ 
	and an integer $\tau \geq 0$ 
	such that
	\begin{align*}
		\mathsf{Bit}(\mathsf{Encode}(\mssg)) = \pi \left( \mssg \| \overbrace{0 \ldots 0}^{\tau } \right) \,.
	\end{align*}
	In particular, for $q=2$, it holds $\tau = 0$ and $\mathsf{Bit}(\mathsf{Encode}(\mssg)) = \pi(\mssg)$.
\end{lemma}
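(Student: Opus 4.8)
The plan is to unpack the definition of the $\mathsf{Encode}$ function described in Fig.~\ref{fig:canonicalLWEbasedPKEscheme} and argue that the map from message bits to the bit-representation of the encoded vector is a (deterministic, efficiently computable) injection whose image is a specific affine subspace of $\bit[*]$, which then forces the claimed permutation-plus-padding structure. The statement is essentially a normal-form lemma: it asserts that $\mathsf{Encode}$, read at the level of bit strings, is nothing more than a relabelling (the permutation $\pi$) of the message bits padded out with a fixed block of zeros. So the real content is to identify $\pi$ and $\tau$ explicitly from the scheme's encoding rule and verify the two required properties — invertibility/efficiency of $\pi$ and correctness of the padding count $\tau$.

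First I would recall that $\mathsf{Encode}$ maps each message bit into the \emph{high-order bit} of a group element of $\ZZ_q$ and then assembles these into a vector $\poly{v} \in (\ZZ_q)^n$; all remaining (low-order) bit positions of each coordinate are set to a fixed value, canonically $0$. Writing $\mathsf{Bit}(\poly{v})$ as the concatenation of the $\lceil \log_2 q\rceil$-bit representations of the $n$ coordinates, each coordinate therefore contributes exactly one \emph{message-carrying} bit position and $\lceil \log_2 q\rceil - 1$ \emph{fixed-zero} positions. Counting over all $n$ coordinates, there are exactly $n$ message positions (one per encoded bit of $\mssg$, so $|\mssg| = n$ in this parametrisation) and $\tau = n\,(\lceil \log_2 q\rceil - 1)$ fixed-zero positions. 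This already pins down $\tau \geq 0$, and gives $\tau = 0$ precisely when $\lceil \log_2 q\rceil = 1$, i.e.\ when $q = 2$, matching the final claim of the lemma.

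Next I would define $\pi$ as the bit-permutation on $\bit[n + \tau]$ that gathers the $n$ message-carrying positions into the first $n$ slots (in order) and pushes the $\tau$ fixed positions into the trailing block. Concretely, $\pi$ is the inverse of the interleaving map that distributes the $i$-th message bit into the high-order slot of the $i$-th coordinate and the zeros into the low-order slots; since this interleaving is a fixed, parameter-dependent reindexing of bit positions, $\pi$ is an efficiently computable permutation and is its own description — hence \emph{public} and \emph{efficiently invertible}, as required. With this $\pi$ in hand, the identity $\mathsf{Bit}(\mathsf{Encode}(\mssg)) = \pi(\mssg \,\|\, 0\ldots 0)$ holds by construction: applying $\pi^{-1}$ to $\mathsf{Bit}(\mathsf{Encode}(\mssg))$ reads off the message bits from the high-order slots (yielding $\mssg$) followed by the unused zero slots (yielding $0^\tau$).

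The \textbf{main obstacle} is that the lemma abstracts away the concrete definition of $\mathsf{Encode}$, so the argument must be robust to the exact encoding convention used (e.g.\ whether padding is with zeros or whether multiple message bits are packed per coordinate for larger alphabets). The delicate point is ensuring that the fixed low-order positions really are constant (independent of $\mssg$) across \emph{all} messages, since only then can they be permuted into a clean $0^\tau$ block; if the encoding instead placed $\mssg$-dependent data in those slots the normal form would fail. I would therefore state precisely the one structural assumption I rely on — that $\mathsf{Encode}$ writes message bits only into a fixed set of high-order positions and a fixed constant into the remainder — and note that this holds for the standard Regev-style encoding and its variants, deferring the coordinate-level bookkeeping to the reference~\cite{NISTPQC-R2:FrodoKEM19}.
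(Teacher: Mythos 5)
Your proposal is correct, but it is worth pointing out that the paper itself never proves this lemma: it is introduced as a ``simplified characterisation'' of the encoding function, justified only by the preceding prose (``$\mathsf{Encode}$ maps the bits of the message into the high-order bit representation of group elements'') and a deferral to an external reference such as FrodoKEM, with no proof environment at all. What you have written is therefore a genuine filling-in of a gap the authors deliberately left open, and it is done the right way: you make explicit the one structural assumption the paper's prose implicitly relies on (message bits occupy a fixed set of bit positions, all remaining positions carry a constant, message-independent value), you construct $\pi$ concretely as the de-interleaving reindexation, and you compute $\tau = n\left(\ceil{\log_2 q} - 1\right)$, which is consistent with the paper's remark that $\tau$ is bounded by $n \cdot \ceil{\log_2 q}$ minus the message bit length (your formula attains this bound under the one-bit-per-coordinate parametrisation; packing several bits per coordinate, as you note, only shrinks $\tau$). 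Your flagged caveat is also the genuinely delicate point: the literal permutation structure requires that each coordinate's encoding constant touch a \emph{single} bit position (e.g.\ $\floor{q/2}$ being a power of two); if the constant had Hamming weight greater than one, a message bit would be duplicated across positions and no bit permutation $\pi$ could realise the map. Since the paper's attack (Theorem~\ref{thm:LWEattack}) only ever invokes the lemma at $q=2$, where your argument is unconditional, nothing downstream is affected, but your explicit statement of the assumption is strictly more honest than the paper's treatment.
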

Notice that in practice the parameter $\tau$ denotes the expansion factor between $\mssg$ and $\ctxt[1]$ in Fig.~\ref{fig:canonicalLWEbasedPKEscheme} and is upper bounded by $n \cdot \ceil{\log_2(q)}$ minus the bit size of the message.
The larger $\tau$, the less efficient the scheme is in terms of ciphertext size but the lower the decryption failure rate.

\begin{figure}[htbp]
	\centering
		\begin{pchstack}
			\fbox{
				\procedure[width=9em]{$\kgen(\lambda;\rnd)$}{%
					\x \poly{a}, \poly{s},\poly{e} \gets \rnd \\
					\x \poly{b} \gets \poly{a}\poly{s} + \poly{e} \\
					\x \pk \gets (\poly{a},\poly{b}) \\
					\x \sk \gets \poly{s} \\
					\x \pcreturn (\sk,\pk)
				}
				\pchspace
				\procedure[width=11em]{$\enc(\pk,\mssg;\rnd)$}{%
					\x \pcparse \pk~\pcas (\poly{a},\poly{b}) \\
					\x \poly[1]{e},\poly[2]{e},\poly{d} \gets \rnd \\
					\x \ctxt[1] \gets \poly{b}\poly{d} + \poly[1]{e} + \mathsf{Encode}(\mssg) \\
					\x \ctxt[2] \gets \poly{a}\poly{d} + \poly[2]{e} \\
					\x \pcreturn \ctxt \gets (\ctxt[1],\ctxt[2])
				}		
			}
		\end{pchstack}
	\caption{Pseudocode of the canonical LWE-based public key encryption scheme $\Sigma = (\kgen,\enc,\dec)$. For the randomness \rnd used by \kgen and \enc, let $x \gets \rnd$ denote that $x$ is deterministically derived from \rnd. The decryption algorithm is omitted, as it is irrelevant for this work.}
	\label{fig:canonicalLWEbasedPKEscheme}
\end{figure}

Recall that \qINDqCPA security can only be defined for schemes which admit an efficient realisation of a type-2 encryption operator.
Showing this for the canonical LWE scheme is hence our first goal.
\begin{lemma}\label{lem:canonicalLWEbasedPKErecoverable}
	The canonical LWE-based PKE scheme $\Sigma = (\kgen,\enc,\dec)$, shown in Fig.~\ref{fig:canonicalLWEbasedPKEscheme}, is recoverable as from Definition~\ref{def:recoverablePKEscheme}.
\end{lemma}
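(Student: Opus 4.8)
The plan is to exhibit an explicit recovery algorithm $\Rec$ matching Definition~\ref{def:recoverablePKEscheme} and then verify that it always returns the correct message. The core observation is that, unlike the decryption algorithm $\dec$, the recovery algorithm is handed the encryption randomness $\rnd$; from the pseudocode in Fig.~\ref{fig:canonicalLWEbasedPKEscheme} this randomness deterministically fixes the error vectors $\poly[1]{e},\poly[2]{e}$ and the ephemeral secret $\poly{d}$. Hence every additive noise term appearing in the ciphertext can be reconstructed and subtracted off \emph{exactly}, so no error tolerance is ever invoked and decryption failures simply cannot arise.

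Concretely, I would define $\Rec(\pk,\rnd,\ctxt)$ as follows: parse $\pk$ as $(\poly{a},\poly{b})$ and $\ctxt$ as $(\ctxt[1],\ctxt[2])$, rederive $\poly[1]{e}$ and $\poly{d}$ from $\rnd$ exactly as $\enc$ does, and compute
\begin{equation*}
	\poly{v} \gets \ctxt[1] - \poly{b}\poly{d} - \poly[1]{e}\,.
\end{equation*}
By the encryption equation $\ctxt[1] = \poly{b}\poly{d} + \poly[1]{e} + \mathsf{Encode}(\mssg)$, this yields $\poly{v} = \mathsf{Encode}(\mssg)$ with no residual noise. It then remains only to invert the encoding, which is where I would invoke Lemma~\ref{lem:encodeRepr}: compute the canonical bit representation $\mathsf{Bit}(\poly{v})$, apply the public permutation $\pi^{-1}$, and discard the trailing $\tau$ zero bits to obtain $\mssg$. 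Returning $\mssg$ completes the procedure. Note that $\Rec$ never touches $\ctxt[2]$ or $\poly[2]{e}$ at all, and in particular never uses the secret key $\sk = \poly{s}$.

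Correctness is then immediate: for any $\pk$, $\rnd$, and $\mssg$, the value $\poly{v}$ equals $\mathsf{Encode}(\mssg)$ on the nose, and Lemma~\ref{lem:encodeRepr} guarantees that inverting the encoding recovers $\mssg$. Efficiency follows since all steps are vector arithmetic over $\mathbb{Z}_q$ together with one application of the efficiently computable $\pi^{-1}$. I do not expect a genuine obstacle here; the only point requiring care is to state explicitly that the recovery is \emph{exact} rather than approximate. The subtlety worth emphasising is that standard decryption must cope with the unknown noise $\poly[1]{e},\poly[2]{e},\poly{d}$ through the bounded error tolerance of $\mathsf{Decode}$, whereas $\Rec$ sidesteps this entirely by using the randomness to cancel the noise term by term --- which is precisely why a recoverable scheme can succeed even on parameter choices where $\dec$ might fail.
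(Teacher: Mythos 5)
Your proposal is correct and follows essentially the same route as the paper's proof: rederive $\poly[1]{e}$ and $\poly{d}$ from $\rnd$, subtract them from $\ctxt[1]$ to obtain $\mathsf{Encode}(\mssg)$ exactly (no residual noise), and then invert the encoding. The only cosmetic difference is in the last step: the paper simply applies $\mathsf{Decode}$ to the noiseless value and uses $\mathsf{Decode}(\mathsf{Encode}(\mssg)) = \mssg$, whereas you invert the encoding explicitly via $\pi^{-1}$ from Lemma~\ref{lem:encodeRepr} and strip the trailing $\tau$ zeros; both are valid realisations of $\Rec$.
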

\begin{proof}
	To prove the statement, we have to specify the algorithm \Rec that is introduced in Definition~\ref{def:recoverablePKEscheme}.
	Its input is a public key $\pk = (\poly{a},\poly{b})$, a randomness \rnd, and a ciphertext $\ctxt = (\ctxt[1],\ctxt[2])$ such that $\ctxt$ corresponds to the encryption of a message \mssg, using the public key \pk and randomness \rnd.
	The algorithm \Rec proceeds as follows.
	Given the randomness \rnd, it obtains the same values \poly[1]{e}, \poly[2]{e}, and \poly{d} that have been derived from \rnd during encryption and outputs
	\begin{align*}
		\mathsf{Decode}(\ctxt[1] - \poly{b}\poly{d} - \poly[1]{e})
		&= \mathsf{Decode}(\poly{b}\poly{d} + \poly[1]{e} + \mathsf{Encode}(\mssg) - \poly{b}\poly{d} - \poly[1]{e}) \\
		&= \mathsf{Decode}(\mathsf{Encode}(\mssg))
		= \mssg\,.
	\end{align*}
	This concludes the proof.
	\qed
\end{proof}

\subsubsection{QS2 Attack For LWE-Based PKE Schemes.}\label{sec:LWEattack}

Here we give an attack against the canonical LWE-based scheme for the case $q=2$. We leave the case of arbitrary $q$ as an open problem, albeit we conjecture that the distinguishing attack can be adapted to the general case.
\begin{theorem}[Attack Against Canonical LWE Scheme for $q=2$]\label{thm:LWEattack}
	Let $\Sigma$ be the canonical LWE-based PKE scheme shown in Fig.~\ref{fig:canonicalLWEbasedPKEscheme}
	defined over $\mathbb{Z}_q$ with $q=2$.
	Then there exists an efficient distinguishing adversary \Eve that
	wins the experiment $\gamefont{qIND\mbox{-}qCPA}(\Sigma,\Eve)$ with probability $1$.
\end{theorem}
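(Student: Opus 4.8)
The plan is to exploit the structure of the canonical LWE scheme at $q=2$, where by Lemma~\ref{lem:encodeRepr} the encoding function becomes a public invertible permutation $\pi$ with $\tau = 0$, i.e.\ $\mathsf{Bit}(\mathsf{Encode}(\mssg)) = \pi(\mssg)$. In this regime the first ciphertext component is $\ctxt[1] = \poly{b}\poly{d} + \poly[1]{e} + \mathsf{Encode}(\mssg)$, so after subtracting the randomness-dependent term $\poly{b}\poly{d} + \poly[1]{e}$ (which the challenger fixes once the classical randomness \rnd\ is sampled) the message enters the ciphertext merely through a fixed, publicly known, invertible affine-over-$\FF_2$ map. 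This is exactly the situation exploited by the ``Hadamard distinguisher'' of~\cite{C:GagHulSch16}: when the type-2 encryption operator maps plaintext basis states to ciphertext basis states via a permutation, an adversary who prepares a uniform superposition can detect whether that permutation was applied by undoing it and testing for the all-zero Fourier coefficient.

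Concretely, I would have \Eve\ prepare the two challenge plaintext states $\phi_0 = \ketbra{0\ldots 0}$ and $\phi_1 = H^{\otimes \ell}\ketbra{0\ldots 0}H^{\otimes \ell}$, the uniform superposition over all $\ell$-bit messages (here $\ell$ is the message length). The challenger traces out $\phi_{1-b}$, samples classical \rnd, and returns $\psi \from O^{(2)}_\Enc(\phi_b)$. By Definition~\ref{def:type2encOracle} and the $q=2$ structure, the encryption oracle acts on the message register as the basis permutation $\ket{\mssg} \mapsto \ket{\ctxt[1] = \poly{b}\poly{d} + \poly[1]{e} + \pi^{-1}(\text{\dots})}$ — more precisely, up to the public permutation and a classical $\FF_2$-shift $\poly{b}\poly{d} + \poly[1]{e}$ depending only on \rnd\ and \pk. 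Since this map is a bijection, applying $H^{\otimes}$ again after inverting the public affine part recovers the uniform superposition in the $b=1$ case but yields a scrambled phase/support pattern in the $b=0$ case, giving a deterministic distinguisher.

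The key steps, in order, are: (i) invoke Lemma~\ref{lem:canonicalLWEbasedPKErecoverable} to guarantee the type-2 operator exists and is realisable from \pk, so that \qINDqCPA\ is well-defined for this scheme; (ii) specialise Lemma~\ref{lem:encodeRepr} at $q=2$ to write the per-query ciphertext map on the message register as an \emph{affine bijection over $\FF_2$} determined by public data and the classical randomness; (iii) give \Eve's challenge states as the computational-basis zero state versus its Hadamard transform; (iv) trace through the action of $O^{(2)}_\Enc$ on each, showing that after \Eve\ locally applies the (publicly computable) inverse affine map and a final layer of Hadamards, the two cases land on orthogonal states; (v) conclude that a single computational-basis measurement recovers $b$ with certainty, so $\Pr[\dots \to \win] = 1$. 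A subtle point to handle carefully is that the additive shift $\poly{b}\poly{d} + \poly[1]{e}$ and the linear part $\poly{b}\poly{d}$ depend on the secret classical randomness \rnd, which \Eve\ does not learn; I must argue that an unknown $\FF_2$-affine shift only changes the \emph{phase pattern} of the Hadamard-conjugated superposition in a way that still leaves $b=0$ and $b=1$ perfectly distinguishable — i.e.\ the distinguishing statistic (the amplitude on the all-zero string after the inverse transform) is invariant under the unknown shift.

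The main obstacle I anticipate is precisely this last point: making rigorous that \Eve\ can distinguish \emph{without} knowing \rnd. In the symmetric-key Hadamard distinguisher of~\cite{C:GagHulSch16} the relevant map is a fixed permutation; here the per-query map carries an adversary-unknown classical mask, and I need to confirm that the Hadamard-basis measurement outcome is insensitive to an unknown linear/affine $\FF_2$-offset (a Pauli-$X$/$Z$ type shift merely permutes or re-signs the Fourier coefficients without moving weight off the target outcome). I would verify this by writing out the action of $H^{\otimes}$ on $\ket{\poly{d}\text{-shifted superposition}}$ and checking that the distinguishing projector commutes appropriately with the unknown shift, thereby confirming the success probability is exactly $1$ regardless of \rnd.
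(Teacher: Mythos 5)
Your overall framework is the same as the paper's (trace out the $\ctxt_2$ register, undo the public permutation $\pi$, measure in the Hadamard basis), but your choice of challenge states breaks the claimed success probability of exactly $1$, and the orthogonality claim in your step (iv) is false. With $\phi_0 = \ketbra{0\ldots 0}$ and $\phi_1$ the uniform superposition, consider what \Eve\ holds after the challenge: in the $b=1$ branch the unknown XOR mask $u = \mathsf{Bit}(\poly{b}\poly{d} + \poly[1]{e})$ leaves the uniform superposition invariant, so after $\pi^{-1}$ \Eve\ has $H\ket{0\ldots 0}$; but in the $b=0$ branch \Eve\ has the computational-basis state $\ket{\pi^{-1}(u)}$ with $u$ unknown. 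These two states have overlap $\left|\innerproduct{\pi^{-1}(u)}{H\,0\ldots 0}\right|^2 = 2^{-\ell}\neq 0$, so no measurement whatsoever distinguishes them with certainty --- your ``distinguishing statistic is invariant under the unknown shift'' argument cannot be completed, because the amplitude of $H\ket{\pi^{-1}(u)}$ on the all-zero string is $2^{-\ell/2}$, not $0$. Your attack therefore wins with probability $1 - 2^{-\ell-1}$ (overwhelming, and enough to refute \qINDqCPA\ security, but this is the quantitative profile of the paper's ROLLO-II attack in Theorem~\ref{thm:ROLLOIIattack}, not the probability-$1$ claim of Theorem~\ref{thm:LWEattack}).

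The idea you are missing is the paper's choice of \emph{two orthogonal Hadamard-basis states} as the challenge pair: $\phi_0 = H\ket{0\ldots 0}$ and $\phi_1 = H\ket{1\ldots 1}$. The point is that an unknown $\FF_2$ XOR shift acts on \emph{every} Hadamard basis state only by a global phase, $\ket{z} \mapsto \ket{z \oplus w}$ sends $H\ket{s}$ to $(-1)^{s\cdot w} H\ket{s}$, so both challenge branches are eigenstates of the mask (up to phase) and remain exactly orthogonal after encryption. After tracing out $\ctxt_2$ and undoing $\pi$ (using, as the paper does implicitly, that the bit-rearrangement $\pi$ commutes with XOR so that $u \oplus \pi(\mssg) = \pi(\pi^{-1}(u)\oplus\mssg)$), the Hadamard-basis measurement yields $0\ldots 0$ with probability $1$ when $b=0$ and a string orthogonal to it (namely $1\ldots 1$, up to the irrelevant global sign) with probability $1$ when $b=1$. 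Replacing your challenge pair with this one repairs the proof; everything else in your outline, including the appeal to Lemma~\ref{lem:canonicalLWEbasedPKErecoverable} for well-definedness and to Lemma~\ref{lem:encodeRepr} for the $q=2$ structure, matches the paper.
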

\begin{proof}
	First of all recall that, because $q=2$, group elements are just represented as bits. This means that $\tau = 0$ (there is no padding nor message expansion) and $\mathsf{Bit}(\mathsf{Encode}(\mssg)) = \pi(\mssg)$. Moreover, addition is performed by XORing elements bitwise.
	The distinguishing adversary \Eve performs a single quantum challenge query using the Hadamard basis state $H\ket{0 \ldots 0}$ 
	(an uniform superposition of all messages)
	as a first quantum plaintext $\phi_0$, and 
	the state $H\ket{1 \ldots 1}$
	as a second quantum plaintext.
	$$
	\phi_0 = H\ket{0 \ldots 0} = \sum_{\mssg} \frac{1}{\sqrt{2^{|\mssg|}}} \ket{\mssg} , \ \ 
	\phi_1 = H\ket{1 \ldots 1} = \sum_{\mssg} \frac{1}{\sqrt{2^{|\mssg|}}} (-1)^{\mathsf{parity}(m)} \ket{\mssg} .
	$$
	Upon receiving back the ciphertext $\psi$, \Eve does the following:
	\begin{enumerate}
		\item traces out the second part of the ciphertext, corresponding to $\ctxt_2$ in Fig.~\ref{fig:canonicalLWEbasedPKEscheme};
		\item applies to the resulting state (a type-2 operator of) $\pi^{-1}$ (cf.~Lemma~\ref{lem:encodeRepr});
		\item measures the resulting state in the Hadamard basis;
		\item if the outcome is $0 \ldots 0$ then output $0$, otherwise output $1$.
	\end{enumerate}
	We now analyse the attack.
	If the challenge bit is $0$, then the state $\phi_0 = H\ket{0 \ldots 0}$ is encrypted.
	The resulting $\psi$ is:
	\begin{align*}
		O^{(2)}_\Enc \left( \sum_{\mssg} \frac{1}{\sqrt{2^{|\mssg|}}} \ket{\mssg} \right) 
		= \sum_{\mssg} \frac{1}{\sqrt{2^{|\mssg|}}} \ket{\Enc(\mssg)} 
		= \sum_{\mssg} \frac{1}{\sqrt{2^{|\mssg|}}} \ket{\ctxt_1^{(\mssg)}} \otimes \ket{\ctxt_2} 
		\,;
	\end{align*}
	where $\ctxt_1^{(\mssg)}$ is the $\ctxt_1$ part of the ciphertext (cf.~Fig.~\ref{fig:canonicalLWEbasedPKEscheme}) related to superposition element $\mssg$, while the second part $\ctxt_2$ is independent of the underlying plaintext.
	The two corresponding registers are hence unentangled, and after tracing out the second, \Eve has the state
	\begin{align*}
		\sum_{\mssg} \frac{1}{\sqrt{2^{|\mssg|}}} \ket{\ctxt_1^{(\mssg)}} = 
		\sum_{\mssg} \frac{1}{\sqrt{2^{|\mssg|}}} \ket{\poly{u} + \mathsf{Encode}(\mssg)} = 
		\sum_{\mssg} \frac{1}{\sqrt{2^{|\mssg|}}} \ket{\poly{u} + \pi(\mssg)}
		\,;
	\end{align*}
	for an unknown element $\poly{u}$. Looking at the bit representation of $\poly{u}$ and writing $\pi^{-1}(u) = w$, from Lemma~\ref{lem:encodeRepr} the state above can be written as
	\begin{align*}
		\sum_{\mssg} \frac{1}{\sqrt{2^{|\mssg|}}} \ket{u \oplus \pi(\mssg)}
		=& \sum_{\mssg} \frac{1}{\sqrt{2^{|\mssg|}}} \ket{ \pi(w \oplus \mssg)} 
		\,.
	\end{align*}
	Undoing $\pi$ finally yields:
	\begin{align*}
		\sum_{\mssg} \frac{1}{\sqrt{2^{|\mssg|}}} \ket{ w \oplus \mssg} = H\ket{0 \ldots 0}
		\,;
	\end{align*}
	which will give outcome $0 \ldots 0$ on a Hadamard measurement with probability $1$.
	
	On the other hand, if the challenge bit is $1$, then the state $\phi_1 = \ket{1 \ldots 1}$ is encrypted.
	A similar computation as before shows that the outcome of the encryption is another Hadamard state orthogonal to $H\ket{0 \ldots 0}$, so the outcome of \Eve's final measurement will be different from $0 \ldots 0$ with probability $1$.
	\qed
\end{proof}

\subsection{Results for Code-based PKE} \label{sec:SecurityAnalysisCodeBased}

In this section we analyse the code-based PKE scheme ROLLO-II~\cite{NISTPQC-R2:ROLLO19} with respect to our \qINDqCPA security notion.
It turns out that, due to the one-time pad encryption, ROLLO-II is not \qINDqCPA-secure.

\subsubsection{Code-Based Public Key Encryption ROLLO-II.}

The encryption scheme ROLLO-II~\cite{NISTPQC-R2:ROLLO19} is a code-based public key encryption scheme based on rank metric codes.
The scheme in a generic, simplified form is displayed in Fig.~\ref{fig:ROLLOII}, where \randomOracle is a random oracle, $\mathsf{Supp}$ describes the support of vectors, and \poly{P} is a polynomial from the underlying code problem.

\begin{figure}[htbp]
	\centering
		\begin{pchstack}
			\fbox{
				\procedure[width=9em]{$\kgen(\lambda;\rnd)$}{%
					\x \vec{x},\vec{y} \gets \rnd \\
					\x \vec{h} \gets \vec{x}^{-1}\vec{y} \bmod{\poly{P}} \\
					\x \sk \gets (\vec{x},\vec{y}) \\
					\x \pk \gets \vec{h} \\
					\x \pcreturn (\pk,\sk)
				}
				\pchspace
				\procedure[width=11em]{$\enc(\pk,\mssg;\rnd)$}{%
					\x \vec[1]{e},\vec[1]{e} \gets \rnd \\
					\x \mathsf{E} \gets \mathsf{Supp}(\vec[1]{e},\vec[2]{e}) \\
					\x \ctxt[1] \gets \mssg \xor \randomOracle(\mathsf{E}) \\
					\x \ctxt[2] \gets \vec[1]{e} + \vec[2]{e}\vec{h} \bmod{\poly{P}} \\
					\x \pcreturn \ctxt \gets (\ctxt[1],\ctxt[2])
				}	
			}
		\end{pchstack}
	\caption{Pseudocode of the code-based public key encryption scheme ROLLO-II. For the randomness \rnd used by \kgen and \enc, let $x \gets \rnd$ denote that $x$ is deterministically derived from \rnd. The decryption algorithm is omitted since it is irrelevant for our work.}
	\label{fig:ROLLOII}
\end{figure}

We first show that ROLLO-II is recoverable, and hence admits a \qINDqCPA security definition.
\begin{lemma}\label{lem:ROLLOIIrecoverable}
	The code-based PKE scheme ROLLO-II, shown in Fig.~\ref{fig:ROLLOII}, is recoverable as from Definition~\ref{def:recoverablePKEscheme}.
\end{lemma}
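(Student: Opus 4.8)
The plan is to mirror the recoverability proof for the canonical LWE-based scheme (Lemma~\ref{lem:canonicalLWEbasedPKErecoverable}): I will exhibit an explicit algorithm \Rec and verify the defining equation of Definition~\ref{def:recoverablePKEscheme}. The key structural observation is that ROLLO-II masks the message with a one-time pad, $\ctxt[1] = \mssg \xor \randomOracle(\mathsf{E})$, where the pad $\randomOracle(\mathsf{E})$ is a deterministic function of the error vectors \vec[1]{e}, \vec[2]{e} that \enc derives solely from its randomness \rnd. Hence knowledge of \rnd alone suffices to reconstruct the pad, and XORing it back onto \ctxt[1] returns the message — without the secret key.

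Concretely, \Rec on input $(\pk,\rnd,\ctxt)$ with $\ctxt = (\ctxt[1],\ctxt[2])$ proceeds as follows. First, re-derive \vec[1]{e}, \vec[2]{e} from \rnd exactly as \enc does. Second, recompute the support $\mathsf{E} \gets \mathsf{Supp}(\vec[1]{e},\vec[2]{e})$. Third, query the (public) random oracle to obtain $\randomOracle(\mathsf{E})$. Finally, output $\ctxt[1] \xor \randomOracle(\mathsf{E})$. Since the support $\mathsf{E}$ computed here is identical to the one used during encryption — both being deterministic functions of the same \rnd — the oracle is queried at the same point and returns the same pad. Note in particular that \Rec never inspects \pk or \ctxt[2], so even the public key is superfluous for recovery here.

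Correctness then follows by direct substitution, using only the self-inverse property of XOR:
\begin{align*}
	\Rec(\pk,\rnd,\enc_\pk(\mssg;\rnd)) &= \ctxt[1] \xor \randomOracle(\mathsf{E}) \\
	&= (\mssg \xor \randomOracle(\mathsf{E})) \xor \randomOracle(\mathsf{E}) \\
	&= \mssg\,.
\end{align*}
I expect no substantial obstacle: all steps (re-deriving the errors, computing $\mathsf{Supp}$, a single oracle query, one XOR) are efficient, and none uses the secret key \sk. The only point worth stating explicitly is that \Rec is granted the same oracle access as \enc, so that the reconstructed pad matches the encrypting pad exactly; this is precisely what guarantees recovery succeeds regardless of whether \dec itself would have decrypted correctly, which is the whole point of recoverability for schemes that are not perfectly correct.
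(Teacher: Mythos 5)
Your proof is correct and takes essentially the same route as the paper's: both define \Rec by re-deriving the error vectors from \rnd, recomputing the pad $\randomOracle(\mathsf{Supp}(\cdot,\cdot))$, and outputting its XOR with \ctxt[1], with correctness following from XOR cancellation. Your explicit substitution step and the observation that \pk and \ctxt[2] are never used are minor elaborations of the same argument.
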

\begin{proof}
	To prove the statement, we have to specify the algorithm \Rec that is introduced in Definition~\ref{def:recoverablePKEscheme}.
	Its input is a public key $\pk = \vec{h}$, a randomness \rnd, and a ciphertext $\ctxt = (\ctxt[1],\ctxt[2])$, such that $\ctxt$ corresponds to the encryption of a message \mssg, using the public key \pk and randomness \rnd.
	The algorithm \Rec proceeds as follows.
	Given the randomness \rnd, it obtains the same values \vec[1]{e} and \vec[1]{e} that have been derived from \rnd during encryption.
	It then computes $\randomOracle(\mathsf{Supp}(\vec[1]{e},\vec[1]{e}))$ and outputs $\ctxt[1] \xor \randomOracle(\mathsf{Supp}(\vec[1]{e},\vec[1]{e}))$.
	\qed
\end{proof}
At this point, we would like to point out that the code-based PKE schemes which underlie the NIST proposals BigQuake~\cite{NISTPQC-R1:BIGQUAKE17}, HQC~\cite{NISTPQC-R2:HQC19}, and RQC~\cite{NISTPQC-R2:RQC19} are recoverable as well.

\subsubsection{QS2 Attack against ROLLO-II.}

We give an explicit attack against the \qINDqCPA security of ROLLO-II.
It 
is a Hadamard distinguisher that
exploits the fact that the message is essentially encrypted using a one-time pad (ciphertext part \ctxt[1] in Fig.~\ref{fig:ROLLOII}).
\begin{theorem}\label{thm:ROLLOIIattack}
	Let $\Sigma$ be the code-based PKE scheme ROLLO-II shown in Fig.~\ref{fig:ROLLOII}.
	Then there exists an efficient distinguishing adversary \Eve that wins the experiment $\gamefont{qIND\mbox{-}qCPA}(\Sigma,\Eve)$ with probability $\frac{3}{4}$.
\end{theorem}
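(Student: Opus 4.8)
The plan is to mirror the Hadamard distinguisher of Theorem~\ref{thm:LWEattack}, exploiting the fact that ROLLO-II encrypts via a one-time pad. First I would invoke Lemma~\ref{lem:ROLLOIIrecoverable}: since the scheme is recoverable, the type-2 encryption oracle $O^{(2)}_\Enc$ of Definition~\ref{def:type2encOracle} is well-defined and the experiment $\gamefont{qIND\mbox{-}qCPA}(\Sigma,\Eve)$ (Experiment~\ref{exp:qINDqCPA}) is meaningful. The structural observation driving the attack, read off from Fig.~\ref{fig:ROLLOII}, is that for a fixed classical randomness \rnd the encryption acts as $\ctxt[1] = \mssg \xor k$ with $k = \randomOracle(\mathsf{E})$ a classical string determined entirely by \rnd, while $\ctxt[2]$ is independent of \mssg. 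Hence, once \chal samples \rnd and applies $O^{(2)}_\Enc$, the $\ctxt[2]$-register is unentangled from the plaintext and the one-time pad acts on the $\ctxt[1]$-register as an unknown Pauli-$X$ string $X^k$.

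The distinguisher I would use probes a single message bit: \Eve submits $\phi_0 = H\ket{0}$ (Hadamard on the first qubit, zeros elsewhere) and $\phi_1 = \ket{0}$ (computational basis), then upon receiving $\psi$ traces out the $\ctxt[2]$-register, measures the first bit of $\ctxt[1]$ in the Hadamard basis, and outputs $0$ if the outcome is $0$ and $1$ otherwise. The analysis splits on the challenge bit $b$. If $b=0$, the state $H\ket{0}=\ket{+}$ is encrypted; since $X^{k}\ket{+}=\ket{+}$ up to a global phase, the Hadamard measurement returns $0$ with probability $1$, so \Eve guesses correctly. If $b=1$, the computational state $\ket{0}$ is sent to $\ket{k}$, a Hadamard eigenstate up to phase, so the measurement yields a uniformly random bit and \Eve is correct with probability $\tfrac12$. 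Averaging gives total success $\tfrac12\cdot 1 + \tfrac12\cdot\tfrac12 = \tfrac34$, matching the claim.

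The main conceptual obstacle, absent in the LWE proof, is that \Eve does not know the one-time pad key $k$, so unlike the publicly known permutation $\pi$ of Lemma~\ref{lem:encodeRepr} she cannot undo the encryption before measuring. The resolution is to exploit the asymmetry of the unknown $X^{k}$: a Hadamard-basis state is invariant under $X^{k}$ up to global phase, whereas a computational basis state is maximally spread in the Hadamard basis independently of $k$. Drawing one challenge state from each basis therefore circumvents the unknown key entirely, at the cost of trading the orthogonal pair (which would push success toward $1$, as in the LWE case) for a non-orthogonal pair that still yields constant advantage $\tfrac34-\tfrac12=\tfrac14$. The only routine point to verify is that tracing out $\ctxt[2]$ leaves the $\ctxt[1]$-register undisturbed, which is immediate from their lack of entanglement once \rnd is fixed classically (cf.\ Definition~\ref{def:type2encOracle}).
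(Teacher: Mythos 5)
Your proposal is correct and takes essentially the same route as the paper's proof: a Hadamard distinguisher exploiting the one-time-pad structure of \ctxt[1], tracing out the unentangled \ctxt[2] register, and measuring in the Hadamard basis, the only (immaterial) differences being that you swap the roles of $\phi_0$ and $\phi_1$ and confine the Hadamard state and measurement to a single qubit, which if anything makes the exact $\tfrac{3}{4}$ count cleaner than the paper's whole-register version. One wording slip does not affect correctness: in the $b=1$ case the state $\ket{k}$ is not ``a Hadamard eigenstate up to phase'' but a computational basis state that is maximally spread in the Hadamard basis (as you correctly restate in your final paragraph), so the uniform-outcome conclusion and the resulting $\tfrac{1}{2}\cdot 1+\tfrac{1}{2}\cdot\tfrac{1}{2}=\tfrac{3}{4}$ stand.
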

\begin{proof}
	In the challenge phase, the adversary \Eve prepares the two states $\phi_0 = \ket{0\dots0}$ and $\phi_1 = \sum_{\mssg}\frac{1}{\sqrt{2^{|\mssg|}}}\ket{\mssg}$ and sends them to the challenger.
	Upon receiving the challenge ciphertext $\psi$, \Eve traces out the register \ket{\ctxt[2]} and measures the resulting state in the Hadamard basis.
	If the measurement outcome is $0$, \Eve outputs $1$, otherwise, it outputs $0$.
	
	If the secret bit $b$ is $1$, the state $\phi_1$ is encrypted.
	Then the state $\psi$ is
	\begin{align*}
		O^{(2)}_\Enc \left( \sum_{\mssg} \frac{1}{\sqrt{2^{|\mssg|}}} \ket{\mssg} \right) 
		= \sum_{\mssg} \frac{1}{\sqrt{2^{|\mssg|}}} \ket{\Enc(\mssg)} 
		= \sum_{\mssg} \frac{1}{\sqrt{2^{|\mssg|}}} \ket{\ctxt_1^{(\mssg)}} \otimes \ket{\ctxt_2} 
		\,;
	\end{align*}
	where $\ctxt_1^{(\mssg)}$ is the $\ctxt_1$ part of the ciphertext (cf.~Fig.~\ref{fig:ROLLOII}) related to superposition element $\mssg$, while the second part $\ctxt_2$ is independent of the underlying plaintext.
	Hence, the two corresponding registers are unentangled, and after tracing out the second, \Eve gets the state
	\begin{align*}
		\sum_{\mssg} \frac{1}{\sqrt{2^{|\mssg|}}} \ket{\ctxt_1^{(\mssg)}} = 
		\sum_{\mssg} \frac{1}{\sqrt{2^{|\mssg|}}} \ket{\mssg \xor \randomOracle(\mathsf{E})} =
		\sum_{\mssg} \frac{1}{\sqrt{2^{|\mssg|}}} \ket{\mssg} =
		\ket{+}
		\,;
	\end{align*}
	hence measuring in the Hadamard basis yields $0$ with probability $1$.
	
	If the secret bit $b$ is $0$, the state $\phi_0$ is encrypted and the outcome of the final Hadamard measurement will be a $0$ or $1$, each with probability $50\%$, which concludes the proof.
	\qed
\end{proof}
We note that the attack also works against the code-based scheme BigQuake~\cite{NISTPQC-R1:BIGQUAKE17} which uses the same one-time pad approach to encrypt the message.

\subsection{Results for Hybrid Encryption} \label{sec:SecurityAnalysisHybrid}

In this section we analyse the canonical hybrid encryption scheme with respect to our \qINDqCPA security notion.
We show that its security mainly depends on the underlying symmetric key encryption scheme.

\subsubsection{Hybrid Encryption Scheme.} \label{sec:HybridPKEscheme}

The canonical hybrid PKE-SKE encryption scheme combines a public key encryption and a symmetric key encryption scheme into a public key encryption scheme.
That is, a message is encrypted using a fresh one-time key of the symmetric encryption scheme.
The one-time key is then encrypted using the public key encryption scheme, whereupon the encrypted one-time key is attached to the ciphertext.
To decrypt, one first recovers the symmetric one-time key, and then uses it to decrypt the ciphertext containing the message.
The canonical hybrid encryption scheme is shown in Fig.~\ref{fig:PseudocodeHybridPKE}.
For additional background on symmetric key encryption schemes and the security notion used in this section, see Appendix~\ref{app:additionalBackground}.
\begin{figure}[htbp]
	\centering
			\begin{pchstack}
			\fbox{
				\procedure[width=11em]{$\kgen(\secpar)$}{%
					\x (\pk,\sk) \getsr \kgen^P(\secpar) \\
					\x \pcreturn (\pk,\sk)
				}
				\pchspace
				\procedure[width=11em]{$\enc_\pk(\mssg;\rnd)$}{%
					\x \pcparse \rnd~\pcas (\rnd[1],\rnd[2],\rnd[3]) \\
					\x \key \gets \kgen^S(\secpar;\rnd[1]) \\
					\x \ctxt[1] \gets \enc^S_{\key}(\mssg;\rnd[2]) \\
					\x \ctxt[2] \gets \enc^P_\pk(\key;\rnd[3]) \\
					\x \pcreturn (\ctxt[1],\ctxt[2])
				}		
				\pchspace
				\procedure[width=11em]{$\dec_\sk(\ctxt)$}{%
					\x \pcparse \ctxt~\pcas (\ctxt[1],\ctxt[2]) \\
					\x \key \gets \dec^P_\sk(\ctxt[2]) \\
					\x \mssg \gets \dec^S_{\key}(\ctxt[1]) \\
					\x \pcreturn \mssg
				}
			}
		\end{pchstack}	
		
	\caption{Hybrid encryption scheme $\Sigma = (\kgen,\enc,\dec)$ built from a PKE scheme $\Sigma^P = (\kgen^P,\enc^P,\dec^P)$ and an SKE scheme $\Sigma^S = (\kgen^S,\enc^S,\dec^S)$.}
	\label{fig:PseudocodeHybridPKE}
\end{figure}

Below we show that the canonical hybrid encryption scheme is recoverable.
Given the randomness, the used one-time key can be obtained, which allows to decrypt the ciphertext part that contains the message.
We emphasise that the hybrid encryption scheme is recoverable even if the underlying PKE scheme is not recoverable.
\begin{lemma}\label{lem:hybridPKErecoverable}
	The canonical hybrid encryption scheme $\Sigma = (\kgen,\enc,\dec)$, shown in Fig.~\ref{fig:PseudocodeHybridPKE}, is recoverable as from Definition~\ref{def:recoverablePKEscheme}.
\end{lemma}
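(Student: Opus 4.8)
The plan is to exhibit the recovery algorithm $\Rec$ required by Definition~\ref{def:recoverablePKEscheme} explicitly, just as was done for the LWE-based scheme in Lemma~\ref{lem:canonicalLWEbasedPKErecoverable} and for ROLLO-II in Lemma~\ref{lem:ROLLOIIrecoverable}. The key observation is that the randomness $\rnd$ of the hybrid encryption algorithm in Fig.~\ref{fig:PseudocodeHybridPKE} is parsed as $(\rnd[1],\rnd[2],\rnd[3])$, where $\rnd[1]$ is exactly the randomness fed into the symmetric key generation $\kgen^S$. Hence, given $\rnd$, one can recompute the one-time symmetric key $\key$ deterministically, and this is the piece of secret information normally obtained only via the PKE-based decapsulation of $\ctxt[2]$.

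First I would define $\Rec$ on input $(\pk,\rnd,\ctxt)$ as follows: parse $\rnd$ as $(\rnd[1],\rnd[2],\rnd[3])$ and $\ctxt$ as $(\ctxt[1],\ctxt[2])$, recompute the symmetric key as $\key \gets \kgen^S(\secpar;\rnd[1])$, and then return $\mssg \gets \dec^S_\key(\ctxt[1])$. The correctness of this procedure follows by unwinding the encryption: by construction $\ctxt[1] = \enc^S_\key(\mssg;\rnd[2])$ with the very same $\key = \kgen^S(\secpar;\rnd[1])$, so provided the underlying SKE scheme is correct we recover $\dec^S_\key(\enc^S_\key(\mssg;\rnd[2])) = \mssg$. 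I would note that $\Rec$ uses only $\pk$, $\rnd$ and $\ctxt$, and in fact does not even touch $\pk$ or $\ctxt[2]$ — crucially, it never invokes the PKE decryption $\dec^P$ and hence never needs $\sk$.

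The important conceptual point to emphasise — which is the whole reason the lemma is stated separately from the generic recoverability discussion — is that this argument goes through \emph{regardless} of whether $\Sigma^P$ is recoverable or even perfectly correct: the randomness bypasses the PKE layer entirely, recovering $\key$ from $\rnd[1]$ rather than from $\ctxt[2]$. I would make this explicit in the proof text.

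I do not expect any genuine obstacle here; the proof is a short structural unwinding. The only subtlety worth flagging is the correctness assumption on the underlying SKE scheme: Definition~\ref{def:recoverablePKEscheme} demands that $\Rec(\pk,\rnd,\enc_\pk(\mssg;\rnd)) = \mssg$ hold for all $\mssg, \rnd, \pk$, so I would either assume (as is standard) that $\Sigma^S$ is perfectly correct, or restrict the claim to those $\rnd$ for which symmetric decryption succeeds. Under the natural assumption that the SKE scheme decrypts correctly, the identity holds on the nose and the proof concludes, mirroring the closing sentences of Lemmas~\ref{lem:canonicalLWEbasedPKErecoverable} and~\ref{lem:ROLLOIIrecoverable}.
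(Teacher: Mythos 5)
Your proof is correct and follows essentially the same route as the paper: the paper's $\Rec$ likewise parses $\rnd$ into $(\rnd[1],\rnd[2],\rnd[3])$, recomputes $\key \gets \kgen^S(\secpar;\rnd[1])$, and outputs $\dec^S_{\key}(\ctxt[1])$, bypassing $\sk$ and the PKE layer entirely. Your explicit flagging of the implicit perfect-correctness assumption on $\Sigma^S$ is a reasonable clarification that the paper leaves unstated, but it does not change the argument.
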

\begin{proof}
	To prove the statement, we have to specify the algorithm \Rec that is introduced in Definition~\ref{def:recoverablePKEscheme}.
	Its input is a public key $\pk$, a randomness \rnd, and a ciphertext $\ctxt = (\ctxt[1],\ctxt[2])$, such that $\ctxt$ corresponds to the encryption of a message \mssg, using the public key \pk and randomness \rnd.
	The algorithm \Rec proceeds as follows.
	Given the randomness \rnd, it obtains \rnd[1], \rnd[2], and \rnd[3], which have been derived from \rnd during encryption.
	It then computes $\key \gets \kgen^S(\secpar;\rnd[1])$ and outputs $\dec^S_{\key}(\ctxt[1])$.
	This concludes the proof.
	\qed
\end{proof}

\subsubsection{\qINDqCPA Security of Hybrid Encryption.} \label{sec:HybridPKE_QS2Security}

We now turn our attention towards the QS2 security of the hybrid encryption scheme.
It turns out that the QS2 security depends on the underlying SKE scheme, while the underlying PKE scheme merely requires QS1 security.
This is formalised in the theorem below.
\begin{theorem}[QS2 Security of Hybrid Encryption]\label{thm:hybridEncryptionQS2security}
	Let $\Sigma =  (\kgen,\enc,$ $\dec)$ be the hybrid encryption scheme built from an SKE scheme $\Sigma^S = 
	(\kgen^S,$ $\enc^S,\dec^S)$ and a PKE scheme $\Sigma^P = (\kgen^P,\enc^P,$ $\dec^P)$, as shown in 
	Fig.~\ref{fig:PseudocodeHybridPKE}.
	For any adversary \Eve against $\Sigma$, there exist adversaries \EveB and \EveC against $\Sigma^{P}$ and $\Sigma^{S}$, respectively, such that
	\begin{align*}
		\Adv[\Sigma]{\qINDqCPA}{\Eve} \le \Adv[\Sigma^{P}]{\INDqCPA}{\EveB} + \Adv[\Sigma^{S}]{\qIND}{\EveC}\,.
	\end{align*}
\end{theorem}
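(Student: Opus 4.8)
The plan is to prove this with a two-step game hop, mirroring the classical security argument for hybrid encryption but carefully accounting for the type-2 encryption oracle and the quantum challenge. I would start from the real experiment $\game[0] = \gamefont{qIND\mbox{-}qCPA}(\Sigma,\Eve,\secpar)$ and first record the structural fact that makes everything go through: because the challenge randomness $\rnd = (\rnd[1],\rnd[2],\rnd[3])$ is sampled \emph{classically} by \chal (cf.~Definition~\ref{def:type2encOracle}), the one-time key $\key \gets \kgen^S(\secpar;\rnd[1])$ is classical, and hence so are the public-key ciphertext $\ctxt[2] \gets \enc^P_\pk(\key;\rnd[3])$ and the symmetric randomness $\rnd[2]$. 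Thus the hybrid type-2 encryption of the challenge acts as the type-2 operator $U^{(2)}_{\enc^S_{\key}}$ applied to $\phi_b$ on the message register, tensored with the fixed basis state $\ket{\ctxt[2]}$ and with the classical randomness register separable and discardable. The identical structure holds per query in the learning phase (each with its own fresh classical randomness), which is precisely what lets both reductions simulate the oracle locally.

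Next I introduce $\game[1]$, which is identical to $\game[0]$ except that the challenge ciphertext component $\ctxt[2]$ encrypts a fresh independent key $\hat\key \getsr \kgen^S$ instead of the key $\key$ actually used to produce $\ctxt[1]$. I would bound $\gameAdv{\game[0]^{\Eve}}{\game[1]^{\Eve}}$ by the \INDqCPA advantage of a reduction \EveB against $\Sigma^P$. The reduction receives \pk from its own challenger, forwards it to \Eve, and simulates the hybrid type-2 encryption oracle locally: for each learning query it samples fresh \rnd, derives the classical key \key, applies $U^{(2)}_{\enc^S_{\key}}$ to the query register using $\rnd[2]$, and appends $\enc^P_\pk(\key;\rnd[3])$, all of which require only \pk. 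On the challenge query \EveB flips $b$, traces out $\phi_{1-b}$, samples $\key,\hat\key$, submits the two \emph{classical} messages $\key,\hat\key$ to its \INDqCPA challenger to obtain $\ctxt[2]^* = \enc^P_\pk(\key_\beta)$, produces $\ctxt[1]$ by applying $U^{(2)}_{\enc^S_{\key}}$ to $\phi_b$, and returns $(\ctxt[1],\ctxt[2]^*)$. Using the indicator $[b=b']$ from \Eve's final guess lets \EveB distinguish its hidden bit $\beta$, and since the \INDqCPA challenges here are classical keys, QS1 security of $\Sigma^P$ indeed suffices.

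In $\game[1]$ the key \key used to form $\ctxt[1]$ is information-theoretically independent of everything else in \Eve's view: $\ctxt[2]$ now encrypts $\hat\key$, and every learning query uses its own independent fresh key. Hence distinguishing $b$ in $\game[1]$ is exactly a single quantum-challenge attack on the one-time security of $\Sigma^S$, captured by a reduction \EveC against the \qIND game. Here \EveC generates $(\pk,\sk)$ itself, simulates the learning oracle locally as above, forwards \Eve's challenge states $\phi_0,\phi_1$ to its own \qIND challenger to obtain the symmetric challenge state $\chi = U^{(2)}_{\enc^S_{\key^*}}(\phi_{b^*})$, appends $\ctxt[2] = \enc^P_\pk(\hat\key)$ for a fresh $\hat\key$, sends $(\chi,\ctxt[2])$ to \Eve, and outputs \Eve's bit. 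This perfectly reproduces $\game[1]$ with $\key = \key^*$, so \Eve's advantage in $\game[1]$ equals $\Adv[\Sigma^{S}]{\qIND}{\EveC}$. Combining the two steps by the triangle inequality, together with Theorem~\ref{thm:qINDqCPA_implies_INDqCPA}'s convention that advantage means distance from the $\tfrac12$ baseline, yields the claimed bound.

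The main obstacle I anticipate is not the reduction skeleton, which is standard, but the precise justification that the hybrid type-2 oracle is \emph{faithfully} simulated by \EveB and \EveC using only local computation: one must argue that per-query classical randomness collapses the key-generation and public-key encryption steps to classical values, so that no superposition of keys or public keys is ever needed and the $\ctxt[2]$ register never becomes entangled with the message register. One must also verify that tracing out $\phi_{1-b}$ and the randomness register commutes with these operations, so that the state delivered to \Eve in each simulation is exactly the one it would receive in $\game[0]$, respectively $\game[1]$; this is exactly where the separability guaranteed by the classical randomness register in Definition~\ref{def:type2encOracle} is indispensable.
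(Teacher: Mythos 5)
Your proposal is correct and follows essentially the same route as the paper's proof: the identical game hop $\game[0] \to \game[1]$ that replaces the key encrypted in $\ctxt[2]$ by an independent one, a reduction \EveB to the \INDqCPA security of $\Sigma^{P}$ using the two classical keys as challenge messages, and a reduction \EveC to the \qIND security of $\Sigma^{S}$ that forwards \Eve's quantum challenge states and completes the ciphertext locally. The only difference is presentational: you spell out the separability argument (classical per-query randomness, hence classical keys and an unentangled $\ctxt[2]$ register) that justifies the faithfulness of the local oracle simulation, which the paper leaves implicit.
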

\begin{proof}
	The proof uses two games \game[0] and \game[1], where \game[0] is the \qINDqCPA security game instantiated with $\Sigma$, and \game[1] is the same except that the ciphertext part \ctxt[2] is replaced by encrypting a random key $\key'$ rather than \key.
	It holds that
	\begin{align*}
		\Adv[\Sigma]{\qINDqCPA}{\Eve} = \gameAdv{\game[0]^{\Eve}}{\game[1]^{\Eve}} + \Adv{\game[1]}{\Eve}\,.
	\end{align*}
	We construct the following adversary \EveB which receives a public key \pk as input.
	It flips a bit $b$ at random and runs \Eve on the same public key \pk.
	It answers every learning query $\phi$ by \Eve by generating a one-time key \key, asking its own challenger for an encryption of this key to obtain the ciphertext \ket{\ctxt[2]}, locally computes \ket{\ctxt[1]} by applying the type-2 encryption operator $U_{\enc^{S}}^{(2)}$ to $\phi$, and sends the ciphertext back to \Eve.
	For the challenge query $\phi_{0},\phi_{1}$ by \Eve, \EveB picks two (classical) keys \key and $\key'$, applies the type-2 encryption operator $U_{\enc_{\key}^{S}}^{(2)}$, using key \key, to $\phi_b$ to obtain \ket{\ctxt[1]}, obtains \ket{\ctxt[2]} by sending \key and $\key'$ to its own challenger, and sends the ciphertext back to \Eve.
	When \Eve guesses the bit $b$ correctly, \EveB outputs $0$, otherwise, it outputs $1$.
	It holds that \EveB perfectly simulates \game[0] and \game[1], depending on its own challenge, hence
	\begin{align*}
		\gameAdv{\game[0]^{\Eve}}{\game[1]^{\Eve}} \le \Adv[\Sigma^{P}]{\INDqCPA}{\EveB}\,.
	\end{align*}
	Next we transform an adversary \Eve, playing \game[1], into an \qIND adversary \EveC against $\Sigma^{S}$.
	The adversary \EveC generates a key pair (\pk,\sk) for the underlying PKE, which allows to perform all operations related to the PKE scheme.
	It runs \Eve on the public key \pk and answers any learning query by generating a key \key which it uses to encrypt the query by \Eve and then encrypts this key using the PKE scheme.
	The challenge query $\phi_{0},\phi_{1}$ by \Eve is forwarded by \EveC as its own challenge to obtain the ciphertext \ket{\ctxt[2]}, while the ciphertext \ket{\ctxt[1]} is computed locally by encrypting a randomly generated key \key using the PKE scheme.
	When \Eve outputs its guess, \EveC forwards it as its own output.
	It holds that \EveC perfectly simulated game \game[1] for \Eve, with the same secret bit as its \qIND security game, thus it holds that
	\begin{align*}
		\Adv{\game[1]}{\Eve} \le \Adv[\Sigma^{S}]{\qIND}{\EveC}\,.
	\end{align*}
	Collecting the bounds above proves the statement.
	\qed
\end{proof}
Theorem~\ref{thm:hybridEncryptionQS2security} reveals that to achieve our \qINDqCPA security notion, we can instantiate the hybrid encryption scheme with a PKE that merely achieves QS1 security.
This allows the usage of ROLLO-II, which, used as a stand-alone PKE scheme, is not \qINDqCPA-secure.

In the following we show that Theorem~\ref{thm:hybridEncryptionQS2security} is strict.
If the underlying SKE is not \qIND-secure, then the resulting hybrid scheme is not \qINDqCPA-secure, irrespectively of the underlying PKE scheme.
This is shown in the theorem below.
Examples for SKE schemes which are not \qIND-secure are given in~\cite{C:GagHulSch16}.
\begin{theorem}\label{thm:hybridEncryptionQS2insecurity}
	Let $\Sigma =  (\kgen,\enc,\dec)$ be the hybrid encryption scheme built from an SKE scheme $\Sigma^S = 
	(\kgen^S,\enc^S,\dec^S)$ and a PKE scheme $\Sigma^P = (\kgen^P,\enc^P,$ $\dec^P)$, as shown in 
	Fig.~\ref{fig:PseudocodeHybridPKE}.
	Assume that there exists an adversary \Eve which has some non-negligible advantage $\epsilon$ against the \qIND security of $\Sigma^S$.
	Then there exists an adversary \EveB against $\Sigma$ such that
	\begin{align*}
		\Adv[\Sigma]{\qINDqCPA}{\EveB} \ge \epsilon\,.
	\end{align*}
\end{theorem}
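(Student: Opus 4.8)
The plan is to build a straightforward reduction $\EveB$ against the hybrid scheme $\Sigma$ that internally runs the given $\qIND$ adversary $\Eve$ for $\Sigma^S$, embedding $\Eve$'s challenge directly into the challenge ciphertext of the hybrid game. The guiding observation, which makes the whole argument work, is that by construction of $\Sigma$ (cf.~Fig.~\ref{fig:PseudocodeHybridPKE}) the component $\ctxt[1]$ produced by the challenge oracle $O^{(2)}_\Enc$ is precisely a type-2 SKE encryption $\enc^S_{\key}(\phi_b)$ of the challenge plaintext under a freshly sampled one-time key $\key$ --- exactly the response $\Eve$ expects in its own $\qIND$ game. So all $\EveB$ needs to do is forward $\Eve$'s challenge and strip off the irrelevant $\ctxt[2]$ part.

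Concretely, I would have $\EveB$ proceed as follows. Upon receiving the public key $\pk$ from the hybrid challenger, it starts a local copy of $\Eve$ (no key is forwarded, since $\Sigma^S$ is symmetric). When $\Eve$ outputs its challenge pair $(\phi_0,\phi_1)$, $\EveB$ relays it verbatim as its own challenge query. The challenger flips $b \fromunif \bit$, discards $\phi_{1-b}$, and returns $\psi = \ket{\ctxt[1]} \otimes \ket{\ctxt[2]}$ with $\ctxt[1] = \enc^S_{\key}(\phi_b;\rnd[2])$ and $\ctxt[2] = \enc^P_\pk(\key;\rnd[3])$. Then $\EveB$ traces out the $\ctxt[2]$ register, hands $\ket{\ctxt[1]}$ to $\Eve$ as the $\qIND$ challenge response, and finally outputs whatever bit $b'$ that $\Eve$ outputs.

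The one step I expect to require care --- and hence the main (though mild) obstacle --- is arguing that discarding $\ctxt[2]$ leaves $\ket{\ctxt[1]}$ undisturbed, so that $\EveB$ perfectly simulates the $\qIND$ challenger. This follows because the type-2 encryption oracle samples its randomness classically (Definition~\ref{def:type2encOracle}): thus $\rnd[1]$, and therefore the one-time key $\key \gets \kgen^S(\secpar;\rnd[1])$, is classical, and $\ctxt[2] = \enc^P_\pk(\key;\rnd[3])$ is a fixed basis state independent of the plaintext superposition carried in $\phi_b$. Consequently the output is separable as $\ket{\ctxt[1]} \otimes \ket{\ctxt[2]}$, and tracing out the second register acts as the identity on the first. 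Moreover the fresh key $\key$ is distributed exactly as in the stand-alone $\qIND$ game for $\Sigma^S$, so the simulation is faithful.

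It then only remains to observe that the hidden bit $b$ chosen by the hybrid challenger plays the role of the secret bit in the simulated $\qIND$ game, so $\Eve$'s guess $b'$ is correct for the hybrid game exactly when it would be correct in the $\qIND$ game. Since $\EveB$ echoes $b'$, it wins whenever $\Eve$ would, yielding $\Adv[\Sigma]{\qINDqCPA}{\EveB} = \Adv[\Sigma^{S}]{\qIND}{\Eve} = \epsilon \ge \epsilon$, which is the claimed bound. (Should the $\qIND$ game be read as including a learning phase, $\EveB$ can answer those queries locally by sampling fresh SKE keys and applying $U^{(2)}_{\enc^{S}}$ itself, so the reduction carries over unchanged.)
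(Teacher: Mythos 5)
Your reduction is correct and is essentially the paper's own proof: forward $\Eve$'s challenge pair to the hybrid challenger, discard the $\ctxt[2]$ register---which is unentangled from $\ctxt[1]$ because the one-time key and all randomness are classical, so the challenge state factors as $\ket{\ctxt[1]}\tensor\ket{\ctxt[2]}$---hand $\ket{\ctxt[1]}$ to $\Eve$, and echo its guess; the paper asserts this separability implicitly by writing $\psi = \ket{\ctxt[1]}\tensor\ket{\ctxt[2]}$, whereas you spell out the justification. One caveat on your closing parenthetical: if the $\qIND$ game did include a learning phase, those queries would be encryptions under the \emph{same} (unknown) challenge key, so answering them with independently sampled fresh keys would not be a faithful simulation---but since the paper's $\qIND$ experiment (Experiment~\ref{exp:SKEqINDqCPA}) has no learning queries, this does not affect the validity of your proof.
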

\begin{proof}
	We construct the adversary \EveB, which uses adversary \Eve as subroutine, as follows.
	When \Eve outputs its challenge messages $\phi_0$ and $\phi_1$, \EveB forwards these to its own challenger.
	Upon receiving the challenge ciphertext $\psi = \ket{\ctxt[1]}\tensor\ket{\ctxt[2]}$, \EveB sends \ket{\ctxt[1]} to \Eve.
	When \Eve outputs its guess $b'$, \EveB outputs $b'$ as its own guess.
	It holds that \EveB perfectly simulates the \qIND security experiment, with the same challenge bit $b$, for \Eve.
	By outputting the same bit as \Eve, we have
	\begin{align*}
		\Adv[\Sigma]{\qINDqCPA}{\EveB} \ge \Adv[\Sigma^{S}]{\qIND}{\Eve} = \epsilon\,,
	\end{align*}
	which proves the claim.
	\qed
\end{proof}

\subsection{Discussion} \label{sec:SecurityAnalysisDiscussion}

In this section, we gave both positive and negative examples regarding the QS2 security of real-world public key encryption schemes.
{We gave a concrete attack against the canonical LWE-based scheme for the case $q=2$ and an attack against the code-based scheme ROLLO-II showing that these schemes are \qINDqCPA insecure.}
These results, however, considered that the scheme are used as public key encryption schemes to encrypt the actual message.
On the other hand, Theorem~\ref{thm:hybridEncryptionQS2security} reveals that both ROLLO-II and the canonical LWE-based scheme are sufficient to achieve \qINDqCPA secure when used as a key encapsulation mechanism, together with a QS2-secure SKE scheme.

The standardisation effort by NIST focuses on the latter scenario, hence our results show that for these standardised schemes it is sufficient to achieve QS1 security in order for the resulting KEM to achieve our stronger, more conservative security notion.
At the same time, our results also show that extra cautiousness is necessary when these standardised schemes are deployed directly as PKE schemes in protocols that require security in the QS2 sense.

\section{Classifying Other Public Key Encryption Schemes} \label{sec:classif}

So far we have built a framework for QS2 security of PKE schemes which are perfectly correct or recoverable (or both). 
But what about schemes which do not fall in either of these two categories? 
Are there such examples at all?
And what can we learn from this?
In this section, we initiate the classification of PKE schemes in general, extend our results to other classes of PKE schemes where possible, and point out the obstacles in other cases.

\subsection{Dealing with Decryption Failures: The General Case}\label{sec:decrfailures}

First, we discuss why arbitrary non-correct PKE schemes do not allow, in general, to define a type-2 encryption operator and,
consequently, we cannot 
always 
define the \qINDqCPA game as from Experiment~\ref{exp:qINDqCPA}.
However, we also discuss a possible workaround.

First of all, recall that defining a type-2 operator is only possible for functions that are inherently invertible. 
Then observe that a $(1-\alpha)$-correct PKE scheme (cf. Definition~\ref{def:partiallyCorrectPKE}) could have arbitrary, even overwhelming decryption error $\alpha$. 
In the most extreme case, the scheme can be almost identical to a constant function (for example, consider an artificial scheme where every public key \pk always encrypts to $0$, except for one particular randomness $\bar{\rnd}$ where it produces a correctly decryptable ciphertext instead).
In the presence of decryption failures, it is therefore impossible to find a general way to define type-2 operators for encryption, and hence, to define a suitable \qINDqCPA security notion.\footnote{Recoverable schemes are a special case: they might not be always reversible in the message space only, but they are always reversible in the union of message space and randomness space.}

We call {\em non-isometric} such schemes, where it is simply not possible to define a unitary operator that behaves {\em exactly} as from Definition~\ref{def:type2enc} for any keypair, even if we drop the requirement of efficiency.
\begin{definition}[Non-Isometric Schemes]\label{def:nonisometric}
	Let $\Sigma =(\kgen,\enc,\dec)$ be a PKE scheme. 
	We say that $\Sigma$ is {\em non-isometric} if, for any $(\pk,\sk) \getsr \KGen$,
	there exists at least a 
	randomness $\rnd_\pk$ such that the function $\mssg \mapsto \Enc_\pk(\mssg;\rnd_\pk)$ is non-injective.
	In particular, for any unitary $U$ acting on the 
	appropriate Hilbert spaces,
	there exists at least a 
	pair $(\mssg_\pk,\rnd_\pk)$ 
	such that:
	\begin{equation*}
		\Pr \left[ M(U\ket{\rnd_\pk,\mssg_\pk, 0,\ldots ,0}) \to \left( \rnd_\pk , \Enc_\pk(\mssg_\pk;\rnd_\pk) \right) \right] < 1 \,,
	\end{equation*}
	where $M$ denotes measurement in the canonical computational base.
\end{definition}
A possible workaround for these non-isometric schemes is to `enforce' the reversibility of the encryption, obtaining a new type of encryption unitary.
Consider what happens if we want to use the type-1 encryption operator (cf. Definition~\ref{def:type1enc}) during the challenge query:
\begin{equation*}
	U_{\Enc_\pk}^{(1)} : \ket{\rnd,\mssg,y} \mapsto \ket{\rnd,\mssg,y\xor\enc_\pk(\mssg;\rnd)}\,.
\end{equation*}
As already observed, the randomness \rnd can be understood as classical and discarded by the challenger.
However, the other two registers are generally going to be entangled, and both would have to be sent to the adversary for a meaningful quantum notion; but this would clearly break security because the message would remain in clear.\footnote{This explanation appears in detail in~\cite{C:GagHulSch16}.}
We could try to `fix' this issue by (reversibly) masking the message register sent to the adversary, for example by using a permutation $\pi$ on the message space drawn uniformly at random.
The following unitary:
\begin{equation*}
	U_{\Enc_\pk}^{(\pi)} : \ket{\rnd,\mssg,y} \mapsto \ket{\rnd,\pi(\mssg),y\xor\enc_\pk(\mssg;\rnd)}\,
\end{equation*}
allows hence to define a new type of quantum challenge query, where the challenger still discards the randomness 
register after encryption, but sends back the other two registers to the adversary.
Notice how, from the adversary's point of view, $\pi(\mssg)$ is a completely random element, and therefore the presence of this additional register does not offer any distinguishing advantage.
Moreover, in actual security reductions, the uniformly drawn $\pi$ can be replaced by a quantum-secure pseudorandom permutation~\cite{C:GagHulSch16}, or \QPRP in short.

We can hence use these {\em type-$\pi$ operators} to define (for \emph{any} PKE scheme, including the non-isometric ones) a new indistinguishability game and a related security notion with quantum challenge query. 
Motivating the use of such operators when modelling security is arguably non-trivial.
In certain cases, one could see $\pi(\mssg)$ as some sort of side-channel information given to the adversary, but in general it looks like just an artificial way to enforce reversibility on schemes which are not. 
We will therefore not study the resulting security notion in this work, but we want nevertheless to make a few 
observations on it.

First of all, notice that such a new security notion cannot be stronger than \qINDqCPA, at least when considering perfectly correct or recoverable schemes.
As a separating example, consider the distinguishing attack from Theorem~\ref{thm:hybridEncryptionQS2insecurity}: this will not work any more because of the presence of the entangled $\pi(\mssg)$ register, so that the hybrid scheme might be secure according to the new notion but still \qINDqCPA insecure.

Second, notice how the challenge query resulting from the use of type-$\pi$ operators reminds of the one given in an alternative quantum indistinguishability notion for secret key encryption schemes proposed by Mossayebi and Schack~\cite{MossayebiS16} - the difference is basically producing $\ket{\mssg, \Enc_\pk(\pi(\mssg))}$ instead of $\ket{\pi(\mssg), \Enc_\pk(\mssg)}$ - which is itself not comparable to \qINDqCPA. 
This security notion has been recently investigated and expanded by Chevalier et al.~\cite{eprint:ChevalierEV20}.

\subsection{Refining the Classification}\label{sec:refining}

Now we know how to define \qINDqCPA security of PKE schemes which are perfectly correct or recoverable (or both), 
and at the same time we know that it is not possible for schemes that are non-isometric. 
But it turns out we can say more. 
First of all we make a distinction for those schemes which {\em are} isometric: it means that it is possible to define a unitary operator that behaves exactly as a \mbox{type-2} encryption operator, but we distinguish whether finding and building such operator is efficient or not.
\begin{definition}[(Efficiently) Isometric Schemes]\label{def:isometric}
	Let $\Sigma$ be a PKE scheme with $\Sigma =(\kgen,\enc,\dec)$. 
	We say that $\Sigma$ is {\em isometric} if, 
	for any $(\pk,\sk) \getsr \KGen$ 
	and 
	for any randomness $\rnd$ the function $\mssg \mapsto \Enc_\pk(\mssg;\rnd)$ is injective. In particular, there exists a unitary $U$ acting on the 
	appropriate Hilbert spaces,
	such that for any $(\mssg,\rnd)$:
	\begin{equation*}
		\Pr \left[ M(U\ket{\rnd,\mssg, 0,\ldots ,0}) \to \left( \rnd , \Enc_\pk(\mssg;\rnd) \right) \right] = 1 \,,
	\end{equation*}
	where $M$ denotes measurement in the canonical computational base. 
	Furthermore, we say that $\Sigma$ is {\em efficiently isometric} if $U$ can be efficiently realised.
\end{definition}
Notice how, in general, an isometric scheme is not necessarily efficiently isometric. This is because, unlike for type-1 operators, the efficiency of the \Enc procedure is only enough to guarantee the existence of a unitary $U$ with the above property, but not its {\em efficiency}.
Then, notice how a type-2 encryption operator (as from Definition~\ref{def:type2enc}) satisfies the above definition of $U$, both by construction and by efficiency. 
In other words, efficiently isometric schemes are exactly all and only those schemes which, by definition, admit an efficient construction of the type-2 encryption operator. 
Clearly, in particular this includes perfectly correct schemes (by Theorem~\ref{thm:type2enceff}) and recoverable schemes (by Theorem~\ref{thm:recoverabletype2}).
\begin{corollary}
	Let $\Sigma$ be a PKE scheme. If $\Sigma$ is perfectly correct or recoverable, then it is efficiently isometric.
\end{corollary}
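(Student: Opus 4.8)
The plan is to exhibit, in both cases, a single efficiently realisable unitary witnessing Definition~\ref{def:isometric}, namely the canonical type-2 encryption operator $U_\Enc^{(2)}$ whose existence and efficiency were already established earlier. Recall that to prove $\Sigma$ efficiently isometric I must show two things: (i) that for every fixed randomness $\rnd$ the map $\mssg \mapsto \Enc_\pk(\mssg;\rnd)$ is injective, which guarantees that some unitary $U$ with the prescribed action exists at all; and (ii) that such a $U$ can be realised by a polynomial-size circuit. The whole corollary is essentially a repackaging of the two realisation theorems, so the work amounts to checking that their conclusions line up with the two clauses of Definition~\ref{def:isometric}.

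For part (i) I would argue injectivity directly from the recovery of the plaintext under a fixed randomness. If $\Sigma$ is perfectly correct, then by Definition~\ref{def:perfectlyCorrectPKE} we have $\Dec_\sk(\Enc_\pk(\mssg;\rnd)) = \mssg$ for every $\mssg$ and $\rnd$; hence if $\Enc_\pk(\mssg[0];\rnd) = \Enc_\pk(\mssg[1];\rnd)$, then applying $\Dec_\sk$ to both sides forces $\mssg[0] = \mssg[1]$, so the map is injective for each fixed $\rnd$. If instead $\Sigma$ is recoverable, the identical argument goes through with $\Rec(\pk,\rnd,\cdot)$ in place of $\Dec_\sk$, using the guarantee $\Rec(\pk,\rnd,\Enc_\pk(\mssg;\rnd)) = \mssg$ from Definition~\ref{def:recoverablePKEscheme}. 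In either case no two messages collide under a fixed randomness, which is precisely the injectivity condition of Definition~\ref{def:isometric}, and it also certifies that the honest action $\ket{\rnd,\mssg,0\ldots0} \mapsto \ket{\rnd,\Enc_\pk(\mssg;\rnd)}$ extends to a genuine unitary.

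For part (ii) I would simply invoke the two realisation theorems proved earlier. If $\Sigma$ is perfectly correct, Theorem~\ref{thm:type2enceff} supplies an efficient procedure (on input $\pk$ and $\sk$) outputting a polynomial-size circuit for $U_\Enc^{(2)}$; if $\Sigma$ is recoverable, Theorem~\ref{thm:recoverabletype2} supplies such a circuit from $\pk$ alone. Setting $U \gets U_\Enc^{(2)}$ in both cases, it then remains only to check the probability-$1$ measurement clause of Definition~\ref{def:isometric}: since on the honestly initialised ancilla $0\ldots0$ the operator maps the input to the single computational-basis state $\ket{\rnd,\Enc_\pk(\mssg;\rnd)}$ by Definition~\ref{def:type2enc}, measuring in the computational base returns $(\rnd,\Enc_\pk(\mssg;\rnd))$ deterministically. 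This establishes efficient isometry in both cases.

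I expect no genuine obstacle here, as the statement is a direct corollary rather than a new result. The only point deserving care is the bookkeeping that the two separate requirements of Definition~\ref{def:isometric} — fixed-randomness injectivity and the deterministic-measurement condition — are both met: the former is the short decryption/recovery argument of part (i), and the latter is immediate from the defining action of the type-2 operator on a correctly initialised ancilla, so that $U_\Enc^{(2)}$ is itself exactly the efficient witness demanded by the definition.
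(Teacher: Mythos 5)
Your proof is correct and takes essentially the same route as the paper: the corollary is obtained by taking the canonical type-2 encryption operator $U_{\Enc}^{(2)}$ as the witness unitary of Definition~\ref{def:isometric} and invoking Theorem~\ref{thm:type2enceff} (perfectly correct case) and Theorem~\ref{thm:recoverabletype2} (recoverable case) for its efficient realisation. Your explicit fixed-randomness injectivity argument via $\Dec_{\sk}$, respectively $\Rec$, makes precise what the paper leaves implicit in the discussion surrounding Definition~\ref{def:type2enc}, but it does not constitute a different approach.
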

\begin{figure}[htbp]
	\centering
	\scalebox{0.9}{
		\begin{tikzpicture}
			\coordinate (LEFTLOWER) at (0,0);
			\coordinate (RIGHTLOWER) at (12,0);
			\coordinate (LEFTUPPER) at (0,6);
			\coordinate (RIGHTUPPER) at (12,6);
			
			\coordinate (INTERSECTIONTOP) at (4,6);
			\coordinate (INTERSECTIONLEFT) at (0,4);
			\coordinate (INTERSECTIONRIGHT) at (12,4);
			\coordinate (INTERSECTIONBOTTOM) at (4,0);
			\coordinate (INTERSECTIONMIDDLE) at (4,4);

			\draw ($(LEFTUPPER) + (0.0,0.0)$) -- ($(INTERSECTIONTOP) + (0.0,0.0)$) node [midway, above, sloped,align=center] (TextNode) {\large recoverable};
			\draw ($(INTERSECTIONTOP) + (0.0,0.0)$) -- ($(RIGHTUPPER) + (0.0,0.0)$) node [midway, above, sloped,align=center] (TextNode) {\large non-recoverable};
			\draw ($(INTERSECTIONLEFT) + (0.0,0.0)$) -- ($(LEFTUPPER) + (0.0,0.0)$) node [midway, above, sloped,align=center] (TextNode) {\large perfectly\\\large correct};
			\draw ($(LEFTLOWER) + (0.0,0.0)$) -- ($(INTERSECTIONLEFT) + (0.0,0.0)$) node [midway, above, sloped,align=center] (TextNode) {\large partially\\\large correct};

			\path[fill=gray!50] (LEFTLOWER) -- (INTERSECTIONLEFT) -- (INTERSECTIONMIDDLE) -- (INTERSECTIONBOTTOM) -- (LEFTLOWER);

			\path[fill=gray!50] (INTERSECTIONTOP) -- (RIGHTUPPER) -- (INTERSECTIONRIGHT) -- (INTERSECTIONMIDDLE) -- (INTERSECTIONTOP);

			\path[fill=gray!50] (LEFTUPPER) -- (INTERSECTIONTOP) -- (INTERSECTIONMIDDLE) -- (INTERSECTIONLEFT) -- (LEFTUPPER);

			\path[fill=gray!50] (INTERSECTIONMIDDLE) -- (INTERSECTIONRIGHT) -- (INTERSECTIONBOTTOM) -- (INTERSECTIONMIDDLE);

			\path[] (LEFTLOWER) -- (LEFTUPPER)  -- (RIGHTUPPER)  -- (INTERSECTIONRIGHT)  -- (INTERSECTIONBOTTOM)  -- (LEFTLOWER); 

			\path[fill=gray!25] (INTERSECTIONBOTTOM) -- (INTERSECTIONRIGHT) -- ($(INTERSECTIONRIGHT) + (0.0,-1.35)$) -- ($(INTERSECTIONBOTTOM) + (3.0,0.0)$) -- (INTERSECTIONBOTTOM); 
			\draw[line width = 0.2mm, densely dashed] (INTERSECTIONBOTTOM) -- (INTERSECTIONRIGHT) node [midway, below, sloped, xshift=0.4cm,yshift=-0.4cm, fill=gray!25] (TextNode) {\large isometric};

			\path[pattern = north west lines, dashed] (RIGHTLOWER) -- ($(INTERSECTIONBOTTOM) + (3.0,0.0)$) -- ($(INTERSECTIONRIGHT) + (0.0,-1.35)$) -- (RIGHTLOWER); 
			\draw[line width = 0.3mm] ($(INTERSECTIONBOTTOM) + (3.0,0.0)$) -- ($(INTERSECTIONRIGHT) + (0.0,-1.35)$) node [midway, below, sloped, xshift=0.6cm,yshift=-0.4cm, fill=white] (TextNode) {\large non-isometric};

			\draw[line width = 0.3mm] (LEFTUPPER) -- (RIGHTUPPER) -- (RIGHTLOWER) -- (LEFTLOWER) -- (LEFTUPPER);
			\draw[loosely dashed, line width = 0.15mm] (INTERSECTIONMIDDLE) -- (INTERSECTIONTOP);
			\draw[loosely dashed, line width = 0.15mm] (INTERSECTIONMIDDLE) -- (INTERSECTIONLEFT);
			\draw[loosely dashed, line width = 0.15mm] (INTERSECTIONMIDDLE) -- (INTERSECTIONBOTTOM);
			\draw[loosely dashed, line width = 0.15mm] (INTERSECTIONMIDDLE) -- (INTERSECTIONRIGHT);
			
			\path[] (LEFTLOWER) -- (RIGHTUPPER) node [midway, sloped, xshift=-1.30cm,yshift=1.85cm, fill=gray!50] (TextNode)[align=center] {\large efficiently\\\large isometric};
		\end{tikzpicture}
	}
	\caption{Classification of PKE schemes.
		The \qINDqCPA security notion can be defined for all schemes except those in the shaded area (non-isometric). For efficiently isometric schemes (dark gray area) the \mbox{type-2} operator can be realised efficiently and we provide concrete circuits for the schemes that are perfectly correct or recoverable. For isometric schemes (light gray area) the \mbox{type-2} operator can be realised but not efficiently.}
	\label{fig:PKEclassification}
\end{figure}
The situation is depicted in Fig.~\ref{fig:PKEclassification}. This means that, as from Definition~\ref{def:qINDqCPA}, we can extend the \qINDqCPA security notion not only to recoverable or perfectly correct schemes, but to all the efficiently isometric ones.
For the non-efficient case (arbitrary isometric schemes) the \qINDqCPA notion can still be defined, but its usefulness would be less clear, as it might require unbounded challengers in the security game (and therefore, difficulty in simulating them by efficient reductions when proving the security of a particular scheme). Still, it would be useful for {\em impossibility results}, i.e., proving that a particular isometric scheme is not \qINDqCPA-secure.

Finally, can we find representative examples of schemes which fall in the categories that we have just defined?
We have already mentioned an example of a non-isometric scheme at the beginning of Section~\ref{sec:decrfailures} (the almost-constant one).
Here we show a construction of an efficiently isometric scheme that is neither perfectly correct nor recoverable. 
The construction is given in Fig.~\ref{fig:patchedEncryptionScheme}: 
it transforms a recoverable, not perfectly correct encryption scheme by pre-processing the message with a quantum-secure trapdoor permutation~\cite{DBLP:phd/dnb/Gagliardoni17} before encrypting it, and inverts again the permutation after decryption (the public and secret keys of the trapdoor permutation are embedded in the public and secret key, respectively, of the resulting scheme).
It works because the permutation `scrambles' the resulting ciphertexts but not the randomness, thereby hindering an adversary (or a challenger) who tries to build an efficient recovery algorithm \Rec for the transformed scheme.
At the same time, we show how such construction is efficiently isometric, by showing an efficient circuit for the canonical type-2 encryption operator $U^{(2)}_\Enc$.
This is formalised in the theorem below.

\begin{figure}[htbp]
	\centering
			\begin{pchstack}
			\fbox{
				\procedure[width=11em]{$\kgen(\secpar)$}{%
					\x (\pk_{e},\sk_{e}) \getsr \kgen^{\Sigma}(\secpar) \\
					\x (\pk_{f},\sk_{f}) \getsr \kgen^{F}(\secpar) \\
					\x \pk \gets (\pk_{e},\pk_{f}) \\
					\x \sk \gets (\sk_{e},\sk_{f}) \\
					\x \pcreturn (\pk,\sk)
				}
				\pchspace
				\procedure[width=11em]{$\enc_\pk(\mssg;\rnd)$}{%
					\x \pcparse \pk~\pcas (\pk_{e},\pk_{f}) \\
					\x y \gets \TDF(\pk_{f},\mssg) \\
					\x \ctxt \gets \enc^{\Sigma}(\pk_{e},y;\rnd) \\
					\x \pcreturn \ctxt
				}		
				\pchspace
				\procedure[width=11em]{$\dec_\sk(\ctxt)$}{%
					\x \pcparse \sk~\pcas (\sk_{e},\sk_{f}) \\
					\x y \gets \dec^{\Sigma}(\sk_{e},\ctxt) \\
					\x \mssg \gets \TDFinv(\sk_{f},y) \\
					\x \pcreturn \mssg
				}
			}
		\end{pchstack}	
		
	\caption{Transformed scheme $\Gamma$, where $\Sigma = (\kgen^{\Sigma},\enc^{\Sigma},\dec^{\Sigma})$ is a PKE scheme and $\Pi = (\kgen^{F},\TDF,\TDFinv)$ is a deterministic trapdoor permutation.}
	\label{fig:patchedEncryptionScheme}
\end{figure}

\begin{theorem}\label{thm:patch1}
	Let $\Pi = (\kgen^{F},\TDF,\TDFinv)$ be a deterministic trapdoor permutation and $\Sigma = (\kgen^{\Sigma},\enc^{\Sigma},\dec^{\Sigma})$ be a PKE scheme.
	If $\Pi$ is quantum-secure and $\Sigma$ is recoverable and $(1-\alpha)$-correct, 
	then the scheme $\Gamma = (\kgen,\enc,\dec)$ depicted in Fig.~\ref{fig:patchedEncryptionScheme} is $(1-\alpha)$-correct, non-recoverable, and efficiently isometric PKE.
\end{theorem}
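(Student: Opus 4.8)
The plan is to establish the three asserted properties of $\Gamma$ in turn, treating $(1-\alpha)$-correctness and efficient isometry as fairly direct consequences of the composition, and devoting the real effort to non-recoverability, which is where the construction's point lies. Throughout I write $\pk=(\pk_e,\pk_f)$ and $\sk=(\sk_e,\sk_f)$ as in Fig.~\ref{fig:patchedEncryptionScheme}.

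For $(1-\alpha)$-correctness I would fix a message $\mssg$ and unfold decryption: $\dec_\sk(\enc_\pk(\mssg;\rnd))$ equals $\TDFinv(\sk_f,\dec^{\Sigma}(\sk_e,\enc^{\Sigma}(\pk_e,\TDF(\pk_f,\mssg);\rnd)))$. Since $\TDF(\pk_f,\cdot)$ is a bijection with inverse $\TDFinv(\sk_f,\cdot)$, the outer application of $\TDFinv$ returns $\mssg$ \emph{exactly} when the inner $\dec^{\Sigma}$ returns $y\gets\TDF(\pk_f,\mssg)$; hence the failure event for $\Gamma$ on $\mssg$ coincides with the failure event for $\Sigma$ on the single message $y$. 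Conditioning on $\pk_f$ fixes $y$, and because $(\pk_e,\sk_e)$ is sampled independently of $(\pk_f,\sk_f)$, the $(1-\alpha)$-correctness of $\Sigma$ bounds that conditional probability by $\alpha$; averaging over $\pk_f$ preserves the bound, so $\Gamma$ is $(1-\alpha)$-correct.

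For efficient isometry I would build $U^{(2)}_\Enc$ as a composition of two efficient pieces. Since $\Sigma$ is recoverable, Theorem~\ref{thm:recoverabletype2} gives an efficient operator $U^{(2)}_{\enc^{\Sigma}}$, realisable from $\pk_e$ alone, with the \emph{exact} action $\ket{\rnd,y,0\ldots0}\mapsto\ket{\rnd,\enc^{\Sigma}(\pk_e,y;\rnd)}$ for all $y,\rnd$ (exactness holding despite decryption failures, precisely because recovery rather than decryption is used). Separately, as $\TDF(\pk_f,\cdot)$ is a permutation whose forward direction is efficient from $\pk_f$ and whose inverse is efficient from $\sk_f$, I can realise the type-2 operator $U^{(2)}_{\TDF}:\ket{\mssg}\mapsto\ket{\TDF(\pk_f,\mssg)}$ efficiently via the standard conversion circuit of two type-1 operators (cf.~Fig.~\ref{fig:type2enccanonical}), using $\pk_f$ for the forward map and $\sk_f$ for the inverse. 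Applying $U^{(2)}_{\TDF}$ to the message register and then $U^{(2)}_{\enc^{\Sigma}}$ yields $\ket{\rnd,\mssg,0\ldots0}\mapsto\ket{\rnd,\enc^{\Sigma}(\pk_e,\TDF(\pk_f,\mssg);\rnd)}=\ket{\rnd,\enc_\pk(\mssg;\rnd)}$, which is exactly the canonical $U^{(2)}_\Enc$ of Definition~\ref{def:type2enc}. Both factors are efficient given $(\pk,\sk)$, so $\Gamma$ is efficiently isometric in the sense of Definition~\ref{def:isometric}; injectivity of $\mssg\mapsto\enc_\pk(\mssg;\rnd)$ follows from injectivity of $\TDF(\pk_f,\cdot)$ composed with that of $\enc^{\Sigma}(\pk_e,\cdot;\rnd)$, the latter being forced by recoverability of $\Sigma$.

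The main obstacle is non-recoverability, which I would prove by reduction to the quantum one-wayness of $\Pi$. The subtlety I must emphasise first is that $\Gamma$ \emph{is} recoverable information-theoretically — one can recover $y$ from $(\pk,\rnd,\ctxt)$ using $\Rec^{\Sigma}$ and then invert $F$ — so the claim is exactly the non-existence of an \emph{efficient} recovery algorithm, and the whole argument hinges on the efficiency requirement in Definition~\ref{def:recoverablePKEscheme}. Suppose towards contradiction that $\Gamma$ admits an efficient $\Rec$ with $\Rec(\pk,\rnd,\enc_\pk(\mssg;\rnd))=\mssg$ for all $\pk,\rnd,\mssg$. I construct a QPT inverter $\EveB$: on a one-wayness challenge $(\pk_f,y^*)$ with $y^*=\TDF(\pk_f,x^*)$ for uniformly random $x^*$ in the permutation domain (which is the message space of $\Gamma$), $\EveB$ generates $(\pk_e,\sk_e)\getsr\kgen^{\Sigma}$, sets $\pk=(\pk_e,\pk_f)$, samples $\rnd$, computes $\ctxt\gets\enc^{\Sigma}(\pk_e,y^*;\rnd)$, and outputs $\Rec(\pk,\rnd,\ctxt)$. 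The crucial observation is that $\ctxt=\enc^{\Sigma}(\pk_e,\TDF(\pk_f,x^*);\rnd)=\enc_\pk(x^*;\rnd)$ is a genuine $\Gamma$-encryption of $x^*$ under honestly generated keys, so the recovery guarantee forces $\Rec(\pk,\rnd,\ctxt)=x^*$ with probability $1$; thus $\EveB$ returns the preimage and contradicts the (quantum) one-wayness of $\Pi$. The delicate points to get right are therefore the \emph{syntactic alignment} of the reduction — ensuring $x^*$ lands in $\Gamma$'s message space so the fabricated $\ctxt$ is a bona fide encryption of $x^*$ — and making explicit that the separation between ``efficiently isometric'' and ``recoverable'' rests entirely on efficiency: the secret key $\sk_f$ unlocks the type-2 operator, while the trapdoor's one-wayness blocks any public-key-only recovery.
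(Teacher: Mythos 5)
Your proof is correct and follows essentially the same route as the paper's: the identical reduction from an assumed efficient $\Rec$ for $\Gamma$ to inverting the trapdoor permutation (embedding the one-wayness challenge $y^*$ as the inner plaintext of $\Sigma$, so that $\ctxt = \enc_\pk(x^*;\rnd)$ and the recovery guarantee yields $x^*$ with probability $1$), and the same realisation of $U^{(2)}_\Enc$ by composing the type-2 operator for \TDF (built from its two type-1 operators using $\pk_f$ and $\sk_f$) with the type-2 encryption operator of $\Sigma$ granted by recoverability, which is exactly the circuit of Fig.~\ref{fig:patchisom}. Your conditioning argument for $(1-\alpha)$-correctness is simply a more detailed spelling-out of the paper's one-line observation that permuting the messages does not change the decryption failure probability.
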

\begin{proof}
	Partial correctness of the encryption scheme $\Gamma$ follows immediately from the partial correctness of $\Sigma$, as permuting the messages does not change the overall decryption failure probability.
	
	Assume, for sake of contradiction, that $\Gamma$ is recoverable.
	Then there exists an efficient algorithm \Rec that, on input $\pk$, $\rnd$, and $\enc_{\pk}(\mssg;\rnd)$, outputs \mssg. 
	We construct the following adversary \EveB against the trapdoor permutation $\Pi$.
	He receives a public key $\pk_{f}$ for the trapdoor permutation \TDF along with $y \gets \TDF_{\pk_{f}}(x)$ for a random $x$, and is asked to find $x$.
	\EveB computes $(\pk_{e},\sk_{e}) \getsr \kgen^{\Sigma}(\secpar)$, chooses $\rnd \getsr \rndsp$, computes $\ctxt \gets \enc^{\Sigma}_{\pk_{e}}(y;\rnd)$, 
	and uses $\pk = (\pk_{e},\pk_{f}),\rnd,\ctxt$ as an input to \Rec.
	By construction, we have $\ctxt = \enc^{\Sigma}_{\pk_{e}}(\TDF_{\pk_{f}}(x);\rnd) = \enc_\pk(x;\rnd)$, hence $\Rec$ outputs $x$.
	So \EveB can find the correct preimage with probability $1$, hence breaking the security of the trapdoor permutation. 
	This contradicts the recoverability of $\Gamma$.
	
	Finally, in Fig.~\ref{fig:patchisom} we show an efficient circuit for the realisation of $U^{(2)}_\Enc$. This uses subcircuits for computing the type-1 operator for the trapdoor permutation and its inverse (given the trapdoor permutation's public key and secret key), and type-1 encryption and recovery for the underlying PKE scheme.
	\qed
\end{proof}

\begin{figure}[htbp]
	\centering
	\scalebox{0.9}{
		\begin{tikzpicture}
			
			\node (MIDDLE) [text width=1em,align=center] {};
			\node (MIDDLETRAPDOOR) [text width=1em,align=center] [left of=MIDDLE, node distance = 3.0cm] {};
			\node (MIDDLEENCRYPTION) [text width=1em,align=center] [right of=MIDDLE, node distance = 3.0cm] {};
			
			\node (UF)[text width=1cm,align=center,fill=gray!25,minimum height=2cm,minimum width=1.5cm,draw] [left of=MIDDLETRAPDOOR, node distance = 1.25cm, yshift=-0.5cm] {\large $U_{\TDF}^{(1)}$};
			\node (UFinv)[text width=1cm,align=center,fill=gray!25,minimum height=2cm,minimum width=1.5cm,draw] [right of=MIDDLETRAPDOOR, node distance = 1.25cm, yshift=-0.5cm] {\large $U_{\TDFinv}^{(1)}$};
			
			\node (Uenc)[text width=1cm,align=center,fill=gray!25,minimum height=3.0cm,minimum width=1.5cm,draw] [left of=MIDDLEENCRYPTION, node distance = 1.25cm] {\large $U_{\Enc}^{(1)}$};
			\node (Urec)[text width=1cm,align=center,fill=gray!25,minimum height=3.0cm,minimum width=1.5cm,draw] [right of=MIDDLEENCRYPTION, node distance = 1.25cm] {\large $U_{\Rec}^{(1)}$};
			
			\node (IN1) [text width=0.5cm,align=center] [left of=MIDDLETRAPDOOR, yshift=1.0cm, node distance=3.5cm] {\large \ket{\rnd}};
			\node (IN2) [text width=0.5cm,align=center] [left of=MIDDLETRAPDOOR, yshift=0.0cm, node distance=3.5cm] {\large \ket{\mssg}};
			\node (IN3) [text width=0.4cm,align=center] [left of=MIDDLETRAPDOOR, yshift=-1.0cm, node distance=2.5cm] {\large \ket{0}};
			
			\node (OUT1) [text width=0.5cm,align=center] [right of=MIDDLEENCRYPTION, yshift=1.0cm, node distance=3.5cm] {\large \ket{\rnd}};
			\node (OUT2) [text width=0.5cm,align=center] [right of=MIDDLEENCRYPTION, yshift=0.0cm, node distance=3.5cm] {\large \ket{\ctxt}};
			\node (OUT3) [text width=0.4cm,align=center] [right of=MIDDLEENCRYPTION, yshift=-1.0cm, node distance=2.5cm] {\large \ket{0}};
			
			\node (OUTTRAPDOOR) [text width=0.4cm,align=center] [right of=MIDDLETRAPDOOR, yshift=-1.0cm, node distance=2.5cm] {\large \ket{0}};
			\node (INENCRYPTION) [text width=0.4cm,align=center] [left of=MIDDLEENCRYPTION, yshift=-1.0cm, node distance=2.5cm] {\large \ket{0}};
			
			\draw[] ($(IN1.east) + (0.0,0.0)$) -- ($(Uenc.west) + (0.0,1.0)$);
			\draw[] ($(Uenc.east) + (0.0,1.0)$) -- ($(Urec.west) + (0.0,1.0)$);
			\draw[] ($(Urec.east) + (0.0,1.0)$) -- ($(OUT1.west) + (0.0,0.0)$);
			
			\draw[] ($(IN2.east) + (0.0,0.0)$) -- ($(UF.west) + (0.0,0.5)$);
			\draw[] ($(IN3.east) + (0.0,0.0)$) -- ($(UF.west) + (0.0,-0.5)$);
			
			\draw ($(UF.east) + (0.0,0.5)$) to[out=0,in=-180] ($(UFinv.west) + (0.0,-0.5)$);
			\draw ($(UF.east) + (0.0,-0.5)$) to[out=0,in=-180] ($(UFinv.west) + (0.0,0.5)$);
			
			\draw[] ($(UFinv.east) + (0.0,0.5)$) -- ($(Uenc.west) + (0.0,0.0)$);
			\draw[] ($(UFinv.east) + (0.0,-0.5)$) -- ($(OUTTRAPDOOR.west) + (0.0,0.0)$);
			\draw[] ($(INENCRYPTION.east) + (0.0,0.0)$) -- ($(Uenc.west) + (0.0,-1.0)$);
			
			\draw ($(Uenc.east) + (0.0,0.0)$) to[out=0,in=-180] ($(Urec.west) + (0.0,-1.0)$);
			\draw ($(Uenc.east) + (0.0,-1.0)$) to[out=0,in=-180] ($(Urec.west) + (0.0,0.0)$);
			
			\draw[] ($(Urec.east) + (0.0,0.0)$) -- ($(OUT2.west) + (0.0,0.0)$);
			\draw[] ($(Urec.east) + (0.0,-1.0)$) -- ($(OUT3.west) + (0.0,0.0)$);
			
			\begin{scope}[on background layer]        
				\node (type2Box) [text width=3em,rounded corners=1ex,minimum height=3.75cm,minimum width=11.75cm,draw, densely dashed,line width = 0.3mm] at ($(MIDDLE)$) {};
			\end{scope}
		\end{tikzpicture}
	}
	\caption{Efficient realisation of the canonical type-2 encryption operator for the construction shown in Fig.~\ref{fig:patchedEncryptionScheme}.}
	\label{fig:patchisom}
\end{figure}
\begin{remark}
	Note that, albeit the above construction works at a theoretical level, there are currently no known candidates for quantum-secure trapdoor permutations.
	Alternatively, a quantum-secure \emph{injective trapdoor function} could be used instead, for which candidates exist.
	In this case, because of the inherent expansion factor, the message space for the transformed scheme will be smaller than the one in the original PKE scheme.
\end{remark}

\section{Future Directions}\label{sec:conclusions}

In this work we have filled the existing gap between the symmetric key and the public key case when defining security in the QS2 setting.
We showed how the existence of this gap was not due to a mere lack of interest, but because of non-trivial definitional issues that we solved.
We believe that our results provide useful guidelines in the security analysis of quantum-resistant PKE, but many research directions remain open to exploration.

In Section~\ref{sec:qINDqCCA} we sketch a general strategy for extending our results to the chosen ciphertext case. 
Although we believe that such a strategy works, we leave it as an open problem to formalise it correctly. 
We also leave it as an open problem to improve our game-based definitions to different provable security paradigms such as simulation-based.

We notice how our notions of \qINDqCPA for PKE can be also used to study the security of cryptographic primitives that `extend' PKE with extra functionalities. 
Such primitives include, for example, fully homomorphic encryption~\cite{STOC:Gentry09, C:BroJef15}, identity-based encryption~\cite{C:Zhandry12}, and functional encryption~\cite{TCC:BonSahWat11}.

We did not found any natural example of a scheme that is isometric, yet not efficiently so.
A simple idea would be to modify the construction from Fig.~\ref{fig:patchedEncryptionScheme} in such a way that the circuit provided in Fig.~\ref{fig:patchisom} becomes non-efficient (for example by using a hard to invert permutation instead of a trapdoor permutation).
This idea does not work for two reasons.
First, it would only show that \emph{this} particular construction of the type-2 operator is inefficient, while we would need to show that \emph{any} construction is.
Second, and more importantly, switching to a hard to invert permutation would make the decryption algorithm inefficient.
Hence the resulting scheme would no longer be a PKE scheme according to Definition~\ref{def:PKEscheme}.

Also, notice the following: given that \qINDqCPA is a stronger notion than \INDqCPA, having a PKE scheme where it is not even possible to define a type-2 encryption operator can actually be {\em desirable}. 
For such a scheme in fact, one should not worry about proving the (stricter) \qINDqCPA security notion, because the related attack scenario is simply not enforceable, and hence the scheme cannot be broken in a \qINDqCPA sense. 
So it would be interesting to find schemes which are \INDqCPA secure but non-isometric. 
We conjecture that a generic transformation to obtain such schemes is possible assuming the existence of quantum-secure {\em indistinguishability obfuscation}, but leave the problem open to further study.

We have also left unstudied the possibility of extending QS2 security notions to the use of type-$\pi$ operators, and to models where the adversary can query oracles on superpositions of public keys.

\subsection*{Acknowledgements}

The authors are very grateful to the anonymous reviewers for spotting a flaw in a previous version of this manuscript. 
The authors also thank Cecilia Boschini and Marc Fischlin for helpful discussions regarding the correctness of public key encryption schemes and Andreas Hülsing for general discussions on the content of this work.
TG acknowledges support by the EU H2020 Project FENTEC (Grant Agreement \#780108). 
JK and PS acknowledge funding by the Deutsche Forschungsgemeinschaft (DFG) – SFB 1119 – 236615297.

\addcontentsline{toc}{section}{References}
\bibliographystyle{abbrv}
\bibliography{cryptobib/abbrev3,cryptobib/crypto,local_bib}

\appendix
\section{Additional Preliminaries} \label{app:additionalBackground}

\subsection{\INDqCPA Security of Public Key Encryption Schemes} \label{app:additionalBackgroundPKE}

The security game for \INDqCPA security~\cite{C:BonZha13} of public key encryption schemes is defined as follows.
We note that this notion is equivalent to the standard QS1 security notion for public key encryption schemes.
\begin{exper}\label{exp:PKEINDqCPA}
	The {\em \INDqCPA experiment} $\gamefont{IND\mbox{-}qCPA}(\Sigma, \Eve, \secpar)$ for a PKE scheme $\Sigma=(\KGen,\Enc,\Dec)$ is defined as follows:
	\begin{algorithmic}[1]
		\State \chal runs $(\pk,\sk) \from \KGen$
		\State $\Eve(\pk) \to (\mssg[0],\mssg[1], \sigma_{\varfont{state}})$
		\State \chal receives $\mssg[0], \mssg[1]$ and does the following:
		\begin{itemize}
			\item[-] flips $b \fromunif \bit$
			\item[-] samples $\rnd \fromunif \rndsp$
			\item[-] computes $\Enc_\pk(\mssg[b];\rnd) \to \ctxt$
			\item[-] sends \ctxt to \Eve
		\end{itemize}
		\State $\Eve(\sigma_{\varfont{state}},\ctxt) \to b' \in \bit$
		\State \textbf{if} $b = b'$ \textbf{then return} \win; \textbf{else return} \rej.
	\end{algorithmic}
\end{exper}
Security is defined as negligible advantage over guessing in winning the security game.
\begin{definition}[\INDqCPA, PKE]\label{def:PKEINDqCPA}
	A PKE scheme $\Sigma$ has {\em ciphertext indistinguishability under quantum chosen plaintext attack}, or it is \INDqCPA-secure, iff for any QPT adversary \Eve it holds:
	\begin{align*}
		\left| \Pr\left[ \gamefont{IND\mbox{-}qCPA}(\Sigma, \Eve, \secpar) \to \win \right]- \frac{1}{2}\right| \leq \negl(\secpar)\,.
	\end{align*}
\end{definition}

\subsection{Symmetric Key Encryption} \label{app:additionalBackgroundSKE}

Below we define symmetric key encryption (SKE) schemes.
\begin{definition}
	A symmetric key encryption \emph{(SKE)} scheme $\Sigma$ is a tuple of three efficient algorithms $(\KGen,\Enc,\Dec)$ such that:
	\begin{itemize}
		\item $\KGen \colon \NN \rightarrow \keysp$ is the (randomised) encryption algorithm which takes a security parameter \secpar as input, and returns a key \key.
		
		\item $\Enc \colon \keysp \times \mssgsp \rightarrow \ctxtsp$ is the (randomised) encryption algorithm which takes a key \key and a message \mssg as input, and returns a ciphertext \ctxt.
		
		\item $\Dec \colon \keysp \times \ctxtsp \rightarrow \mssgsp$ is the decryption algorithm which takes as input a key \key and a ciphertext \ctxt, and returns a message \mssg.
	\end{itemize}
	By \keysp, \mssgsp, and \ctxtsp, we denote the key space, message space, and ciphertext space, respectively.
\end{definition}
Next, we define the security game for \qIND security, following~\cite{C:GagHulSch16}.
\begin{exper}\label{exp:SKEqINDqCPA}
	The {\em \qIND experiment} $\gamefont{qIND}(\Sigma, \Eve, \secpar)$ for an SKE scheme $\Sigma=(\KGen,\Enc,\Dec)$ is defined as follows:
	\begin{algorithmic}[1]
		\State \chal runs $\key \getsr \KGen(\secpar)$ and implements $O^{(2)}_\Enc$
		\State $\Eve() \to (\phi_0, \phi_1, \sigma_{\varfont{state}})$
		\State \chal receives $\phi_0, \phi_1$ and does the following: 
		\begin{itemize}
			\item[-] flips $b \fromunif \bit$
			\item[-] traces out $\phi_{1-b}$
			\item[-] calls $\psi \from O^{(2)}_\Enc(\phi_b)$
			\item[-] sends $\psi$ to \Eve
		\end{itemize}
		\State $\Eve(\sigma_{\varfont{state}},\psi) \to b' \in \bit$
		\State \textbf{if} $b = b'$ \textbf{then return} \win; \textbf{else return} \rej.
	\end{algorithmic}
\end{exper}
Just as for our new security notion, security is defined as negligible advantage over guessing in winning the game.
\begin{definition}[\qIND, SKE]\label{def:SKEqIND}
	An SKE scheme $\Sigma$ has {\em quantum ciphertext indistinguishability}, or it is \qIND-secure, iff for any QPT adversary \Eve it holds:
	\begin{align*}
		\left| \Pr\left[ \gamefont{qIND}(\Sigma, \Eve, \secpar) \to \win \right]- \frac{1}{2}\right| \leq \negl(\secpar)\,.
	\end{align*}
\end{definition}

\section{The Role of Randomness Superposition}\label{app:randsuperposition}

In this section we discuss the possibility of having superposition of randomness in the type-2 challenge query.
So far, we have only considered the case of classical randomness, as this is chosen by the (honest) challenger.
But one could consider scenarios where the adversary can somehow trick the challenger into using a superposition of 
randomness in the challenge query.
Here we discuss two possible ways to deal with this issue, one of which turns out to be unachievable while the other 
yields a notion equivalent to the one we propose in Section~\ref{sec:QuantumSecurityForPKE}. 

Assume that the challenger chooses a superposition of randomness to encrypt one of the messages chosen by the adversary.
Following our security experiment, the challenger would keep the randomness register and merely send the ciphertext 
register to the adversary.
The crucial observation is that the registers containing the randomness and the ciphertext are now entangled.
As observed in~\cite{C:GagHulSch16}, withholding the randomness register is equivalent to measuring 
it from the point of view of the adversary.
This means that this approach would in fact be equivalent to our security notion using a classical randomness.

Alternatively, to prevent the aforementioned issue of entanglement between the challenger and the adversary, we might let the challenger send the randomness register to the adversary.
However, the resulting security notion is unachievable as it would allow the adversary to always distinguish 
encryptions.
We illustrate this with the following attack.
First, the adversary chooses two distinct classical messages \mssg[0], \mssg[1], and executes the \qIND challenge query with these two.
Upon receiving the ciphertext register and the randomness register, the adversary evaluates (locally) the type-1 encryption operator initialising the input register with \ket{\mssg[0]}, the randomness register with the randomness state received from the challenger, and the ancilla register with the received ciphertext.
Finally, the adversary measures the ciphertext register output of the type-1 encryption operator: if he measures $0$, 
then he outputs $b=0$, otherwise outputs $b=1$.
The circuit is depicted in Fig.~\ref{fig:superpositionRandomnessAttack}.
The attack works because, if $b=0$, then the adversary will compute the same ciphertext as the challenger, hence the 
output register of the type-1 encryption will be $\ket{0}$; on the other hand, if $b=1$, a random value will be observed 
instead.
Clearly, this results in output states that the adversary can distinguish with overwhelming probability.

\begin{figure}[htbp]
	\centering
	\scalebox{\scaleFactorGraphics}{
		\begin{tikzpicture}
			
			\node (MIDDLE) [text width=1em,align=center] {};

			\node (UencChallenger)[text width=1.0cm,align=center,fill=gray!25,minimum height=2.0cm,minimum width=1.0cm,draw] [above of=MIDDLE, xshift = -1.75cm, node distance = 1.25cm] {\large $U_{\Enc}^{(2)}$};
			
			\node (UencAdversary)[text width=1.0cm,align=center,fill=gray!25,minimum height=2.0cm,minimum width=1.0cm,draw] [below of=MIDDLE, xshift = 1.75cm, node distance = 1.25cm] {\large $U_{\Enc}^{(1)}$};
			
			\node (ChallengeMessages)[text width=2.0cm,align=center] [left of=UencAdversary, node distance = 6.5cm, yshift = 0.5cm] {\large $\ket{\mssg[0]},\ket{\mssg[1]}$};
			\draw[->] ($(ChallengeMessages.north) + (0.0,0.0)$) -- ($(ChallengeMessages.north) + (0.0,1.0)$);

			\node (INCHALLENGER1) [text width=0.75cm,align=left] [left of=UencChallenger, yshift=0.5cm, node distance=1.25cm] {\large \ket{\rnd}};
			\node (INCHALLENGER2) [text width=0.75cm,align=center] [left of=UencChallenger, yshift=-0.5cm, node distance=1.25cm] {\large \ket{\mssg[b]}};
			
			\node (OUTCHALLENGER1) [text width=0.75cm,align=center] [right of=UencChallenger, yshift=0.5cm, node distance=2.25cm] {\large \ket{\rnd}};
			\node (OUTCHALLENGER2) [text width=0.75cm,align=center] [right of=UencChallenger, yshift=-0.5cm, node distance=1.25cm] {\large \ket{\ctxt}};
			
			\node (INADVERSARY1) [text width=0.75cm,align=center] [left of=UencAdversary, yshift=0.6cm, node distance=1.25cm] {};
			\node (INADVERSARY2) [text width=0.75cm,align=center] [left of=UencAdversary, yshift=0.0cm, node distance=1.25cm] {\large \ket{\mssg[0]}};
			\node (INADVERSARY3) [text width=0.75cm,align=center] [left of=UencAdversary, yshift=-0.6cm, node distance=2.25cm] {};
			
			\node (OUTADVERSARY1) [text width=0.75cm,align=left] [right of=UencAdversary, yshift=0.6cm, node distance=1.25cm] {\large \ket{\rnd}};
			\node (OUTADVERSARY2) [text width=0.75cm,align=center] [right of=UencAdversary, yshift=0.0cm, node distance=1.25cm] {\large \ket{\mssg[0]}};
			\node (OUTADVERSARY3) [text width=0.75cm,align=center] [right of=UencAdversary, yshift=-0.6cm, node distance=1.25cm] {\large \ket{\ctxt \xor \enc_{\pk}(\mssg[0];\rnd)}};
			
			\draw[] ($(INCHALLENGER1.east) + (-0.3,0.0)$) -- ($(UencChallenger.west) + (0.0,0.5)$);
			\draw[] ($(INCHALLENGER2.east) + (-0.1,0.0)$) -- ($(UencChallenger.west) + (0.0,-0.5)$);
			
			\draw[] ($(UencChallenger.east) + (0.0,0.5)$) -- ($(OUTCHALLENGER1.west) + (0.2,0.0)$);
			\draw[] ($(UencChallenger.east) + (0.0,-0.5)$) -- ($(OUTCHALLENGER2.west) + (0.2,0.0)$);
			
			\draw[] ($(OUTCHALLENGER1.south) + (0.0,0.0)$) -- ($(INADVERSARY1) + (0.0,0.0)$);
			\draw[] ($(OUTCHALLENGER2.south) + (0.0,0.0)$) -- ($(INADVERSARY3) + (0.0,0.0)$);
			
			\draw[] ($(INADVERSARY1) + (0.0,0.0)$) -- ($(UencAdversary.west) + (0.0,0.6)$);
			\draw[] ($(INADVERSARY2.east) + (-0.1,0.0)$) -- ($(UencAdversary.west) + (0.0,0.0)$);
			\draw[] ($(INADVERSARY3) + (0.0,0.0)$) -- ($(UencAdversary.west) + (0.0,-0.6)$);
			
			\draw[] ($(UencAdversary.east) + (0.0,0.6)$) -- ($(OUTADVERSARY1.west) + (0.1,0.0)$);
			\draw[] ($(UencAdversary.east) + (0.0,0.0)$) -- ($(OUTADVERSARY2.west) + (0.1,0.0)$);
			\draw[] ($(UencAdversary.east) + (0.0,-0.6)$) -- ($(OUTADVERSARY3.west) + (0.1,0.0)$);

			\draw[densely dashed, line width = 0.3mm] ($(MIDDLE) + (-6.0,0.0)$) -- ($(MIDDLE) + (6.0,0.0)$);
			\node (CHALLENGER)[text width=2.0cm,align=center,] [above of=MIDDLE, node distance = 2.0cm, xshift = -5.0cm] {\large Challenger};
			\node (ADVERSARY)[text width=2.0cm,align=center,] [below of=MIDDLE, node distance = 2.0cm, xshift = -5.0cm] {\large Adversary};
		\end{tikzpicture}
	}
	\caption{Generic attack against superposition of randomness.}
	\label{fig:superpositionRandomnessAttack}
\end{figure}

\section{Concurrent Work} \label{app:concurrentWork}

In concurrent and independent work, Chevalier et al.~\cite{eprint:ChevalierEV20} and Carstens et al.~\cite{eprint:CETU20} propose alternative QS2 security notions for public and symmetric key encryption schemes.
There are important, conceptual differences between these works and ours which we illustrate in this section.

Chevalier et al. start by resuming a game-based quantum indistinguishability notion previously introduced by Mossayebi and Schack~\cite{MossayebiS16} which, we conjecture, is not comparable to ours.
This notion is based on a real-or-permuted approach: in the security game, the adversary sends a {\em single} quantum plaintext of the form $\sum_x \alpha_x \ket{x}$ and (depending on the value of the secret challenge bit $b$) receives back either $\sum_x \alpha_x \ket{x, \Enc(x)}$, or $\sum_x \alpha_x \ket{x, \Enc(\pi(x))}$, where $\pi$ is a random permutation implemented by the challenger. 
To avoid confusion with our notion (\qINDqCPA), we refer to their notion as \piqINDqCPA.
Consider the canonical \INDCPA symmetric key encryption scheme that works by XOR-ing the message with $\PRF_{\key}(\rnd)$, where \PRF is a keyed pseudorandom function and \rnd is a freshly sampled randomness which is then attached to the resulting ciphertext. This scheme was previously known to be secure according to Boneh and Zhandry's \INDqCPA notion; however, in~\cite{MossayebiS16}, Mossayebi and Schack show that such scheme is not \piqINDqCPA secure,\footnote{The proof of the attack is only sketched, whereas it is formally given by Chevalier et al. in~\cite{eprint:ChevalierEV20}.} thereby yielding a separation result.

Starting from this consideration, the authors of~\cite{eprint:ChevalierEV20} develop a framework of new QS2 security notions (both for the symmetric and public key case) where the challenge query is quantum but implemented as a single message in the real-or-permuted setting.
This approach has advantages and disadvantages compared with the one in~\cite{C:GagHulSch16} (for the symmetric key case) and the one we adopt in this work (for the public key case):

\begin{itemize}
	\item The notion of \piqINDqCPA (and the related \CCA and non-malleability notions) only require the use of type-1 oracles, therefore greatly simplifying the modelling of the security game. Another advantage is that it can be defined for any encryption scheme, while we require {\em isometric schemes} (cf. Section~\ref{sec:classif}).
	
	\item On the other hand, the notion of Chevalier et al. (unlike ours) deviates from the established framework for the classical case. In the traditional setting of symmetric and public key security notions, in fact, it is well-known that many different characterisations of \INDCPA (with two or more messages chosen by the adversary, with one chosen and one random or fixed, etc.) are all equivalent to an intuitive (but more cumbersome) notion of semantic security. For \piqINDqCPA, however, it is {\em crucial} that the adversary can only send {\em one single challenge message} to the challenger.\footnote{Chevalier et al. prove the {\em composability} of their notions, but this refers to the fact that one can formulate their security game using multiple challenge queries, where each query is still restricted to a \emph{single} message.}
	This is not the case for our \qINDqCPA (and related) notions: although we do not write them down explicitly here (we leave them for a future update in the appendix of this manuscript), all these good `sanity checks' can be easily inferred by:
	\begin{itemize}
		\item The lifting from a two-message \qIND (QS2) challenge query to a two-message \QIND (QS3) challenge query as shown in~\cite{DBLP:phd/dnb/Gagliardoni17};
		\item The equivalence between different types of \QIND challenge query (two messages, many messages, real-or-random, etc.) as shown in~\cite{C:BroJef15};
		\item The equivalence of such \QIND notion to a sound notion of {\em quantum semantic security} as from~\cite{ICITS:ABFGSJ16}.
	\end{itemize}
	This means that our notions (in the public key case) and the ones in~\cite{C:GagHulSch16} (for the symmetric key case) closely mirror the well-established framework in the classical setting.
	
	\item Analogously, because of the presence of entanglement between plaintext and ciphertext registers, the notions by Chevalier et al. do not mirror the existing solid framework for {\em fully quantum notions} (QS3 setting) in the literature. This is not a flaw by itself, but it has the drawback that many useful tools cannot be straightforwardly `imported' from the QS3 setting. An example mentioned above is the difficulty of formalising the equivalence of \piqINDqCPA to a natural notion of quantum semantic security, or the possibility of easily lifting the QS2 security of a classical scheme $\Sigma$ to the QS3 security of a quantum scheme $\Pi$ that uses $\Sigma$ as a building block. Another example is the difficulty of defining quantum \CCAtwo security, which can be done in a relatively easy way in the QS3 setting with the real-vs-ideal approach by Alagic et al. from~\cite{EC:AlaGagMaj18}, while requiring the more involved compressed oracle technique by Zhandry~\cite{C:Zhandry19} for the results in~\cite{eprint:ChevalierEV20}.\footnote{The authors of~\cite{eprint:ChevalierEV20} also explain in their work why Alagic et al.'s approach would not work in their case.}
	
	\item Chevalier et al. expand substantially Mossayebi and Schack's results, answering many questions left previously open (some of which also mentioned in an early version of the present work) such as the security of the encrypt-then-MAC construction.
	Moreover they introduce a technique (based on Zhandry's compressed oracles) to record queries and simulate answers to inverse oracles which is of independent interest.
	
	\item It is important to notice that the separation result by Chevalier et al. and Mossayeby and Schack rely on entanglement between message and ciphertext register and not on a particular weakness in the scheme. 
	In contrast, our separation (and the one in~\cite{C:GagHulSch16}) relies solely on a property of the encryption scheme in question. 
	One has to consider how the ability of a quantum adversary of receiving back an entangled pair of message and ciphertext really mirrors the classical intuition, where an adversary would only receive a ciphertext instead.
	
	\item Finally, and most importantly, in the present work we show that for many real-world PKE schemes (including most of the NIST candidates) type-2 encryption operators can be implemented {\em without knowledge of the secret key}. 
	This invalidates Chevalier et al.'s argument that type-2 operators are unreasonable in the public key setting, and actually makes the need for our \qINDqCPA notion in the public key case stronger than ever.
\end{itemize}
Ultimately, we think that the contribution of Chevalier et al. is of great importance and their results are undoubtedly interesting.
It is important to notice that the canonical \INDCPA scheme used by Chevalier et al. and Mossayebi and Schack as a separation from Boneh and Zhandry's \INDqCPA is also shown to be insecure according to the \qINDqCPA security notion in~\cite{C:GagHulSch16}. We can hence see \piqINDqCPA as a QS2 security notion which is incomparable to the \qINDqCPA notion we present in this work, with advantages and disadvantages as explained above.

In a recent work, Carstens et al.~\cite{eprint:CETU20} study in detail the relationships between existing security notions for QS2 encryption.
Their work is mainly focused on SKE, and they prove certain separations based on (reasonable) conjectures.
In particular their work supports our conjecture that \qINDqCPA and \piqINDqCPA are incomparable also in the PKE case.

\end{document}